\newtheorem{pro}{Proposition}%[section]
\newtheorem{cor}{Corollary}%[section]
\newtheorem{teo}{Theorem}%[section]
\newtheorem{rem}{Remark}%[section]
\newtheorem{lem}{Lemma}%[section]
\def\a{\alpha}
\title[New quantum codes from evaluation and  matrix-product codes]{New quantum codes from evaluation and  matrix-product codes}
\author[Galindo, Hernando, Ruano]{{\small Carlos Galindo, Fernando Hernando}\\{\tiny Instituto
Universitario de Matem\'aticas y Aplicaciones de Castell\'on\\and
Departamento de Matem\'aticas\\ Universitat Jaume I, Campus de Riu
Sec. 12071 Castell\'{o} (Spain)} \vspace{1mm}\\
and {\small Diego Ruano}\\{\tiny Department of Mathematical Sciences, Aalborg University\\ Fredrik Bajers Vej 7G, 9220 Aalborg East (Denmark)}}
\email{{\rm Galindo:} galindo@uji.es; {\rm Hernando:} carrillf@uji.es; {\rm Ruano:} diego@math.aau.dk}
\date{}
\thanks{Supported by the Spanish Ministry of Economy:
grant MTM2012-36917-C03-03, the University Jaume I: grant PB1-1B2012-04 and the Danish Council for Independent Research, grant DFF-4002-00367.}
\keywords{Quantum codes;  Steane's enlargement;  Affine variety codes; Subfield-subcodes; Matrix-product codes}
\begin{document}

\begin{abstract}
Stabilizer codes  obtained via the CSS code construction and the Steane's enlargement of subfield-subcodes and matrix-product codes coming from generalized Reed-Muller, hyperbolic and affine variety codes are studied. Stabilizer codes with good quantum parameters are supplied, in particular, some binary codes of lengths $127$ and $128$ improve the parameters of the codes in  {\tt http://www.codetables.de}. Moreover, non-binary codes are presented either with parameters better than or equal to the quantum codes obtained from BCH codes by La Guardia or with lengths that cannot be reached by them.
\end{abstract}

%\begin{keywords}
%Quantum codes;  Steane's enlargement;  Affine Variety Codes; Subfield-Subcodes; Matrix-Product Codes.
%\end{keywords}}

\maketitle

%\IEEEdisplaynotcompsoctitleabstractindextext

%\IEEEpeerreviewmaketitle

\section{Introduction}
Quantum computers are based on the principles of quantum mechanics and use subatomic particles (qubits) to hold memory. The construction of efficient devices  of this type  would have important consequences as the breaking of some well-known cryptographical schemes \cite{22RBC}. Information on a recent attempt to built a quantum computer can be found in \cite{BCM, SS}.

Despite quantum mechanical systems are very sensitive to disturbances and arbitrary quantum states cannot be replicated, error correction is possible \cite{23RBC}. In this paper we are concerned with stabilizer codes which are a class of quantum error-correcting codes. Parameters of our codes will be expressed as $[[n,k,d]]_q$, where $q$ is a power $p^r$ of  a prime number $p$  and $r$ a positive integer. They mean that our codes are $q^k$-dimensional linear subspaces of  $\mathbb{C}^{q^n}$, $\mathbb{C}$ being the complex field, and $d$ its minimum distance, which determines detection and correction of errors. An stabilizer code is called to be pure to a positive integer $t$ whenever its stabilizer group does not contain non-scalar matrices with weight less than $t$ (see, for instance, \cite{18RBC, kkk} for details).

Stabilizer codes can be derived from classical ones with respect to Symplectic or Hermitian inner product \cite{19kkk,BE,AK,kkk}, although this can also be done with respect to Euclidean inner product by using the so-called CSS code construction \cite{20kkk,95kkk}. The following results \cite[Lemma 20 and Corollary 21]{kkk} show the parameters of the stabilizer codes that one gets by using the above mentioned code construction. The reader can consult \cite[Theorem 13]{kkk} to see how stabilizer codes are obtained from classical ones.

\begin{teo}
\label{antescoro}
Let $C_1$ and $C_2$ two linear error-correcting block codes with parameters $[n,k_1,d_1]$ and  $[n,k_2,d_2]$ over the field  $\mathbb{F}_{q}$ and such that $C_2^{\perp} \subset C_1$, where $C_2^{\perp}$ stands for the dual code of $C_2$.  Then, there exists an $[[n, k_1+k_2 -n,  d]]_q$ stabilizer code with minimum distance $$d = \min \left\{\mathrm{w}(c) | c \in (C_1 \setminus C_2^\perp) \cup (C_2 \setminus C_1^\perp) \right\},$$ which is pure to $\min \{d_1,d_2\}$, where $\mathrm{w}(c)$ denotes the weight of a word $c$.
\end{teo}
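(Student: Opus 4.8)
The plan is to deduce the statement from the general symplectic construction recalled in \cite[Theorem 13]{kkk}, which turns any $\mathbb{F}_q$-linear code $C\subseteq \mathbb{F}_q^{2n}$ that is self-orthogonal with respect to the symplectic form $\langle (a\mid b),(a'\mid b')\rangle_s = a\cdot b' - a'\cdot b$ into a stabilizer code: if $\dim C = n-k$ then one obtains an $[[n,k,d]]_q$ code with $d=\mathrm{swt}\bigl(C^{\perp_s}\setminus C\bigr)$ (for $k>0$), where $\mathrm{swt}(a\mid b)=|\{i:(a_i,b_i)\neq(0,0)\}|$ denotes the symplectic weight and $C^{\perp_s}$ the symplectic dual, and the code is pure to the minimum symplectic weight of $C$. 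First I would introduce the product code $C:=C_2^\perp\times C_1^\perp\subseteq\mathbb{F}_q^{2n}$. Using the hypothesis $C_2^\perp\subseteq C_1$ (equivalently $C_1^\perp\subseteq C_2$), self-orthogonality is immediate: for $a,a'\in C_2^\perp$ and $b,b'\in C_1^\perp$ one has $a\in C_1$ and $a'\in C_1$, so $a\cdot b'=a'\cdot b=0$ and hence $\langle(a\mid b),(a'\mid b')\rangle_s=0$. Since $\dim C=(n-k_2)+(n-k_1)=2n-k_1-k_2$, the construction yields a stabilizer code of dimension parameter $k=k_1+k_2-n$, and a direct computation of the symplectic dual of a product gives $C^{\perp_s}=C_1\times C_2$.

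Next I would compute the minimum distance $d=\mathrm{swt}(C^{\perp_s}\setminus C)$, assuming $k>0$ so that this set is nonempty (the case $k=0$ reduces to the pure situation $d=\mathrm{swt}(C^{\perp_s})$). An element $(a\mid b)\in C^{\perp_s}\setminus C$ satisfies $a\in C_1$, $b\in C_2$, and fails to lie in $C$, i.e.\ $a\notin C_2^\perp$ or $b\notin C_1^\perp$. If $a\notin C_2^\perp$ then $a\in C_1\setminus C_2^\perp$ and $\mathrm{swt}(a\mid b)\ge \mathrm{w}(a)$; if instead $b\notin C_1^\perp$ then $b\in C_2\setminus C_1^\perp$ and $\mathrm{swt}(a\mid b)\ge \mathrm{w}(b)$. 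In either case the symplectic weight is bounded below by the displayed minimum, which proves $d\ge \min\{\mathrm{w}(c)\mid c\in (C_1\setminus C_2^\perp)\cup(C_2\setminus C_1^\perp)\}$. For the reverse inequality I would exhibit the axis vectors: for $c\in C_1\setminus C_2^\perp$ the word $(c\mid 0)$ lies in $C^{\perp_s}\setminus C$ with $\mathrm{swt}(c\mid 0)=\mathrm{w}(c)$, and for $c\in C_2\setminus C_1^\perp$ the word $(0\mid c)$ does the same; hence $d$ is at most the stated minimum, giving equality.

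Finally I would establish purity. By the cited construction the code is pure to the minimum symplectic weight of $C=C_2^\perp\times C_1^\perp$, and since a product code attains its minimum symplectic weight on one of the two axes, this equals $\min\{d(C_2^\perp),d(C_1^\perp)\}$. Here the hypothesis is used a second time: because $C_2^\perp\subseteq C_1$ is a subcode, fewer nonzero words force $d(C_2^\perp)\ge d(C_1)=d_1$, and dually $C_1^\perp\subseteq C_2$ gives $d(C_1^\perp)\ge d(C_2)=d_2$; therefore the minimum symplectic weight of $C$ is at least $\min\{d_1,d_2\}$, so the code is pure to $\min\{d_1,d_2\}$. The step requiring the most care is the minimum-distance argument, specifically the matching of symplectic weights with ordinary Hamming weights and the verification that the mixed words (both coordinates nonzero) cannot undercut the bound coming from the axis words; the dimension count and the two subcode distance comparisons are then routine.
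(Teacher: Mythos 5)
Your proof is correct and follows essentially the same route as the source of this result: the paper gives no proof of its own, quoting the statement from \cite[Lemma 20 and Corollary 21]{kkk}, where the argument is exactly yours --- pass to the symplectically self-orthogonal code $C_2^\perp\times C_1^\perp$ of dimension $2n-k_1-k_2$, identify its symplectic dual as $C_1\times C_2$, and match symplectic weights of elements of $(C_1\times C_2)\setminus(C_2^\perp\times C_1^\perp)$ against Hamming weights of the axis words $(c\mid 0)$, $(0\mid c)$. You have correctly handled the two genuinely delicate points, namely that mixed words $(a\mid b)$ with both halves nonzero cannot undercut the axis minimum (since $\mathrm{swt}(a\mid b)\geq\max\{\mathrm{w}(a),\mathrm{w}(b)\}$ and at least one of $a\notin C_2^\perp$, $b\notin C_1^\perp$ must hold), and that purity to $\min\{d_1,d_2\}$ follows from the subcode inclusions $C_2^\perp\subseteq C_1$ and $C_1^\perp\subseteq C_2$, which force the minimum symplectic weight of $C_2^\perp\times C_1^\perp$ to be at least $\min\{d_1,d_2\}$.
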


\begin{cor} \label{bueno}
Let $C$ be a linear $[n,k,d]$ error-correcting block code over  $\mathbb{F}_{q}$ such that  $C^\perp \subset  C$. Then, there exists an $[[n, 2k -n, \geq d]]_q$ stabilizer code which is pure to $d$.
\end{cor}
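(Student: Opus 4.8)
The plan is to derive Corollary \ref{bueno} as a direct specialization of Theorem \ref{antescoro}. The key observation is that the corollary concerns a single code $C$, whereas the theorem takes two codes $C_1$ and $C_2$; so the first step is to set $C_1 = C_2 = C$, with common parameters $[n,k,d]$, and check that the hypotheses of the theorem are satisfied. The theorem requires $C_2^\perp \subset C_1$, which in our specialization reads $C^\perp \subset C$. This is precisely the standing hypothesis of the corollary, so the theorem applies verbatim.

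Next I would read off the resulting parameters. The theorem produces an $[[n, k_1 + k_2 - n, d]]_q$ code; substituting $k_1 = k_2 = k$ gives dimension $2k - n$, matching the claimed $[[n, 2k-n, \cdot]]_q$. For the purity, the theorem guarantees purity to $\min\{d_1,d_2\} = \min\{d,d\} = d$, which is exactly the asserted purity to $d$.

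The only genuine point requiring an argument is the minimum distance, where the corollary claims $\geq d$ rather than an exact value. Here I would invoke the distance formula from the theorem, namely
\begin{equation*}
d' = \min \left\{ \mathrm{w}(c) \mid c \in (C_1 \setminus C_2^\perp) \cup (C_2 \setminus C_1^\perp) \right\}.
\end{equation*}
With $C_1 = C_2 = C$ this becomes the minimum weight over $c \in C \setminus C^\perp$ (the union collapses since both sets coincide). Every such $c$ is a nonzero codeword of $C$, so its weight is at least the minimum distance $d$ of $C$; hence $d' \geq d$. This yields the inequality in the statement. I expect this step to be essentially the entire content of the proof, and the mild subtlety is simply recognizing that restricting the minimization to the subset $C \setminus C^\perp$ of nonzero codewords can only raise the minimum weight above $d$, never lower it, so no codeword of weight smaller than $d$ can appear.

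I anticipate no real obstacle: the argument is a straightforward specialization, and the main thing to be careful about is bookkeeping, confirming that the hypothesis $C^\perp \subset C$ is exactly what makes the inclusion $C_2^\perp \subset C_1$ hold after identifying the two codes.
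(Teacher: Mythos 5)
Your proposal is correct and is exactly the intended derivation: the paper gives no proof of Corollary \ref{bueno}, presenting it (together with Theorem \ref{antescoro}) as quoted from \cite{kkk}, where it follows as the specialization $C_1=C_2=C$ precisely as you argue. Your handling of the distance, noting that the union collapses to $C\setminus C^\perp$ whose elements are nonzero codewords of $C$ and hence have weight at least $d$, is the right (and only) substantive step.
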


Note that we use the symbol $\subset$ to indicate subset, in particular $C \subset C$ holds.  Next, we state the  Hamada's generalization of the Steane's enlargement procedure \cite{Steane2} because it will be used in the paper. Given two suitable codes $C$ and $C'$, the code obtained by applying this procedure will be called their Steane's enlargement and denoted by $\mathrm{SE}(C,C')$.

\begin{cor} \label{ham} \cite{ham}
Let $C$ be an $[n,k]$ linear code over the field $\mathbb{F}_q$ such that $C^\perp \subset C$. Assume that $C$ can be enlarged to an $[n,k']$ linear code $C'$, where $k' \geq k + 2$. Then, there exists a stabilizer code with parameters $[[n,k+k'-n, d \geq \min \{d', \lceil \frac{q+1}{q} d'' \rceil\} ]]_q$, where $d' = \mathrm{w} (C \setminus C'^\perp)$, $d'' = \mathrm{w} (C' \setminus C'^\perp)$ and  $\mathrm{w}$ denotes the minimum weight of the words of a set.
\end{cor}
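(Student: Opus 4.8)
The plan is to exhibit the enlarged code directly as a stabilizer code through the symplectic construction underlying \cite[Theorem 13]{kkk}: an $\mathbb{F}_q$-linear subspace $\mathcal{S}\subseteq\mathbb{F}_q^{2n}$ that is self-orthogonal for the symplectic form $\langle (a\mid b),(a'\mid b')\rangle=a\cdot b'-a'\cdot b$ and has dimension $2n-(k+k')$ produces an $[[n,k+k'-n,d]]_q$ code whose distance is the least symplectic weight $\mathrm{w}_s(a\mid b)=|\{i:(a_i,b_i)\neq(0,0)\}|$ of a vector in $\mathcal{S}^{\perp}\setminus\mathcal{S}$ (with purity governed by the weights in $\mathcal{S}^{\perp}$). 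First I would fix a primitive element $\gamma$ of $\mathbb{F}_q$ and a generator matrix of $C'$ extending one of $C$, so as to realise the chain $C'^{\perp}\subseteq C^{\perp}\subseteq C\subseteq C'$, and then assemble $\mathcal{S}$ from two copies of $C'^{\perp}$, one in each half, together with a twisted block of the shape $(g\mid\gamma g)$ where $g$ runs over a complement of $C'^{\perp}$ inside $C^{\perp}$; the twist by $\gamma$ is what encodes the enlargement and, together with the hypothesis $k'\geq k+2$, is meant to break the degeneracies that would otherwise spoil the distance.

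Two of the three steps are then routine. Self-orthogonality of $\mathcal{S}$ reduces, after expanding the form on the three types of generators, to inner products of pairs of vectors lying in $C^{\perp}$, and these vanish because $C^{\perp}\subseteq C=(C^{\perp})^{\perp}$ makes $C^{\perp}$ self-orthogonal; the scalar $\gamma$ factors harmlessly out of the twisted pairings. Counting the generators gives $\dim_{\mathbb{F}_q}\mathcal{S}=(n-k')+(n-k')+(k'-k)=2n-k-k'$, so the quantum dimension is $n-\dim\mathcal{S}=k+k'-n$, exactly as claimed.

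The whole difficulty is concentrated in the distance estimate, and the engine I would use is a projective averaging identity. Fixing $(a\mid b)\in\mathcal{S}^{\perp}\setminus\mathcal{S}$ and letting $[\lambda:\mu]$ range over the $q+1$ points of $\mathbb{P}^1(\mathbb{F}_q)$, each coordinate $i$ with $(a_i,b_i)\neq(0,0)$ is annihilated by exactly one of these combinations, so it is nonzero in exactly $q$ of them, whence
\[
q\,\mathrm{w}_s(a\mid b)=\sum_{[\lambda:\mu]\in\mathbb{P}^1(\mathbb{F}_q)}\mathrm{w}(\lambda a+\mu b).
\]
If every combination $\lambda a+\mu b$ has weight at least $d''$ this forces $\mathrm{w}_s(a\mid b)\geq\frac{q+1}{q}d''$, hence $\geq\lceil\frac{q+1}{q}d''\rceil$, while a single combination lying in $C\setminus C'^{\perp}$ already yields $\mathrm{w}_s(a\mid b)\geq d'$ since its support is contained in $\mathrm{supp}(a)\cup\mathrm{supp}(b)$. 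The main obstacle, and the place where the twist by $\gamma$ and the bound $k'\geq k+2$ must really be used, is to prove the clean dichotomy behind this: that every logical operator either has one projective combination in $C\setminus C'^{\perp}$ (and hence all the others in $C$, giving the $d'$ regime) or has all $q+1$ combinations in $C'\setminus C'^{\perp}$ (giving the $\frac{q+1}{q}d''$ regime), so that no vector can slip into the intermediate situation where a single combination falls into the low-weight space $C'^{\perp}$. Once this case analysis is secured, combining the two estimates gives $d\geq\min\{d',\lceil\frac{q+1}{q}d''\rceil\}$ together with the stated purity; alternatively, since this is precisely Hamada's theorem, one may simply invoke \cite{ham}.
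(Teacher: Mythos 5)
The paper itself contains no proof of Corollary \ref{ham}: it is imported verbatim from Hamada \cite{ham} (generalizing Steane \cite{Steane2}), so your closing option of ``simply invoking \cite{ham}'' is exactly what the authors do, and as a proof attempt it is circular. Judged as an actual proof, your sketch gets the routine parts right --- the symplectic self-orthogonality of $\mathcal{S}$, the dimension count $2(n-k')+(k'-k)=2n-k-k'$, the averaging identity $q\,\mathrm{w}_s(a\mid b)=\sum_{[\lambda:\mu]\in\mathbb{P}^1(\mathbb{F}_q)}\mathrm{w}(\lambda a+\mu b)$, and the derivation of the two weight regimes \emph{assuming} the dichotomy --- but the dichotomy itself, which you correctly identify as carrying the whole content of the theorem, is left unproven. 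That is already a genuine gap rather than a compressed detail.

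Worse, with your specific choice of twist the dichotomy is provably false, so the gap cannot be closed along the route you set up. With generators $(h_1\mid 0)$, $(0\mid h_2)$, $h_i\in C'^\perp$, and $(g\mid\gamma g)$ for $g$ in a complement $D$ of $C'^\perp$ in $C^\perp$, one computes $\mathcal{S}^{\perp_s}=\{(a\mid b)\,:\,a,b\in C',\ \gamma a-b\in C\}$. Hence for any $a\in C'\setminus C$ the vector $(a\mid\gamma a)$ lies in $\mathcal{S}^{\perp_s}$; it is not in $\mathcal{S}$, since every element of $\mathcal{S}$ has first half in $C^\perp\subset C$; and its symplectic weight is exactly $\mathrm{w}(a)$. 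Its projective combinations are the scalar multiples $(\lambda+\mu\gamma)a$, one of which is zero and none of which need meet $C\setminus C'^\perp$, so this logical operator escapes both regimes and forces $d\leq \mathrm{w}(C'\setminus C)$, which is typically $d''<\lceil\frac{q+1}{q}d''\rceil$ (note $\lceil\frac{q+1}{q}d''\rceil\geq d''+1$ whenever $d''\geq 1$). Conceptually, a scalar twist is invisible: the weight-preserving substitution $(a\mid b)\mapsto(a\mid\gamma^{-1}b)$ turns your $\mathcal{S}$ into the untwisted CSS-type code, so multiplication by $\gamma$ cannot encode the enlargement. The Steane--Hamada construction instead twists the $(k'-k)$-dimensional complement block by an invertible matrix $A$ over $\mathbb{F}_q$ whose characteristic polynomial has no root in $\mathbb{F}_q$ (so $g\mapsto gA$ fixes no projective class); such an $A$ exists precisely because $k'-k\geq 2$, and this is the only place the hypothesis $k'\geq k+2$ enters --- tellingly, your sketch never actually uses that hypothesis, which is the structural symptom of the flaw.
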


In this paper, we provide new families of algebraically generated stabilizer codes derived essentially from the Euclidean inner product and containing a number of codes with good parameters. In fact we are able to  improve some of the binary quantum  codes given in \cite{codet}. Moreover we supply stabilizer codes with parameters better than or equal to those given in \cite[Table III]{lag3} and \cite{lag1}, together with others whose lengths cannot be reached in \cite{lag3,lag1} but  exceed the Gilbert-Varshamov bounds \cite{eck, mat, feng}, \cite[Lemma 31]{kkk}, or improve those in \cite{edel} or satisfy both conditions.

Our stabilizer codes are supported in three families of linear codes: the so-called  (generalized) Reed-Muller codes \cite{52hlp, 15hlp}, hyperbolic (or hyperbolic cascaded Reed-Solomon) codes \cite{FR,kaba,massey,saint}, and affine variety codes \cite{FL, Geil-Affine}.  Stabilizer codes obtained with Reed-Muller codes have been studied in \cite{Steane} and \cite{Sarvepalli}. These families allow us to get nested sequences of codes that contain their dual ones and determine stabilizer codes by applying the CSS code construction. We enlarge this set of stabilizer codes by considering suitable matrix-product codes, which were introduced in \cite{bn} (see also \cite{8hlr, hlr}). Together with this construction, we also  consider  subfield-subcodes of our codes, which allows us to get codes over small fields from codes defined over larger ones, always within the same characteristic. We complement the mentioned techniques with  the Steane's enlargement (Corollary \ref{ham}).

Tables with parameters of our codes are distributed along the paper and testify their goodness. As mentioned, several of them improve the parameters available in the literature. These tables are presented as a complement of the different procedures described for obtaining our families of stabilizer codes. Reed-Muller and hyperbolic codes have the advantage that all their parameters are known and, as a consequence, we are able to obtain, with a simple calculation, parameters for the corresponding stabilizer codes. Affine variety codes give a broader spectrum of codes and their dimensions and lengths can be computed. Unfortunately, there is no known general formula for their distances and it seems a very difficult problem to obtain it. In this paper, we have computed them by using the computational algebra system Magma \cite{magma}.

We finish this introduction with a description of each section in the paper. Section \ref{matrix} reviews the concept of matrix-product code and recalls two useful results, Theorem \ref{fer1} and Corollary \ref{coro2}. It is worthwhile to mention that using non-singular by columns matrices is a key point
for obtaining matrix-product codes with good parameters that provide stabilizer codes by Corollary \ref{bueno}. Our supporting families of codes are introduced in Section \ref{families}. Parameters for them and conditions for self-orthogonality are the main facts we state there. In particular, Theorem \ref{te: ConditionSelfOrthogonal} has interest in classical coding theory as well, since it characterizes self-orthogonal hyperbolic codes. Subfield-subcodes of the mentioned families are also an important tool for constructing our stabilizer codes and we devote Section \ref{subfield} to give our main results in this line. The main theoretical results in this paper can be found in Section \ref{quantumm}. Indeed, Theorems \ref{eseldoce} and \ref{UNO} together with Remark \ref{U} give parameters for the stabilizer codes obtained from the previous constructions. The remaining sections of the paper show tables with quantum parameters of codes obtained as we have described.  For certain small sizes, there are no non-singular by columns orthogonal matrices  over the fields $\mathbb{F}_2$ and $\mathbb{F}_3$. To avoid this difficulty, Theorems \ref{te: QMPCsobreF2} and \ref{te: QMPCsobreF3} in Sections \ref{efe2} and \ref{efe3} look for clever matrices that allow us to get self-orthogonal matrix-product codes. Codes over $\mathbb{F}_3$, improving some ones in \cite{lag1}, are treated in Section \ref{efe3}  while Section \ref{efe2} is devoted to binary ones, where, together with codes derived from Theorem  \ref{te: QMPCsobreF2}, in Table \ref{ta:best} we show stabilizer codes of lengths $127$ and $128$ improving \cite{codet}. The last section in the paper, Section \ref{no2ni3}, contains a number of  stabilizer codes over the fields $\mathbb{F}_4$, $\mathbb{F}_5$ and $\mathbb{F}_7$. In Table \ref{dd} and comparing with \cite[Table III]{lag3}, the reader will find a code improving that table, another one with a new distance and some more with the same parameters; in fact our codes can reproduce most parameters in the mentioned Table III. In addition Table \ref{dd} shows stabilizer codes either improving \cite{edel} or exceeding the Gilbert-Varshamov bounds and with lengths that cannot be reached in \cite{lag3,lag1}.

\section{Matrix-product codes}  \label{matrix}
Along this paper, $p$ is a prime number, $q=p^r$ a positive integer power of $p$ and $\mathbb{F}_q$ the finite field with $q$ elements.
 Let $C_1,C_2, \ldots, C_s$ be a family of $s$ codes of length $m$ over $\mathbb{F}_q$ and $A = (a_{ij})$ an $s \times l$ matrix with entries in $\mathbb{F}_q$. Then, the {\it matrix-product code} \cite{bn}, given by the above data and denoted $[C_1,C_2, \ldots, C_s] \cdot A$, is defined as the code over $\mathbb{F}_q$  of length $ml$ whose generator matrix is
\begin{equation}
\label{lamatriz}
\left(
  \begin{array}{ccccc}
a_{11}G_1 & a_{12}G_1 & \cdots & a_{1l}G_1 \\a_{21}G_2 & a_{22}G_2 & \cdots  & a_{2l}G_2 \\
\vdots & \vdots & \cdots & \vdots \\
a_{s1}G_s & a_{s2}G_s & \cdots & a_{sl}G_s \\
\end{array}
\right),
\end{equation}
where $G_i$, $1 \leq i \leq s$, is a generator matrix for the code $C_i$.

Given a matrix $A$ as above, let $A_t$ be the matrix consisting of the first $t$ rows of $A$. For $1\leq j_1<  j_2 < \cdots <
j_t\leq l$, we denote by $A(j_1, j_2,\ldots,j_t)$ the $t\times t$ matrix consisting of the columns $j_1, j_2,\ldots,j_t$ of $A_t$. A {\it non-singular by columns matrix} over $\mathbb{F}_q$ is a matrix $A$ satisfying that every sub-matrix $A(j_1,j_2, \ldots,j_t)$ of $A$, $1 \leq t \leq s$, is non-singular \cite{bn}.  Some of the codes in this paper are based on matrix-product codes, whose parameters are described in the following result.

\begin{teo} \label{fer1}
\cite{hlr, 8hlr} The matrix-product code $[C_1,C_2, \ldots, C_s] \cdot A$ given by a sequence of $[m,k_i,d_i]$-linear codes $C_i$ over $\mathbb{F}_q$ and a full-rank matrix $A$ is a linear code whose length is $ml$, it has dimension $\sum_{i=1}^s k_i$ and minimum distance larger than or equal to
\[
\delta  =  \min_{1 \leq i \leq s} \{d_i \delta_i\},
\]
where $\delta_i$ is the minimum distance of the code on $\mathbb{F}_q^{l}$ generated by the first $i$ rows of the matrix $A$. Moreover, when the matrix $A$ is non-singular by columns, it holds that $\delta_i=l+1-i$. Furthermore, if we assume that the codes $C_i$ form a nested sequence $C_1 \supset C_2 \supset \cdots \supset C_s$, then the minimum distance of the code $[C_1,C_2, \ldots, C_s] \cdot A$ is exactly $\delta$.
\end{teo}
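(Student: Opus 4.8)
The plan is to work with the description of a codeword of $[C_1,\ldots,C_s]\cdot A$ as a matrix product. Encoding messages $u_i\in\mathbb{F}_q^{k_i}$ and setting $x_i=u_iG_i\in C_i$, the generator matrix \eqref{lamatriz} produces the word whose $j$-th length-$m$ block is $\mathbf{d}_j=\sum_{i=1}^s a_{ij}x_i$; equivalently, stacking the $x_i$ as rows of an $s\times m$ matrix $X$, the codeword is $A^{\mathrm{T}}X$ read row by row. The length $ml$ is then immediate. For the dimension, I would note that $A$ being full rank (so $s\le l$) makes $A^{\mathrm{T}}\colon \mathbb{F}_q^s\to\mathbb{F}_q^l$ injective, so $A^{\mathrm{T}}X=0$ forces $X=0$, hence all $x_i=0$ and all $u_i=0$; thus the encoding is injective and the dimension is $\sum_{i=1}^s k_i$.

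For the distance bound I would introduce, for each coordinate $\ell\in\{1,\ldots,m\}$, the vector $w^{(\ell)}=\bigl((x_1)_\ell,\ldots,(x_s)_\ell\bigr)A\in\mathbb{F}_q^l$, which is the codeword of the code $\cR_s$ generated by the rows of $A$ attached to the message $((x_1)_\ell,\ldots,(x_s)_\ell)$. Since $(\mathbf{d}_j)_\ell=(w^{(\ell)})_j$, reorganizing the count of nonzero entries gives the key identity $\wt(c)=\sum_{\ell=1}^m \wt\bigl(w^{(\ell)}\bigr)$. Now fix a nonzero codeword, let $t$ be the largest index with $x_t\neq 0$, and let $S$ be the support of $x_t$, so $|S|=\wt(x_t)\ge d_t$. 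For $\ell\in S$ the message $((x_1)_\ell,\ldots,(x_t)_\ell,0,\ldots,0)$ has nonzero $t$-th entry, hence (using that the first $t$ rows of $A$ are independent) $w^{(\ell)}$ is a nonzero codeword of the code generated by those rows, giving $\wt(w^{(\ell)})\ge\delta_t$. Summing only over $\ell\in S$ yields $\wt(c)\ge |S|\,\delta_t\ge d_t\delta_t\ge\delta$, the asserted lower bound.

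Next I would prove that when $A$ is non-singular by columns the code $\cR_i$ (length $l$, dimension $i$) is MDS. A nonzero codeword $vA_i$ vanishing on $l-w$ positions would, if $l-w\ge i$, force $v$ to annihilate some $i\times i$ submatrix $A(j_1,\ldots,j_i)$; non-singularity by columns makes that submatrix invertible, forcing $v=0$, a contradiction. Hence $w\ge l-i+1$, and together with the Singleton bound this gives $\delta_i=l+1-i$.

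Finally, for nested codes $C_1\supset\cdots\supset C_s$ I would prove the matching upper bound by an explicit construction. Choosing $i_0$ with $\delta=d_{i_0}\delta_{i_0}$, a minimum-weight word $x\in C_{i_0}$ (so $\wt(x)=d_{i_0}$), and a minimum-weight message $\lambda=(\lambda_1,\ldots,\lambda_{i_0})$ of the code generated by the first $i_0$ rows of $A$, I would set $x_i=\lambda_i x$ for $i\le i_0$ and $x_i=0$ otherwise; the nesting guarantees $x_i=\lambda_i x\in C_{i_0}\subset C_i$. Then $w^{(\ell)}=x_\ell\,(\lambda A_{i_0})$ is either $0$ or a scalar multiple of a weight-$\delta_{i_0}$ word, and it is nonzero for exactly the $d_{i_0}$ coordinates in the support of $x$, so the weight identity gives $\wt(c)=d_{i_0}\delta_{i_0}=\delta$. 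I expect the main obstacle to be the distance lower bound: the decisive idea is the reorganization $\wt(c)=\sum_\ell \wt(w^{(\ell)})$ combined with the choice of the top nonzero index $t$, which correctly couples the factor $d_t$ (from the support of $x_t$) with the factor $\delta_t$ (from the row code), while the nested hypothesis is precisely what allows the lower bound to be attained.
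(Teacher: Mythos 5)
Your proof is correct and complete: the weight identity $\wt(c)=\sum_{\ell=1}^m\wt\bigl(w^{(\ell)}\bigr)$, the choice of the largest index $t$ with $x_t\neq 0$ (which correctly couples $d_t$ with $\delta_t$), the MDS argument for $\delta_i=l+1-i$ via an invertible $i\times i$ submatrix, and the explicit codeword $x_i=\lambda_i x$ exploiting the nesting $C_{i_0}\subset C_i$ for the matching upper bound are all sound. Note that the paper states Theorem \ref{fer1} without proof, citing \cite{hlr, 8hlr}; your argument is essentially the standard one from those references, so there is nothing substantive to contrast.
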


More information about this  construction can be found in \cite{hlr, hr1, hr3, hr2}. Since we are interested in stabilizer codes obtained by applying the CSS code construction, the following results concerning duality will be of interest.

\begin{teo} \label{norton}
\cite{bn}
Assume that $\{C_1,C_2, \ldots, C_s\}$ is a family of linear codes of length $m$ and $A$ a non-singular $s \times s$ matrix, then the following equality of codes happens
\[
\left([C_1,C_2, \ldots, C_s] \cdot A \right)^\perp = [C_1^\perp, C_2^\perp, \ldots, C_s^\perp]\cdot \left(A^{-1} \right)^t,
\]
where, as usual, $B^t$  denotes  the transpose of the matrix $B$.
\end{teo}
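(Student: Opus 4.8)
The plan is to establish the claimed equality of codes by proving one inclusion directly and then matching dimensions to upgrade it to equality. First I would fix explicit descriptions of the codewords on both sides. Reading off the generator matrix \eqref{lamatriz} (here $l=s$), a generic codeword of $[C_1,C_2,\ldots,C_s]\cdot A$ is a block vector $x=(x_1,\ldots,x_s)$ whose blocks $x_j\in\mathbb{F}_q^m$ have the form $x_j=\sum_{i=1}^s a_{ij}c_i$ with $c_i\in C_i$. Writing $B=(A^{-1})^t$, so that $b_{ij}=(A^{-1})_{ji}$, a generic codeword of $[C_1^\perp,C_2^\perp,\ldots,C_s^\perp]\cdot B$ is $y=(y_1,\ldots,y_s)$ with $y_j=\sum_{k=1}^s b_{kj}d_k$ and $d_k\in C_k^\perp$.

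The key step is the inner-product computation establishing $[C_1^\perp,\ldots,C_s^\perp]\cdot B\subseteq\left([C_1,\ldots,C_s]\cdot A\right)^\perp$. Expanding the Euclidean product block by block and interchanging the order of summation yields
\[
\langle x,y\rangle=\sum_{j=1}^s\langle x_j,y_j\rangle=\sum_{i,k}\langle c_i,d_k\rangle\left(\sum_{j=1}^s a_{ij}b_{kj}\right).
\]
The inner factor is exactly the $(i,k)$ entry of $AB^t=AA^{-1}=\mathrm{Id}$, hence equals $\delta_{ik}$, so only the diagonal terms survive. Those diagonal terms are $\langle c_i,d_i\rangle$ with $c_i\in C_i$ and $d_i\in C_i^\perp$, which vanish by definition of the dual code; thus $\langle x,y\rangle=0$ for all such $x,y$. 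This computation is precisely the place where the choice $(A^{-1})^t$ is forced: it is the unique matrix making the cross terms $i\neq k$ collapse through $AA^{-1}=\mathrm{Id}$, while the self-orthogonality $C_i\perp C_i^\perp$ kills what remains.

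Finally I would close the argument by a dimension count using Theorem \ref{fer1}. Since $A$ is non-singular it is in particular full-rank, so $\dim\left([C_1,\ldots,C_s]\cdot A\right)=\sum_{i=1}^s k_i$ and the code has length $ms$; likewise $(A^{-1})^t$ is non-singular, hence full-rank, giving $\dim\left([C_1^\perp,\ldots,C_s^\perp]\cdot B\right)=\sum_{i=1}^s(m-k_i)=ms-\sum_{i=1}^s k_i$. Since $\dim\left([C_1,\ldots,C_s]\cdot A\right)^\perp=ms-\sum_{i=1}^s k_i$ coincides with this value, the inclusion proved above is an equality. I do not anticipate a deep obstacle here; the only real care needed is index bookkeeping so that the inner factor genuinely lands as $(AA^{-1})_{ik}$ rather than a transposed version of it, and verifying that the full-rank hypothesis of Theorem \ref{fer1} applies legitimately to both $A$ and $(A^{-1})^t$.
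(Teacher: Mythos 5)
Your proof is correct. The paper gives no proof of Theorem \ref{norton} at all --- it is quoted from \cite{bn} --- and your argument (the blockwise inner-product computation, interchanging sums so that the coefficient of $\langle c_i,d_k\rangle$ becomes $\bigl(AB^t\bigr)_{ik}=\bigl(AA^{-1}\bigr)_{ik}=\delta_{ik}$, followed by the dimension count via Theorem \ref{fer1} applied to the non-singular, hence full-rank, matrices $A$ and $(A^{-1})^t$) is precisely the standard duality argument from that reference, so there is nothing to flag.
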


%Theorem \ref{norton} shows that the CSS code construction can be applied to matrix-product codes if one uses orthogonal matrices over $\mathbb{F}_q$.

\begin{cor}
\label{coro2}
Let $A$ be an orthogonal $s \times s$ matrix (i.e., a matrix such that $\left(A^{-1} \right)^t=A$) and assume that for $i=1,2, \ldots,s$, it  holds that $C_i^{\perp}\subset C_i$ then
\[
\left([C_1,C_2, \ldots, C_s] \cdot A \right)^\perp \subset [C_1,C_2, \ldots, C_s] \cdot A.
\]
\end{cor}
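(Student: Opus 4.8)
The plan is to derive this as an almost immediate consequence of Theorem \ref{norton} (the Blackmore--Norton duality for matrix-product codes), combined with a simple monotonicity observation. First I would invoke Theorem \ref{norton} with the given family $\{C_1, C_2, \ldots, C_s\}$ and the non-singular $s \times s$ matrix $A$, which yields
\[
\left([C_1, C_2, \ldots, C_s] \cdot A\right)^\perp = [C_1^\perp, C_2^\perp, \ldots, C_s^\perp] \cdot \left(A^{-1}\right)^t.
\]
Here I should briefly note that orthogonality of $A$ is exactly the hypothesis that makes the transformed matrix coincide with $A$ itself: since $\left(A^{-1}\right)^t = A$ by assumption, the right-hand side becomes $[C_1^\perp, C_2^\perp, \ldots, C_s^\perp] \cdot A$. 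Thus the dual of our matrix-product code is again a matrix-product code built from the \emph{same} matrix $A$, but with the dual codes $C_i^\perp$ as constituents.

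The second step is to prove the monotonicity property: if $D_i \subset C_i$ for every $i$, then $[D_1, \ldots, D_s] \cdot A \subset [C_1, \ldots, C_s] \cdot A$. For this I would unwind the definition \eqref{lamatriz}: a generic codeword of $[C_1, \ldots, C_s] \cdot A$ is a linear combination of the rows of its generator matrix, so it can be written in the block form
\[
\left(\textstyle\sum_{i=1}^s a_{i1} c_i,\ \sum_{i=1}^s a_{i2} c_i,\ \ldots,\ \sum_{i=1}^s a_{il} c_i\right),
\]
where each $c_i = u_i G_i$ ranges over all of $C_i$ as $u_i$ ranges over the message space. A codeword of $[D_1, \ldots, D_s] \cdot A$ has exactly the same shape, but with $c_i \in D_i$; since $D_i \subset C_i$, each such word is already of the admissible form for $[C_1, \ldots, C_s] \cdot A$, giving the inclusion.

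Finally I would apply this monotonicity with $D_i = C_i^\perp$, which is legitimate precisely because the hypothesis $C_i^\perp \subset C_i$ holds for every $i = 1, 2, \ldots, s$. Chaining the two observations gives
\[
\left([C_1, \ldots, C_s] \cdot A\right)^\perp = [C_1^\perp, \ldots, C_s^\perp] \cdot A \subset [C_1, \ldots, C_s] \cdot A,
\]
which is the desired conclusion. I do not expect a serious obstacle here; the only point requiring care is the bookkeeping in the block-form description of codewords that underlies the monotonicity step, and the explicit verification that orthogonality of $A$ collapses $\left(A^{-1}\right)^t$ back to $A$ so that Theorem \ref{norton} produces a matrix-product code with the identical outer matrix.
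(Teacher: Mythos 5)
Your proof is correct and is exactly the argument the paper intends: Corollary \ref{coro2} is stated without an explicit proof precisely because it is the immediate combination of Theorem \ref{norton}, the orthogonality hypothesis $\left(A^{-1}\right)^t = A$ collapsing the outer matrix back to $A$, and the evident monotonicity of the matrix-product construction in its constituent codes applied with $D_i = C_i^\perp \subset C_i$. Your explicit block-form verification of the monotonicity step is the only detail the paper leaves tacit, and it is carried out correctly.
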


\section{Some families of codes and their dual ones}
\label{families}

Next, we introduce some known families of codes which we will use for our purposes.

\subsection{Reed-Muller codes} \label{reed-muller} Consider the ring of polynomials  $\mathbb{F}_{q}[X_1, X_2, \ldots, X_m]$ in $m$ variables over the field $\mathbb{F}_{q}$ and its ideal $I= \langle X_1^{q}-X_1, X_2^{q}-X_2,  \ldots, X_m^{q}-X_m \rangle$. Set $R =  \mathbb{F}_{q}[X_1, X_2, \ldots, X_m]/I$ the corresponding $\mathbb{F}_{q}$-algebra and write $Z(I)= \mathbb{F}_{q}^m =\{P_1, \ldots,P_n\}$, the set of zeroes in  $\mathbb{F}_{q}$ of the ideal $I$. We will use the evaluation map $\mathrm{ev}:  R \rightarrow \mathbb{F}_{q}^n$ defined by $\mathrm{ev}(f)=(f(P_1), \ldots,f(P_n))$ for classes of polynomials $f\in R$. It is well-known that $\mathrm{ev}$ is, in this case, an isomorphism of $\mathbb{F}_{q}$-vector spaces. For a positive integer $r$, the (generalized) {\it Reed-Muller} code of order $r$ on $\mathbb{F}_{q}[X_1, X_2, \ldots, X_m]$ (or the $(r,m)$ Reed-Muller code) is defined as $
RM(r,m)= \left\{ \mathrm{ev}(f) \;|\; f \in R, \; \deg(f) \leq r \right\}$, where $\deg$ means total degree. Notice that we always choose a canonical representative of $f$ without powers $X_i^j$, $j \geq q$ and the length of the codes is $n=q^m$. The following result summarizes known results for Reed-Muller codes (see \cite{HLP, Sarvepalli}, for instance).
\begin{teo}
\label{reed}
With the above notations, assume $0 \leq r < (q-1)m$ and by Euclidean division, set $(q-1)m - r = a (q-1) + b$, $a, b \geq 0$ and $b < q-1$. Then
\begin{enumerate}
\item  The dimension of the code $RM (r,m)$ is   $$\sum_{j=0}^m (-1)^j \binom {m} {j} \binom {m+r -jq} {r-jq}.$$
  \item The minimum distance of the code $RM(r,m)$ is $(b+1) q^a$.
  \item The dual $RM(r,m)^\perp$ of the code $RM(r,m)$ is the Reed-Muller code $RM(m(q-1)-(r+1),m)$.
\end{enumerate}
\end{teo}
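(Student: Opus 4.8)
The plan is to establish the three assertions separately, handling the dimension and the dual by direct combinatorics and reserving the real work for the minimum distance. Throughout I would use that $\mathrm{ev}$ is an isomorphism, so the evaluations of the canonical monomials $X_1^{e_1}\cdots X_m^{e_m}$ with $0\le e_i\le q-1$ form a basis of $\mathbb{F}_q^n$, and those of total degree $\le r$ form a basis of $RM(r,m)$. For (1), $\dim RM(r,m)$ is then the number of tuples $(e_1,\dots,e_m)$ with $0\le e_i\le q-1$ and $\sum_i e_i\le r$, which I would count by inclusion--exclusion: ignoring the upper bounds there are $\binom{m+r}{r}$ tuples with $\sum_i e_i\le r$, and for each choice of $j$ indices forced to satisfy $e_i\ge q$ the substitution $e_i\mapsto e_i-q$ leaves $\binom{m+r-jq}{r-jq}$ tuples; summing with signs $(-1)^j\binom{m}{j}$ gives the stated formula.

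For (3), I would use the power-sum identity $\sum_{\alpha\in\mathbb{F}_q}\alpha^{k}=-1$ when $k$ is a positive multiple of $q-1$ and $0$ otherwise. For reduced monomials $X^{\mathbf e},X^{\mathbf f}$ the Euclidean product of their evaluations factors as $\prod_{i=1}^{m}\bigl(\sum_{\alpha\in\mathbb{F}_q}\alpha^{e_i+f_i}\bigr)$, which is nonzero only if every $e_i+f_i$ is a positive multiple of $q-1$, forcing $\sum_i(e_i+f_i)\ge m(q-1)$. Hence if $\deg X^{\mathbf e}\le r$ and $\deg X^{\mathbf f}\le m(q-1)-(r+1)$ the product vanishes, so $RM(m(q-1)-(r+1),m)\subseteq RM(r,m)^{\perp}$. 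Equality then follows by a dimension count: the complementation map $\mathbf e\mapsto(q-1-e_1,\dots,q-1-e_m)$ bijects the reduced monomials of degree $\le r$ with those of degree $\ge m(q-1)-r$, whence $\dim RM(r,m)+\dim RM(m(q-1)-(r+1),m)=q^m=\dim RM(r,m)+\dim RM(r,m)^{\perp}$.

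Part (2) splits into an upper and a lower bound. For the upper bound I would exhibit an explicit word: choose integers $0\le d_i\le q-1$ with $\sum_i d_i=r$ minimizing $\prod_i(q-d_i)$, namely set as many $d_i$ as possible equal to $q-1$, one equal to the remainder, and the rest to $0$; writing $r=(q-1)(m-a)-b$ a short computation shows this greedy choice gives $\prod_i(q-d_i)=(b+1)q^a$. Taking $f=\prod_{i=1}^{m}\prod_{j=1}^{d_i}(X_i-\alpha_{ij})$ with the $\alpha_{ij}$ distinct, $f$ has degree $r$ and $\mathrm{ev}(f)$ is nonzero at exactly $\prod_i(q-d_i)=(b+1)q^a$ points, so $d\le(b+1)q^a$.

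The lower bound is the main obstacle. I would deduce it from the footprint bound (via the standard monomials of a Gr\"obner basis of the ideal $\langle f\rangle+I$): if a reduced $f$ has leading monomial $X_1^{d_1}\cdots X_m^{d_m}$ for a degree-compatible order, then the number of points of $\mathbb{F}_q^m$ where $f$ does not vanish is at least $\prod_{i=1}^{m}(q-d_i)$. Since the order is graded, $\sum_i d_i=\deg f\le r$, so the weight of any nonzero $\mathrm{ev}(f)$ is at least the minimum of $\prod_i(q-d_i)$ over $0\le d_i\le q-1$ with $\sum_i d_i\le r$; the same greedy optimization as above shows this minimum is $(b+1)q^a$, matching the upper bound. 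Establishing (or citing) the footprint bound is where the substance lies; the classical alternative is an induction on $m$ that restricts $f$ to the $q$ parallel hyperplanes $X_m=\gamma$ and bounds the nonzero slices, which I expect to be more delicate to push through.
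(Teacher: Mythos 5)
Your proposal is correct, but it cannot be compared line-by-line with a proof in the paper, because the paper offers none: Theorem~\ref{reed} is stated there as a summary of known results, with a pointer to \cite{HLP, Sarvepalli} (the facts go back to Kasami--Lin--Peterson and Delsarte--Goethals--MacWilliams). What you have done is supply a self-contained argument, and it holds up. Part (1) via inclusion--exclusion on the tuples $(e_1,\dots,e_m)$ with $0\le e_i\le q-1$, $\sum e_i\le r$ is exactly right, including the convention that $\binom{m+r-jq}{r-jq}=0$ for $r<jq$. Part (3) is sound: the factorization $\mathrm{ev}(X^{\mathbf e})\cdot\mathrm{ev}(X^{\mathbf f})=\prod_i\bigl(\sum_{\alpha\in\mathbb{F}_q}\alpha^{e_i+f_i}\bigr)$ together with the power-sum identity (note the factor with $e_i+f_i=0$ also vanishes, since it equals $q=0$ in $\mathbb{F}_q$, which only strengthens the inclusion) gives $RM(m(q-1)-(r+1),m)\subseteq RM(r,m)^\perp$, and your complementation bijection $\mathbf e\mapsto(q-1-e_1,\dots,q-1-e_m)$ clinches equality by dimensions; this is in the same spirit as the paper's own Proposition~\ref{dual:affine}, which handles the analogous duality for affine variety codes by the identical power-sum computation. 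For part (2) your upper bound via $f=\prod_i\prod_{j=1}^{d_i}(X_i-\alpha_{ij})$ and the lower bound via the footprint are both correct; two small points should be made explicit rather than waved at. First, the claim that the greedy configuration minimizes $\prod_i(q-d_i)$ subject to $\sum_i d_i=r$ needs the standard exchange argument: if $u=q-d_i\ge v=q-d_j$ with the move admissible, then $(u+1)(v-1)=uv-(u-v)-1<uv$, so pushing mass to extremes strictly decreases the product. Second, the footprint bound itself (a nonzero reduced $f$ with graded leading monomial $X^{\mathbf d}$ has at most $q^m-\prod_i(q-d_i)$ zeros in $\mathbb{F}_q^m$, since the reduced monomials divisible by $X^{\mathbf d}$ lie outside the footprint of $\langle f\rangle+I$) should be cited or proved; it is standard (Fitzgerald--Lax, Geil--H{\o}holdt), and citing it puts you on exactly the same footing as the paper, which leans on \cite{gh} for the corresponding facts about hyperbolic codes. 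In short: your route buys a complete proof where the paper only records the statement, at the cost of importing one well-known lemma.
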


In order to consider matrix-product codes with Reed-Muller ones, suitable for the CSS code construction, we will consider a positive integer $r$ such that
$2 r + 1 \leq m(q-1)$.
Now, writing $E=RM(r,m)$ and $C=RM(m(q-1)-(r+1),m)$, the equality $E=C^\perp$ happens. From the above equality, the code inclusion $C^\perp \subset C$ holds and setting $r+1= c(q-1) +e$ by Euclidean division, one gets that the  minimum distance of the code $C$ is $d=(e+1) q^{c}$ and its dimension is
\begin{equation}
\label{DIM}
k= \sum_{j=0}^m (-1)^j \binom {m} {j} \binom {(m-j)q-(r+1)} {m(q-j-1)-(r+1)}.
\end{equation}

\subsection{Hyperbolic codes} \label{hyperbolic}
The first part of this section is based  on \cite{gh}. Consider the same $\mathbb{F}_q$-algebra $R$ defined in the previous subsection and fix a positive integer $t$, $0 \leq t \leq q^m$. Define the linear code, $\Xi(t,m)$, on $\mathbb{F}_q^n$ generated by the vectors obtained by applying the map $\mathrm{ev}$ to the set of monomials:
\begin{equation}
\label{HHyp}
\mathfrak{M} := \left\{ X_1^{\alpha_1}  X_2^{\alpha_2} \cdots X_m^{\alpha_m} \; | \; 0 \leq \alpha_i < q, 1 \leq i \leq m \;  \mbox{ and } \; \prod_{i=1}^m (\alpha_i + 1) < q^m - t \right\}.
\end{equation}

The {\it $t$-th hyperbolic code} (on  $\mathbb{F}_{q}[X_1, X_2, \ldots, X_m]$), Hyp$(t,m)$ is, by definition, the dual code of the code $\Xi(t,m)$ above given.  Therefore, the length of the codes is again $n=q^m$.  For simplicity's sake, we set $X^{\boldsymbol{\alpha}}$ instead of $X_1^{\alpha_1}  X_2^{\alpha_2}\cdots X_m^{\alpha_m} $ for an element $\boldsymbol{\alpha} = (\alpha_1, \alpha_2, \ldots, \alpha_m)$ in $\mathbb{Z}^m$, $\alpha_i \geq 0$. For each element $\boldsymbol{\alpha}$ of this type, we define
\[
D_{\boldsymbol{\alpha}}  = \left\{
 X^{\boldsymbol{\beta}}\; | \; 0 \leq \beta_i < q, 1 \leq i \leq m \mbox{ and $X^{\boldsymbol{\beta}}$ is not divisible by $X^{\boldsymbol{\alpha}}$}
\right\}
\]
and $n_{\boldsymbol{\alpha}}  = \mathrm{card} D_{\boldsymbol{\alpha}}$. Without loss of generality, we can assume that the values $t$ used in the definition of hyperbolic codes are of the form $n_{\boldsymbol{\alpha}}$ for some $\boldsymbol{\alpha}$ such that $0 \leq \alpha_i < q$ for all $i$. This is assumed in the rest of the paper. To make clear the previous assumption, note that if one picks any positive integer $0 \leq s \leq q^m$, then there exists a positive integer $t \geq s$ of the form $t= n_{\boldsymbol{\alpha}}$, for some $\boldsymbol{\alpha}$ as above, such that $\mathrm{Hyp}(s,m) = \mathrm{Hyp}(t,m)$.

\begin{teo}
\label{elcinco} \cite{gh}
Consider the hyperbolic code $\mathrm{Hyp}(t,m)$ above defined, with $t= n_{\boldsymbol{\alpha}}$, for some $\boldsymbol{\alpha}$. Then,
\begin{enumerate}
  \item $\mathrm{Hyp}(t,m)$ is generated by the set of vectors in $\mathbb{F}_q^n$ obtained by applying $\mathrm{ev}$ to the set of monomials $X^{\boldsymbol{\alpha}}$ such that $n_{\boldsymbol{\alpha}}$ is less than or equal to $t$.
\item The minimum distance of $\mathrm{Hyp}(t,m)$ is $q^m -t$.

\item The dimension of $\mathrm{Hyp}(t,m)$ is $ q^m - \mathrm{card} \; \mathfrak{M}$.
  %\[
  %q^m - \mathrm{card} \left\{(\beta_1, \beta_2, \ldots, \beta_m) \in \mathbb{Z}^m \; | \; 1 \leq \beta_i \leq q, 1 \leq i \leq m, \prod_{i=1}^m \beta_i \leq q^m - (t+1) \right\}.
 % \]
\end{enumerate}

\end{teo}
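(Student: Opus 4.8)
The plan is to establish the three assertions in the order (3), (1), (2), since the dimension formula is immediate and the duality statement it supports drives the distance computation. For (3) I would invoke that $\mathrm{ev}\colon R\to\mathbb{F}_q^n$ is an isomorphism of $\mathbb{F}_q$-vector spaces and that the classes $X^{\boldsymbol{\beta}}$ with $0\le\beta_i<q$ form a basis of $R$. Hence the vectors $\mathrm{ev}(X^{\boldsymbol{\alpha}})$ with $X^{\boldsymbol{\alpha}}\in\mathfrak{M}$ are linearly independent, so $\dim\Xi(t,m)=\mathrm{card}\,\mathfrak{M}$ and therefore $\dim\mathrm{Hyp}(t,m)=\dim\Xi(t,m)^{\perp}=q^m-\mathrm{card}\,\mathfrak{M}$.

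The heart of the argument is (1). Writing $s(j)=\sum_{x\in\mathbb{F}_q}x^{j}$, which equals $-1$ when $j$ is a positive multiple of $q-1$ and $0$ otherwise throughout the range $0\le j\le 2q-2$, a direct expansion gives
\[
\langle \mathrm{ev}(X^{\boldsymbol{\alpha}}),\mathrm{ev}(X^{\boldsymbol{\beta}})\rangle=\prod_{i=1}^{m}s(\alpha_i+\beta_i),
\]
so this pairing is nonzero exactly when, for every $i$, either $\beta_i=q-1-\alpha_i$ or $\alpha_i=\beta_i=q-1$. Next I would reinterpret the generating set in (1): since $n_{\boldsymbol{\beta}}=q^m-\prod_{i=1}^m(q-\beta_i)$, the condition $n_{\boldsymbol{\beta}}\le t$ is equivalent to $X^{\boldsymbol{\gamma}}\notin\mathfrak{M}$, where $\gamma_i=q-1-\beta_i$; thus the claimed code is spanned by the vectors $\mathrm{ev}(X^{\boldsymbol{\gamma}^{*}})$ with $X^{\boldsymbol{\gamma}}\notin\mathfrak{M}$ and $\boldsymbol{\gamma}^{*}=(q-1-\gamma_1,\ldots,q-1-\gamma_m)$. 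Now if $\langle \mathrm{ev}(X^{\boldsymbol{\alpha}}),\mathrm{ev}(X^{\boldsymbol{\gamma}^{*}})\rangle\neq 0$ for some $X^{\boldsymbol{\alpha}}\in\mathfrak{M}$, then checking the two alternatives above coordinate by coordinate forces $\gamma_i\le\alpha_i$ for all $i$, that is, $X^{\boldsymbol{\gamma}}$ divides $X^{\boldsymbol{\alpha}}$. As $\mathfrak{M}$ is closed under division (its defining inequality $\prod_i(\alpha_i+1)<q^m-t$ can only improve when exponents decrease), this would give $X^{\boldsymbol{\gamma}}\in\mathfrak{M}$, a contradiction. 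Hence every claimed generator is orthogonal to every generator of $\Xi(t,m)$, so the claimed code is contained in $\Xi(t,m)^{\perp}=\mathrm{Hyp}(t,m)$; since $\boldsymbol{\gamma}\mapsto\boldsymbol{\gamma}^{*}$ is a bijection, its dimension is $q^m-\mathrm{card}\,\mathfrak{M}=\dim\mathrm{Hyp}(t,m)$, forcing equality.

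For (2) I would pair a footprint lower bound with an explicit extremal codeword. Fix a monomial order. A nonzero $c=\mathrm{ev}(f)\in\mathrm{Hyp}(t,m)$ has $f$ a combination of the basis monomials of (1), so its leading monomial $X^{\boldsymbol{\alpha}}$ satisfies $n_{\boldsymbol{\alpha}}\le t$. Applying the footprint bound to $\langle X_1^q-X_1,\ldots,X_m^q-X_m,f\rangle$, whose leading-term ideal contains $X_1^q,\ldots,X_m^q$ and $X^{\boldsymbol{\alpha}}$, the number of zeros of $f$ in $\mathbb{F}_q^m$ is at most the number of standard monomials, which is bounded by $\mathrm{card}\,D_{\boldsymbol{\alpha}}=n_{\boldsymbol{\alpha}}\le t$; hence $\mathrm{w}(c)\ge q^m-t$. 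For the reverse inequality, using that $t=n_{\boldsymbol{\alpha}}$ for a fixed $\boldsymbol{\alpha}$, I would take $f=\prod_{i=1}^{m}\prod_{\theta\in S_i}(X_i-\theta)$ with $S_i\subseteq\mathbb{F}_q$ of cardinality $\alpha_i$: every monomial of $f$ divides $X^{\boldsymbol{\alpha}}$ and hence lies in the generating set of (1), so $\mathrm{ev}(f)\in\mathrm{Hyp}(t,m)$, while $f$ vanishes off exactly the $\prod_i(q-\alpha_i)=q^m-n_{\boldsymbol{\alpha}}=q^m-t$ points having no coordinate in the corresponding $S_i$. Thus $\mathrm{w}(\mathrm{ev}(f))=q^m-t$ and the minimum distance is $q^m-t$.

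I expect part (1) to be the main obstacle. The pairing is not diagonal in the monomial basis, because the coordinates with $\alpha_i=q-1$ contribute extra off-diagonal terms, so a naive anti-diagonal duality fails. The decisive observation is that those extra terms only ever place $\boldsymbol{\gamma}$ below $\boldsymbol{\alpha}$ in the divisibility order, so that the divisor-closedness of $\mathfrak{M}$ turns the nonvanishing of the pairing into a contradiction; carrying out this coordinatewise bookkeeping correctly is where the care is needed. In (2) the only substantial external input is the footprint bound, which I would cite from \cite{gh}.
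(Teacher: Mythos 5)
Your proof is correct, but note that the paper itself offers no proof of this theorem: it is quoted from the reference \cite{gh} (Geil--H{\o}holdt), so there is no internal argument to compare against. Your reconstruction is essentially the argument of that reference, and all the delicate points are handled properly. In particular, the pairing computation $\langle \mathrm{ev}(X^{\boldsymbol{\alpha}}),\mathrm{ev}(X^{\boldsymbol{\beta}})\rangle=\prod_i s(\alpha_i+\beta_i)$ with $s(j)=-1$ exactly when $j\in\{q-1,2q-2\}$ (and, crucially, $s(0)=q=0$ in $\mathbb{F}_q$) is right, and you correctly identified that the failure of naive anti-diagonal duality --- the off-diagonal contributions when $\alpha_i=\beta_i=q-1$ --- is absorbed by the observation that both alternatives force $\gamma_i\le\alpha_i$, so that divisor-closedness of $\mathfrak{M}$ (its defining product $\prod_i(\alpha_i+1)$ is monotone in the exponents) yields the contradiction; the dimension count via the bijection $\boldsymbol{\gamma}\mapsto\boldsymbol{\gamma}^{*}$ and part (3) then upgrades containment to equality. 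Your part (2) is also sound: the footprint bound applies because the support of any nonzero codeword's polynomial lies among the generators of (1), so its leading monomial $X^{\boldsymbol{\alpha}}$ has $n_{\boldsymbol{\alpha}}\le t$ and the standard monomials of $\langle X_1^q-X_1,\ldots,X_m^q-X_m,f\rangle$ sit inside $D_{\boldsymbol{\alpha}}$; and your extremal word $f=\prod_i\prod_{\theta\in S_i}(X_i-\theta)$ with $\mathrm{card}\,S_i=\alpha_i$ correctly exploits the paper's standing normalization $t=n_{\boldsymbol{\alpha}}$ (without which only the lower bound $\mathrm{w}\ge q^m-t$ would follow), since every monomial of $f$ divides $X^{\boldsymbol{\alpha}}$ and $f$ is nonzero at exactly $\prod_i(q-\alpha_i)=q^m-t$ points. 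One could streamline (1) slightly by observing that your computation shows $\Xi(t,m)^{\perp}$ contains the span of the stated monomial evaluations and then concluding by dimensions, which is exactly what you do; no gap remains.
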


Bearing Theorem \ref{elcinco}, it is not difficult to deduce that the code $\mathrm{Hyp}(t,m)$ is generated by those vectors obtained after applying ev to the set of monomials $X^{\boldsymbol{\alpha}}$, where $\boldsymbol{\alpha}$ runs over the set
\begin{equation}
\label{DosDos}
 \left\{\boldsymbol{\alpha} = (\alpha_1, \alpha_2, \ldots, \alpha_m) \in \mathbb{Z}^m \; | \; 0 \leq \alpha_i \leq q-1, 1 \leq i \leq m, \prod_{i=1}^m (q-\alpha_i) \ge q^m - t \right\}.
\end{equation}

One may consider that, given a designed minimum distance $q^m -t$, hyperbolic codes are defined by  set (\ref{HHyp}) for maximizing its dimension. It is worth to mention that the dual of a hyperbolic code is not a hyperbolic code. We desire  to know for which values $t$ the inclusion $(\mathrm{Hyp}(t,m))^{\perp}\subset \mathrm{Hyp}(t,m)$ holds. To decide it, we need  the following lemma.

\begin{lem}\label{le: t1<t2}
With the above notation, assume that $t_1\leq t_2$ then $\mathrm{Hyp}(t_1,m)\subset \mathrm{Hyp}(t_2,m)$. Moreover if $t_1 < t_2$, it holds the following equality of  minimum weights: $$\mathrm{w}\left(\mathrm{Hyp}(t_2,m)\setminus \mathrm{Hyp}(t_1,m)\right)=\mathrm{w}\left(\mathrm{Hyp}(t_2,m)\right).$$
\end{lem}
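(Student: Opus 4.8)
The plan is to work throughout with the monomial description of hyperbolic codes provided by Theorem \ref{elcinco} and display (\ref{DosDos}). For an exponent vector $\boldsymbol{\alpha}$ with $0\le\alpha_i\le q-1$ one has $n_{\boldsymbol{\alpha}}=q^m-\prod_{i=1}^m(q-\alpha_i)$, because the monomials $X^{\boldsymbol{\beta}}$ (with $0\le\beta_i<q$) divisible by $X^{\boldsymbol{\alpha}}$ are exactly those with $\alpha_i\le\beta_i<q$, and there are $\prod_{i=1}^m(q-\alpha_i)$ of them. Hence, writing $S(t)=\{\boldsymbol{\alpha} : 0\le\alpha_i\le q-1,\ n_{\boldsymbol{\alpha}}\le t\}$, Theorem \ref{elcinco} says that $\mathrm{Hyp}(t,m)$ is spanned by $\{\mathrm{ev}(X^{\boldsymbol{\alpha}}) : \boldsymbol{\alpha}\in S(t)\}$. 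The inclusion $\mathrm{Hyp}(t_1,m)\subset\mathrm{Hyp}(t_2,m)$ then follows at once: when $t_1\le t_2$ the condition $n_{\boldsymbol{\alpha}}\le t_1$ forces $n_{\boldsymbol{\alpha}}\le t_2$, so $S(t_1)\subseteq S(t_2)$ and the spanning set of the smaller code sits inside that of the larger one.

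For the statement on minimum weights I would first record the easy inequality. Since $t_1<t_2$ gives $\mathrm{Hyp}(t_1,m)\subset\mathrm{Hyp}(t_2,m)$, every word of $\mathrm{Hyp}(t_2,m)\setminus\mathrm{Hyp}(t_1,m)$ is a \emph{nonzero} word of $\mathrm{Hyp}(t_2,m)$ (the zero word lies in $\mathrm{Hyp}(t_1,m)$), so it has weight at least the minimum distance $q^m-t_2$ given by Theorem \ref{elcinco}. Thus $\mathrm{w}(\mathrm{Hyp}(t_2,m)\setminus\mathrm{Hyp}(t_1,m))\ge q^m-t_2=\mathrm{w}(\mathrm{Hyp}(t_2,m))$, and it remains only to exhibit a word of the set difference attaining weight $q^m-t_2$.

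To this end I would use the standing convention that $t_2=n_{\boldsymbol{\alpha}^\ast}$ for some $\boldsymbol{\alpha}^\ast$ with $0\le\alpha^\ast_i\le q-1$, so $\prod_{i=1}^m(q-\alpha^\ast_i)=q^m-t_2$. For each $i$ pick $\alpha^\ast_i$ distinct elements $a_{i1},\dots,a_{i\alpha^\ast_i}$ of $\mathbb{F}_q$ (possible since $\alpha^\ast_i\le q-1$) and set $f=\prod_{i=1}^m\prod_{j=1}^{\alpha^\ast_i}(X_i-a_{ij})$. As $\deg_{X_i}f=\alpha^\ast_i<q$, the polynomial $f$ is already in reduced form, its leading monomial is $X^{\boldsymbol{\alpha}^\ast}$, and every monomial $X^{\boldsymbol{\beta}}$ occurring in $f$ satisfies $\beta_i\le\alpha^\ast_i$, whence $\prod_{i=1}^m(q-\beta_i)\ge\prod_{i=1}^m(q-\alpha^\ast_i)=q^m-t_2$, i.e. $\boldsymbol{\beta}\in S(t_2)$; therefore $\mathrm{ev}(f)\in\mathrm{Hyp}(t_2,m)$. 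Its weight is exactly $\prod_{i=1}^m(q-\alpha^\ast_i)=q^m-t_2$, since $f(P)\ne0$ precisely when each coordinate $P_i$ avoids the $\alpha^\ast_i$ roots of the $i$-th factor.

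Finally I would verify $\mathrm{ev}(f)\notin\mathrm{Hyp}(t_1,m)$. Because $\mathrm{ev}$ is a vector-space isomorphism and the reduced monomials form a basis of $R$, membership of $\mathrm{ev}(f)$ in $\mathrm{Hyp}(t_1,m)$ would force $f$ into the span of $\{X^{\boldsymbol{\alpha}}:\boldsymbol{\alpha}\in S(t_1)\}$; but $f$ has nonzero coefficient on $X^{\boldsymbol{\alpha}^\ast}$, and $n_{\boldsymbol{\alpha}^\ast}=t_2>t_1$ gives $\boldsymbol{\alpha}^\ast\notin S(t_1)$, a contradiction. Hence $\mathrm{ev}(f)\in\mathrm{Hyp}(t_2,m)\setminus\mathrm{Hyp}(t_1,m)$ with weight $q^m-t_2$, which supplies the reverse inequality and the claimed equality. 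The genuinely delicate point is this last step: the lower bound merely forbids light words in the difference set, whereas the content of the lemma is that a \emph{minimum}-weight word can be chosen outside $\mathrm{Hyp}(t_1,m)$. The product-of-linear-forms construction, together with the leading-monomial and isomorphism argument, is what secures this, and it is where I would concentrate the write-up.
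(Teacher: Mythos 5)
Your proof is correct, but its second half takes a genuinely different (and longer) route than the paper's. The paper settles the weight statement in one line: since Theorem \ref{elcinco} gives the minimum distance of $\mathrm{Hyp}(t,m)$ \emph{exactly} as $q^m-t$, one has $\mathrm{w}(\mathrm{Hyp}(t_2,m))=q^m-t_2<q^m-t_1=\mathrm{w}(\mathrm{Hyp}(t_1,m))$, so any word of $\mathrm{Hyp}(t_2,m)$ attaining weight $q^m-t_2$ is too light to lie in $\mathrm{Hyp}(t_1,m)$ and hence automatically sits in the difference set; combined with your (and the paper's implicit) trivial lower bound, the equality follows with no construction at all. You instead re-prove the attainment part of Theorem \ref{elcinco}(2) by exhibiting the standard witness $f=\prod_{i}\prod_{j}(X_i-a_{ij})$ with exponent pattern $\boldsymbol{\alpha}^\ast$, $t_2=n_{\boldsymbol{\alpha}^\ast}$, and you exclude $\mathrm{ev}(f)$ from $\mathrm{Hyp}(t_1,m)$ not by a weight comparison but by the leading-monomial argument through the isomorphism $\mathrm{ev}$. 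All your steps check out: the formula $n_{\boldsymbol{\alpha}}=q^m-\prod_i(q-\alpha_i)$, the support bound $\beta_i\le\alpha^\ast_i\Rightarrow\prod_i(q-\beta_i)\ge q^m-t_2$ placing $\mathrm{ev}(f)$ in $\mathrm{Hyp}(t_2,m)$, the exact weight count, and the use of the standing convention that $t_2=n_{\boldsymbol{\alpha}^\ast}$. What your route buys is self-containedness and an explicit minimum-weight codeword in the difference: you use only the spanning description in Theorem \ref{elcinco}(1), the lower-bound direction of the distance for $\mathrm{Hyp}(t_2,m)$, and injectivity of $\mathrm{ev}$, never touching the distance of $\mathrm{Hyp}(t_1,m)$, so the argument would survive in settings where only a monomial spanning set is known. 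What it costs is duplicating work: once the exact distances of both codes are quoted from Theorem \ref{elcinco}, the ``genuinely delicate point'' you flag --- producing a minimum-weight word outside $\mathrm{Hyp}(t_1,m)$ --- evaporates, since \emph{every} minimum-weight word of $\mathrm{Hyp}(t_2,m)$ is already outside, which is exactly how the paper concludes.
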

\begin{proof}
The first assertion follows simply by taking into account the sets of monomials determined by the tuples in (\ref{DosDos}) and involved in the construction of both codes. For the second one, $\mathrm{w}(\mathrm{Hyp}(t_2,m))=q^m-t_2 < q^m-t_1=\mathrm{w}(\mathrm{Hyp}(t_1,m))$, what concludes the proof.
\end{proof}
\vspace{2mm}

The following result shows when $C^\perp \subset C$ for a hyperbolic code $C$.

\begin{teo}\label{te: ConditionSelfOrthogonal} Consider the hyperbolic code $\mathrm{Hyp}(t,m)$ above defined, with $t= n_{\boldsymbol{\alpha}}$, for some $\boldsymbol{\alpha}$. Then, the codes' inclusion $(\mathrm{Hyp}(t,m))^{\perp}\subset \mathrm{Hyp}(t,m)$ happens if, and only if, one of the following conditions hold.
\begin{enumerate}
  \item The integer $m$ is even and  $t\geq q^m-q^{\frac{m}{2}}$.
  \item Both, the integer $m$ and the cardinality of the base field $q$, are odd and $t$ satisfies   $t\geq q^m-q^{\frac{m-1}{2}}\frac{q+1}{2}$.
  \item The integer $m$ is odd, the cardinality of the base field $q$ is even and $t$ satisfies $t\geq q^m-q^{\frac{m-1}{2}}(\frac{q}{2}+1)$.
\end{enumerate}
 \end{teo}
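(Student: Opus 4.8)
The plan is to convert the dual-containment statement into a purely combinatorial inclusion between the exponent sets supporting the two codes, and then to reduce that inclusion to the computation of the minimum of a product functional.

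First I would exploit the definition of hyperbolic codes itself. Since $\mathrm{Hyp}(t,m)=\Xi(t,m)^{\perp}$ by definition, and duals of linear codes over a field are involutive, taking duals gives $\mathrm{Hyp}(t,m)^{\perp}=\Xi(t,m)$. Hence the inclusion $\mathrm{Hyp}(t,m)^{\perp}\subseteq\mathrm{Hyp}(t,m)$ is equivalent to $\Xi(t,m)\subseteq\mathrm{Hyp}(t,m)$. Both codes are spanned by evaluations of monomials: $\Xi(t,m)$ by $\{\mathrm{ev}(X^{\boldsymbol{\beta}}):\boldsymbol{\beta}\in M\}$ with $M=\{\boldsymbol{\beta}:\prod_i(\beta_i+1)<q^m-t\}$ coming from $\mathfrak{M}$ in (\ref{HHyp}), and $\mathrm{Hyp}(t,m)$ by $\{\mathrm{ev}(X^{\boldsymbol{\alpha}}):\boldsymbol{\alpha}\in H\}$ with $H=\{\boldsymbol{\alpha}:\prod_i(q-\alpha_i)\ge q^m-t\}$ coming from (\ref{DosDos}). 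Because $\mathrm{ev}$ is an $\mathbb{F}_q$-vector space isomorphism, the $q^m$ vectors $\mathrm{ev}(X^{\boldsymbol{\gamma}})$ with $0\le\gamma_i\le q-1$ form a basis of $\mathbb{F}_q^n$; for subsets of a fixed basis, span-containment is equivalent to set-containment. Therefore the theorem reduces to the clean statement $M\subseteq H$.

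Writing $s=q^m-t$ and substituting $a_i=\alpha_i+1$, $b_i=q+1-a_i=q-\alpha_i$ (so that $a_i,b_i\in\{1,\dots,q\}$ and $a_i+b_i=q+1$), and abbreviating $P=\prod_i a_i$, $Q=\prod_i b_i$, the inclusion $M\subseteq H$ says precisely that no tuple satisfies $P<s$ and $Q<s$ simultaneously. Equivalently, setting
\[
\mu=\min_{a_i\in\{1,\dots,q\}}\max(P,Q),
\]
the inclusion holds if and only if $s\le\mu$, i.e. $t\ge q^m-\mu$; and when $s>\mu$ a minimizing tuple gives $P<s$ and $Q<s$, hence an $\boldsymbol{\alpha}\in M\setminus H$, which delivers the ``only if'' direction. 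So the whole problem collapses to evaluating $\mu$, and the three cases will follow once I show $\mu=q^{m/2}$ for $m$ even, $\mu=q^{(m-1)/2}\tfrac{q+1}{2}$ for $m$ odd and $q$ odd, and $\mu=q^{(m-1)/2}(\tfrac{q}{2}+1)$ for $m$ odd and $q$ even.

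For the upper bound on $\mu$ I would exhibit balanced configurations: when $m$ is even, take half of the coordinates with $a_i=1$ (so $b_i=q$) and half with $a_i=q$ (so $b_i=1$), giving $P=Q=q^{m/2}$; when $m$ is odd, take $(m-1)/2$ coordinates of each extreme kind together with one ``middle'' coordinate $a_i=\lceil(q+1)/2\rceil$, yielding $\max(P,Q)=q^{(m-1)/2}\lceil(q+1)/2\rceil$, which matches cases (2) and (3). The hard part is the matching lower bound, and I expect the main obstacle to be that the naive estimate $\max(P,Q)\ge\sqrt{PQ}\ge q^{m/2}$ is too weak for odd $m$ (where no perfectly balanced orientation exists). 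To pin down $\mu$ exactly I would use a smoothing argument: call a coordinate extreme if $a_i\in\{1,q\}$, and suppose two coordinates $i,j$ with $a_i\ge a_j$ are both non-extreme (so $a_i\le q-1$ and $a_j\ge 2$). Replacing $(a_i,a_j)$ by $(a_i+1,a_j-1)$, a direct computation shows that $P$ and $Q$ both change by the same amount $a_j-a_i-1\le -1$, so both strictly decrease and hence $\max(P,Q)$ strictly decreases. Iterating, every minimizer may be assumed to have at most one non-extreme coordinate, over which $\mu$ is computed by the elementary finite optimization already used for the upper bound. Translating $s\le\mu$ back through $s=q^m-t$ then produces exactly the three stated thresholds.
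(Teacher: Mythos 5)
Your proof is correct, and its second half takes a genuinely different route from the paper's. The reduction of $(\mathrm{Hyp}(t,m))^{\perp}\subset\mathrm{Hyp}(t,m)$ to the inclusion of the exponent set (\ref{HHyp}) into the set (\ref{DosDos}) --- legitimate because both are subsets of the single monomial basis $\{X^{\boldsymbol{\gamma}}\,:\,0\le\gamma_i\le q-1\}$ and $\mathrm{ev}$ is an isomorphism, so span-containment of subsets of a basis is set-containment --- is exactly the paper's Condition (*), and the minimizing tuples your ``only if'' direction produces are the same witnesses the paper writes down explicitly (your $a_i\in\{1,q\}$ configurations are their $\alpha_i\in\{0,q-1\}$ tuples, shifted by one). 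Where you diverge is the hard ``if'' direction: the paper first reduces to the threshold value of $t$ via the monotonicity in Lemma \ref{le: t1<t2} and then argues geometrically in $\mathbb{R}^m$, using the conjugation map $\varphi$, hyperplanes invariant (or conjugate) under it, convexity of $H_1$ versus concavity of $H_2$, and, in case (3), two conjugate hyperplanes whose right-hand sides differ by one unit so that no rational point can slip between $H_1$ and $H_2$; you instead compute $\mu=\min\max(P,Q)$ exactly by a discrete exchange argument, forcing every minimizer to have at most one non-extreme coordinate and finishing with a finite optimization (and, as you note, plain AM--GM already suffices when $m$ is even). Your route is more elementary and more self-contained --- the paper leaves the convexity configuration largely to the reader --- while the paper's buys a geometric picture that makes the parity split in $m$ and $q$ visible at a glance. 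One slip to repair: under the move $(a_i,a_j)\mapsto(a_i+1,a_j-1)$ the two products do \emph{not} change ``by the same amount $a_j-a_i-1$''; writing $A=\prod_{k\neq i,j}a_k$ and $B=\prod_{k\neq i,j}(q+1-a_k)$, the changes are $A(a_j-a_i-1)$ for $P$ and $B(a_j-a_i-1)$ for $Q$ (since $b_i-b_j=a_j-a_i$), so they share the common negative factor $a_j-a_i-1\le -1$ but not the magnitude. Since $A,B\ge 1$, both products still strictly decrease, which is all your argument needs, so the gap is cosmetic rather than structural.
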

\begin{proof}
Firstly we are going to prove that, in each one of the three cases in the statement, the codes' inclusion $(\mathrm{Hyp}(t,m))^{\perp}\subset \mathrm{Hyp}(t,m)$ does not hold when the corresponding inequality does not happen. To do it, we will provide a $m$-tuple $\mathbf{a} = (a_1,a_2, \ldots, a_m)$ attached to a monomial in the set (\ref{HHyp}) which is not in the set (\ref{DosDos}).

With respect to the first case, assume that $m$ is even and $t < q^m-q^{\frac{m}{2}}$. Set $a_i=q-1$ for $i=1, 2, \ldots,m/2$ and $a_i = 0$ otherwise. Is clear that $\prod_{i=1}^m (a_i+1)=q^{m/2}<q^{m} - t $ so $X^{\mathbf{a}}$ is in the set (\ref{HHyp}). However $\prod_{i=1}^m (q-a_i)=q^{m/2} < q^{m} - t $. Therefore $\mathbf{a}$ is not in the set (\ref{DosDos}). Now, consider the second case and suppose that $m$ and $q$ are odd and $t < q^m-q^{\frac{m-1}{2}}\frac{q+1}{2}$. Then, the $m$-tuple $\mathbf{a}$ defined as $a_i=q-1$ for $i=1, 2, \ldots,(m-1)/2$, $a_m=(q-1)/2$ and $a_i=0$ otherwise satisfies the requirements. Finally, in our third case, the facts $m$ odd and $q$ even show that if $t < q^m-q^{\frac{m-1}{2}}(\frac{q}{2}+1)$, an $m$-tuple $\mathbf{a}$ satisfying the desired condition is defined by $a_i=q-1$ for $1 \leq i \leq (m-1)/2$, $a_m=q/2$ and $a_i=0$ otherwise, which concludes this part of the proof.

It remains to prove that, in each one of the previous cases, when $t$ is larger than or equal to the bounds above indicated,  the inclusion $(\mathrm{Hyp(t,m)})^{\perp}\subset \mathrm{Hyp}(t,m)$ holds. Before carrying on with the technical details, we notice that Lemma   \ref{le: t1<t2} proves that if $t_1 \leq t_2$, then $\mathrm{Hyp}(t_1,m)\subset \mathrm{Hyp}(t_2,m)$  and, moreover, $(\mathrm{Hyp}(t_1,m))^{\perp}\supset (\mathrm{Hyp}(t_2,m))^{\perp}$. Therefore it is enough to prove the remaining part of our theorem in the mentioned cases and when $t$ coincides with our bounds.

Consider the hypercube $\mathfrak{H}$ of rational points $(x_1,x_2, \ldots, x_m)$ in $\mathbb{Z}^m$ such that $0 \leq x_i \leq q-1$ for $1 \leq i \leq m$, i.e.,  $\mathfrak{H} = (\{0,1,\ldots,q-1 \})^m$, and the varieties on $\mathbb{R}^m$, $H_1$ and $H_2$, defined, respectively, by the equations $(X_1+1) (X_2+1) \cdots(X_m+1)=q^m-t$ and $(q-X_1) (q-X_2) \cdots(q-X_m)=q^m-t$. From the above considerations, it is clear that to prove our result, we must check the following condition that we denote by (*): All rational point in $\mathfrak{H}$ under the variety $H_1$ must be under or on the variety $H_2$. Notice that the expression under (respectively, under or on) $H$ means rational points in the space bounded by the hyperplanes $X_i=0$ and $H$ and containing the zero vector, which also can belong to the hyperplanes but not to (respectively, and) the variety $H$.

The conjugation map is defined on the closure $\overline{\mathfrak{H}}$ of $\mathfrak{H}$ as $\varphi: \overline{\mathfrak{H}} \rightarrow \overline{\mathfrak{H}}$, $\varphi(x_1,x_2, \ldots, x_m) = (q-1-x_1, q-1-x_2, \ldots, q-1-x_m)$ and will help us in our reasoning. For a start, it is straightforward to check that $\varphi(H_1) = H_2$. Now, as we announced, we are going to prove Condition (*) for each case.

In case (1), it happens that on $\overline{\mathfrak{H}}$, the hyperplane $\pi$ with equation $X_1+X_2 + \cdots + X_m = \frac{m}{2} (q-1)$ is invariant under conjugation and both varieties $H_1$ and $H_2$ intersect $\pi$ at the same set $S$ of points. $S$ is the set of points in $\mathbb{Z}^m$ where $m/2$ coordinates are equal to zero and the remaining ones equal $q-1$. The facts that the points in $S$ belong to the facets of $\mathfrak{H}$,  $H_1$ is convex and $H_2$ concave on $\mathfrak{H}$ determine a geometric configuration that proves the result. Case (2) can be proved in a similar way although in this case $H_1$ and $H_2$ meet $\pi$ at the set of points where $(m-1)/2$ coordinates equal zero, $(m-1)/2$ coordinates equal $q-1$ and the remaining one is equal to $(q-1)/2$.

To finish the proof, assume we are in case (3) and consider the hyperplanes $\pi_1: X_1+ X_2 +\cdots+ X_m=\frac{m-1}{2} (q-1)+\frac{q}{2}$ and $\pi_2: X_1+ X_2 +\cdots+ X_m=\frac{m-1}{2} (q-1)+\frac{q}{2}-1$. Both hyperplanes are conjugated one of each other and the same happens with the varieties $H_i$. In addition, within $\overline{\mathfrak{H}}$, $H_1$ meets $\pi_1$ at the set of points satisfying that $(m-1)/2$ coordinates are equal to zero, $(m-1)/2$ coordinates equal $q-1$ and the remaining one is $q/2$. With respect to $H_2$ and $\pi_2$, a similar situation happens but the remaining coordinate must be $(q/2) -1$. This fact shows that although one can find nonrational points of $\overline{\mathfrak{H}}$ under $H_1$ which are not under $H_2$, this fact cannot happen with rational points because the terms of the right hand of the equations for $\pi_1$ and $\pi_2$ differ in one unit. As a consequence, Condition (*) holds and the result is proved.
\end{proof}

\subsection{Affine variety codes} \label{affine} Consider again the ring of polynomials  $\mathbb{F}_{q}[X_1, X_2, \ldots, X_m]$ and, in this case, choose $m$ positive integers $N_i$, $1 \leq i \leq m$, satisfying that $N_i$ divides $q-1$. Now the ideal $I$ defining $R= \mathbb{F}_{q}[X_1, X_2, \ldots, X_m]/I$ will be that spanned by the set of polynomials $\{X_1^{N_1}-1, X_2^{N_2}-1, \ldots, X_m^{N_m}-1\}$ and the set of evaluating points $Z(I)=\{P_j\}_{j=1}^n$. As above, we will use the morphism of vector spaces
$\mathrm{ev}: R \rightarrow \mathbb{F}_{q}^n$. Consider the cartesian product \begin{equation}\label{eq:H}\mathfrak{H} = \{0,1,\ldots,N_1-1\}\times \{0,1,\ldots,N_2-1\} \times\cdots\times\{0,1,\ldots,N_m-1\}\end{equation}  and for any nonempty subset $\Delta \subset \mathfrak{H}$, we define the {\it affine variety code} given by $\Delta$, $E_\Delta$, as the vector subspace over $\mathbb{F}_{q}$ of $\mathbb{F}_{q}^n$ spanned by the evaluation by $\mathrm{ev}$ of the classes in $R$ of the set corresponding to monomials $X^{\boldsymbol{\alpha}}  = X_1^{\alpha_1} X_2^{\alpha_2} \cdots X_m^{\alpha_m}$ such that $\boldsymbol{\alpha}=(\alpha_1,\alpha_2, \ldots, \alpha_m) \in \Delta$. Note that the length of these codes is $n=N_1 N_2 \cdots N_m$.

For a set $\Delta$ as above, we define the subset of $\mathfrak{H}$,
$\Delta^\perp  =  \mathfrak{H} \setminus \{ \hat{\boldsymbol{\alpha}} | \boldsymbol{\alpha} \in \Delta \}$,  where $\hat{\boldsymbol{\alpha}} $ denotes the element $\hat{\boldsymbol{\alpha}}=(\hat{\alpha}_1,\hat{\alpha}_2, \ldots, \hat{\alpha}_m)$ where, for $1 \leq i \leq m$, $\hat{\alpha}_i$ is $0$ whenever  $\alpha_i=0$ and $\hat{\alpha}_i =N_i - \alpha_i$ otherwise.  One can also define $\hat{\boldsymbol{\alpha}}$  as $ - \boldsymbol{\alpha} $ in $\mathbb{Z}_{N_1} \times \cdots \times \mathbb{Z}_{N_m}$.  Concerning duality, the main result is the following one that is an extension of one in \cite{maria-michael,diego}.

\begin{pro}\label{dual:affine}
The dimension of an affine variety code $E_\Delta$, defined by a set $\Delta$ as above, is the cardinality of the set $\Delta$. Moreover, the dual code $E_\Delta^\perp$ of $E_\Delta$ is the affine variety code $E_{\Delta^\perp}$.
\end{pro}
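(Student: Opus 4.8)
The plan is to prove the two assertions in turn, using throughout that the evaluation map $\mathrm{ev}\colon R\to\mathbb{F}_q^n$ is an isomorphism of $\mathbb{F}_q$-vector spaces. This holds here for the same reason as in the Reed--Muller case: since each $N_i$ divides $q-1$ and $p\nmid N_i$, the polynomial $X_i^{N_i}-1$ is separable with exactly $N_i$ roots in $\mathbb{F}_q$ (the $N_i$-th roots of unity), so $I$ is a radical ideal, $Z(I)$ consists of $n=N_1N_2\cdots N_m$ distinct $\mathbb{F}_q$-rational points, and the Chinese Remainder Theorem identifies $R$ with $\prod_{P\in Z(I)}\mathbb{F}_q\cong\mathbb{F}_q^n$ via $\mathrm{ev}$. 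In particular the classes of the monomials $\{X^{\boldsymbol{\alpha}}\mid\boldsymbol{\alpha}\in\mathfrak{H}\}$ form a basis of $R$.

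For the dimension statement, I would observe that the monomials $\{X^{\boldsymbol{\alpha}}\mid\boldsymbol{\alpha}\in\Delta\}$ are part of this basis, hence linearly independent in $R$; since $\mathrm{ev}$ is injective, their images $\{\mathrm{ev}(X^{\boldsymbol{\alpha}})\mid\boldsymbol{\alpha}\in\Delta\}$ are linearly independent in $\mathbb{F}_q^n$, and as they span $E_\Delta$ by definition we get $\dim E_\Delta=\mathrm{card}\,\Delta$. Applying this also to $\Delta^\perp$ and noting that $\boldsymbol{\alpha}\mapsto\hat{\boldsymbol{\alpha}}$ is a bijection of $\mathfrak{H}$ onto itself gives $\dim E_{\Delta^\perp}=\mathrm{card}\,\Delta^\perp=n-\mathrm{card}\,\Delta=n-\dim E_\Delta=\dim E_\Delta^\perp$. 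Thus $E_{\Delta^\perp}$ and $E_\Delta^\perp$ have the same dimension, and to establish the duality statement it suffices to prove the single inclusion $E_{\Delta^\perp}\subset E_\Delta^\perp$.

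The heart of the argument is the computation of the Euclidean inner product of two monomial evaluations. For $\boldsymbol{\alpha},\boldsymbol{\beta}\in\mathfrak{H}$, the product-set structure $Z(I)=\mu_{N_1}\times\cdots\times\mu_{N_m}$, with $\mu_{N_i}$ the set of $N_i$-th roots of unity in $\mathbb{F}_q$, makes the sum factor as
\[
\langle\mathrm{ev}(X^{\boldsymbol{\alpha}}),\mathrm{ev}(X^{\boldsymbol{\beta}})\rangle=\sum_{P\in Z(I)}P^{\boldsymbol{\alpha}+\boldsymbol{\beta}}=\prod_{i=1}^m\Bigl(\sum_{\zeta^{N_i}=1}\zeta^{\alpha_i+\beta_i}\Bigr),
\]
and each inner factor is the standard character sum over the $N_i$-th roots of unity, equal to $N_i$ when $N_i\mid(\alpha_i+\beta_i)$ and to $0$ otherwise. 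Hence the whole product is nonzero (indeed equal to $n$) precisely when $\alpha_i+\beta_i\equiv0\pmod{N_i}$ for every $i$, that is, exactly when $\boldsymbol{\beta}=\hat{\boldsymbol{\alpha}}$, since $\hat{\boldsymbol{\alpha}}$ is by definition the unique representative in $\mathfrak{H}$ of $-\boldsymbol{\alpha}$ modulo $(N_1,\ldots,N_m)$. I expect this character-sum computation, together with keeping the congruence bookkeeping straight, to be the main technical point of the proof.

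Finally, given $\boldsymbol{\beta}\in\Delta^\perp$ we have $\boldsymbol{\beta}\neq\hat{\boldsymbol{\alpha}}$ for all $\boldsymbol{\alpha}\in\Delta$ by the definition of $\Delta^\perp$, so the inner product above vanishes for every $\boldsymbol{\alpha}\in\Delta$; therefore each generator $\mathrm{ev}(X^{\boldsymbol{\beta}})$ of $E_{\Delta^\perp}$ is orthogonal to every generator of $E_\Delta$, giving $E_{\Delta^\perp}\subset E_\Delta^\perp$. Combined with the equality of dimensions from the second paragraph, this yields $E_{\Delta^\perp}=E_\Delta^\perp$, as desired.
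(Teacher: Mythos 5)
Your proposal is correct and takes essentially the same approach as the paper: the identical character-sum factorization $\langle\mathrm{ev}(X^{\boldsymbol{\alpha}}),\mathrm{ev}(X^{\boldsymbol{\beta}})\rangle=\prod_{i=1}^m\bigl(\sum_{\zeta^{N_i}=1}\zeta^{\alpha_i+\beta_i}\bigr)$, vanishing unless $\boldsymbol{\beta}=\hat{\boldsymbol{\alpha}}$, combined with a dimension count to upgrade the inclusion $E_{\Delta^\perp}\subset E_\Delta^\perp$ to equality. You merely spell out more explicitly than the paper the facts it leaves implicit, namely that $\mathrm{ev}$ is injective (so $\dim E_\Delta=\mathrm{card}\,\Delta$) and that $N_1N_2\cdots N_m\neq 0$ in $\mathbb{F}_q$.
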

\begin{proof}
Let $\xi_i \in \mathbb{F}_q$ with order $N_i$, for $i=1, 2, \ldots, m$, whose existence is guaranteed by the fact that $N_i | q-1$. So, $\langle \xi_i \rangle = \{ \xi_i^0 , \xi_i^1 , \ldots , \xi_i^{N-1} \} = Z(X_i^{N_i} -1)$.

Let $\boldsymbol{a}, \boldsymbol{b} \in \mathfrak{H}$, by the distributive property,  $\mathrm{ev} (X^{\boldsymbol{a}}) \cdot \mathrm{ev} (X^{\boldsymbol{b}}) $  is equal to $$\left(\sum_{\gamma_1 \in \langle \xi_1 \rangle} \gamma_1^{a_1 + b_1} \right) \left(\sum_{\gamma_2 \in \langle \xi_2 \rangle} \gamma_2^{a_2 + b_2} \right) \cdots \left( \sum_{\gamma_m \in \langle \xi_m \rangle} \gamma_m^{a_m + b_m}   \right).$$

If $a_i + b_i =0$ in $\mathbb{Z}_{N_i}$ for every $i \in \{1 , 2, \ldots, m\}$, then $ \mathrm{ev} (X^{\boldsymbol{a}}) \cdot \mathrm{ev} (X^{\boldsymbol{b}}) \neq 0 $ because $$\sum_{\gamma_i \in \langle \xi_i \rangle} \gamma_i^{a_i + b_i} = \sum_{\gamma_i \in \langle \xi_i \rangle} \gamma_i^0 = N_i \neq 0 ~(\text{in}~\mathbb{F}_q).$$ However, if $a_i + b_i = c \neq 0$ in $\mathbb{Z}_{N_i}$ for some $i$, then $ \mathrm{ev} (X^{\boldsymbol{a}}) \cdot \mathrm{ev} (X^{\boldsymbol{b}}) = 0 $  because $$\sum_{\gamma_i \in \langle \xi_i \rangle} \gamma_i^{a_i + b_i} = \sum_{j=0}^{N_i-1} (\xi_i^j)^{c} = \sum_{j=0}^{N_i-1} (\xi_i^c)^{j} = \frac{1-(\xi_i^c)^{N_i}}{1 - \xi_i^c} = 0;$$note that $\xi_i^c \neq 1$ since $c \neq 0 \in  \mathbb{Z}_{N_i}$.

Then $\mathrm{ev} (X^{\boldsymbol{a}}) \cdot \mathrm{ev} (X^{\boldsymbol{b}})  = 0$ for $\boldsymbol{a} \in \Delta$, $\boldsymbol{b} \in \Delta^\perp$ since $\boldsymbol{a} + \boldsymbol{b} \neq \boldsymbol{0}$ in $\mathbb{Z}_{N_1} \times \mathbb{Z}_{N_2} \times \cdots \times \mathbb{Z}_{N_m}$.  On account of the dimension of $E_\Delta$ and $E_{\Delta^\perp}$ and the linearity of the codes the result holds.

\end{proof}

\section{Subfield-subcodes}
\label{subfield}
Section \ref{families} provides codes, including their derived matrix-product ones, suitable to get stabilizer codes via the CSS code construction and the Steane's enlargement procedure. However, stabilizer codes with better parameters can be obtained by considering subfield-subcodes. Next, we are going to give some details with respect to their  dimensions.

Recall that $q=p^r$, assume $r>2$ and pick a positive integer $s<r$ such that $s$ divides $r$. Consider the trace type maps: $\mathrm{tr}_r^s: \mathbb{F}_{p^r} \rightarrow \mathbb{F}_{p^s}$ defined as $\mathrm{tr}_r^s(x)= x + x^{p^s} + \cdots + x^{p^{s(\frac{r}{s}-1)}}$; $\mathbf{tr}: \mathbb{F}_{p^{r}}^n \rightarrow \mathbb{F}_{p^{s}}^n$, which works by applying  $\mathrm{tr}_r^s$ componentwise and, for the different rings $R$ defined in Section \ref{families}, $\mathcal{T}: R\rightarrow R$, $\mathcal{T}(f) = f + f^{p^s} + \cdots + f^{p^{s(\frac{r}{s}-1)}}$. We must add that we consider $f \in R$ given by a linear combination of monomials with exponents in $\mathfrak{H}$, $\mathfrak{H}$ being the hypercube $ (\{0,1,\ldots,q-1 \})^m$  in Subsections \ref{reed-muller} and \ref{hyperbolic},  and $\mathfrak{H}$ as defined in (\ref{eq:H}), in Subsection \ref{affine}.  In the rest of this section, we will set $N_i=q$, $1 \leq i \leq m$, when we are working with either Reed-Muller or hyperbolic codes. Otherwise, $N_i$ will be the corresponding values for affine variety codes.

\begin{rem}{\rm
To define the codes in the previous section, we have considered the algebra $R= \mathbb{F}_{q}[X_1,  \ldots, X_m]/I$, where $I$ is spanned by the set of polynomials $\{X_1^{N_1}-X_1, \ldots, X_m^{N_m}-X_m\}$  for Reed-Muller and hyperbolic codes, and by the set  $\{X_1^{N_1}-1,   \ldots, X_m^{N_m}-1\}$ for affine variety codes. Therefore, the algebra $R$ is slightly different for affine variety codes in this work, the only difference residing in the fact that we are only evaluating at points with nonzero coordinates. Although the literature usually considers affine variety codes using the first ideal,  we have decided to consider the second ideal in order to compare some of our codes with the ones in \cite{lag3,lag1}, whose length is a power of $q$ minus one. }
%Anyhow, any results, such as Proposition \ref{dual:affine}, can be easily extended for both definitions of the ideal $I$.}
\end{rem}

For each index $i$ as above, set $\mathbb{Z}_{N_i}$ the quotient ring $\mathbb{Z} / N_i \mathbb{Z}$.  A subset $\mathfrak{I}$ of the cartesian product $\mathbb{Z}_{N_1}\times \mathbb{Z}_{N_2} \times \cdots\times\mathbb{Z}_{N_m}$ is a {\it cyclotomic set} if  it satisfies
$
\mathfrak{I}= \{p \cdot \boldsymbol{\alpha} \;| \; \boldsymbol{\alpha}\in \mathfrak{I}\}
$,
where $p \cdot \boldsymbol{\alpha} = (p \alpha_1, p \alpha_2, \ldots, p \alpha_m)$. Moreover, a cyclotomic set $\mathfrak{I}$ is called {\it minimal} (for the exponent $s$ above introduced) whenever every element in $\mathfrak{I}$ can be expressed as  $p^{s j } \cdot \boldsymbol{\alpha}$ for some fixed $\boldsymbol{\alpha} \in \mathfrak{I}$ and some nonnegative integer $j$. Fixing a representant $\mathbf{a} \in \mathfrak{I}$ for each minimal cyclotomic set, one gets a set of representatives $\mathcal{A}$. Then, $ \mathfrak{I} = \mathfrak{I}_\mathbf{a}$ for some $\mathbf{a} \in \mathcal{A}$. The family of minimal cyclotomic sets, with respect to $s$, will be $\{ \mathfrak{I}_\mathbf{a}\}_{\mathbf{a} \in \mathcal{A}}$ and we will denote $i_\mathbf{a} : = \mathrm{card}(\mathfrak{I}_\mathbf{a})$. In addition, $r$ is a multiple of $i_\mathbf{a}$ and, setting $\mathbf{a} = (a_1,a_2, \ldots, a_m)$, the congruence $a_i \cdot p^{s i_\mathbf{a}} \equiv a_i \mod N_i$ holds.

The main advantage of considering cyclotomic sets is that any element $f \in R$ can be uniquely decomposed in the form $
f = \sum_{\mathbf{a} \in \mathcal{A}} f_{\mathbf{a}}$,
where $ f_{\mathbf{a}}$ are classes of polynomials in $R$ whose support (that of its canonical representative), $\mathrm{supp} (f_\mathbf{a})$, is included in $\mathfrak{I}_\mathbf{a}$. Furthermore, it holds
\[
\mathrm{supp} (\mathcal{T}\left(f_\mathbf{a})\right) \subset \mathfrak{I}_\mathbf{a}.
\]

Our aim in this section consists of describing  subfield-subcodes of the families of codes introduced in Section \ref{families}.
%Afterwards, we will use that description to give parameters for stabilizer codes. Therefore, we desire to study the intersection of our codes, generically expressed as $E$ (and its dual code $C$), with the vector space $\mathbb{F}_{p^s}^n$, setting $E^\sigma = E \cap \mathbb{F}_{p^s}^n$ and $C^\sigma =  C \cap \mathbb{F}_{p^s}^n$.
We will need the concept of {\it element $f \in R$  evaluating to $\mathbb{F}_{p^s}$}. This means that  $f(\boldsymbol{\alpha}) \in \mathbb{F}_{p^s}$ for all $\boldsymbol{\alpha}  \in Z(I)$. This happens if and only if $f = \mathcal{T}(g)$ for some $g \in R$. Then, we can state the following result \cite{galindo-hernando}.

\begin{teo}
\label{ga-he}
Let $\beta_\mathbf{a}$ be a primitive element of the finite field $\mathbb{F}_{p^{si_\mathbf{a}}}$ and set $\mathcal{T}_\mathbf{a} : R \rightarrow R$ the mapping defined as $\mathcal{T}_\mathbf{a}(f) = f + f^{p^s} + \cdots + f^{p^{s(i_\mathbf{a} -1)}}$. Then, a basis of the vector space of elements in $R$ evaluating to  $\mathbb{F}_{p^s}$ is
\[
 \bigcup_{\mathbf{a} \in \mathcal{A} } \left\{ \mathcal{T}_{\mathbf{a}} (\beta_\mathbf{a}^{l} X^\mathbf{a}) \; | \;
0 \leq l \leq i_\mathbf{a}-1  \right\}.
\]
\end{teo}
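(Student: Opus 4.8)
The plan is to first reformulate membership in the target space in purely algebraic terms. Write $V$ for the $\mathbb{F}_{p^s}$-vector space of elements of $R$ evaluating to $\mathbb{F}_{p^s}$, and let $\Phi \colon R \to R$ be the Frobenius $\Phi(f) = f^{p^s}$. Since $\mathrm{ev}$ is a ring homomorphism compatible with the $p^s$-power map, and an element $y \in \mathbb{F}_q$ lies in $\mathbb{F}_{p^s}$ exactly when $y^{p^s} = y$, a class $f$ evaluates to $\mathbb{F}_{p^s}$ if and only if $\mathrm{ev}(f^{p^s}) = \mathrm{ev}(f)$; as $\mathrm{ev}$ is injective this is equivalent to $f^{p^s} = f$ in $R$. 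Hence $V = \ker(\Phi - \mathrm{id})$. Because $c^{p^s} = c$ for $c \in \mathbb{F}_{p^s}$, the additive map $\Phi$ is $\mathbb{F}_{p^s}$-linear, so $V$ is genuinely an $\mathbb{F}_{p^s}$-subspace.

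Next I would exploit the cyclotomic decomposition. The monomials $X^{\boldsymbol{\beta}}$, with $\boldsymbol{\beta}$ ranging over the hypercube, form a basis of $R$, and grouping them by minimal cyclotomic sets yields $R = \bigoplus_{\mathbf{a}\in\mathcal{A}} R_\mathbf{a}$, where $R_\mathbf{a}$ is the $\mathbb{F}_q$-span of $\{X^{\boldsymbol{\beta}} \mid \boldsymbol{\beta} \in \mathfrak{I}_\mathbf{a}\}$. Since $\Phi(X^{\boldsymbol{\beta}}) = X^{p^s\boldsymbol{\beta}}$ and $p^s\cdot\mathfrak{I}_\mathbf{a} = \mathfrak{I}_\mathbf{a}$, the operator $\Phi$ preserves each summand; comparing components then gives $V = \bigoplus_{\mathbf{a}\in\mathcal{A}} V_\mathbf{a}$ with $V_\mathbf{a} := V \cap R_\mathbf{a}$. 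It therefore suffices to produce, for each fixed $\mathbf{a}$, an $\mathbb{F}_{p^s}$-basis of $V_\mathbf{a}$ and to check that it is the set $\{\mathcal{T}_\mathbf{a}(\beta_\mathbf{a}^l X^\mathbf{a}) \mid 0 \le l \le i_\mathbf{a}-1\}$.

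The heart of the argument is the analysis of a single block. The $i_\mathbf{a}$ monomials $X^{p^{sj}\mathbf{a}}$ ($0 \le j \le i_\mathbf{a}-1$) are pairwise distinct, so any $f \in R_\mathbf{a}$ writes uniquely as $f = \sum_{j=0}^{i_\mathbf{a}-1} c_j X^{p^{sj}\mathbf{a}}$ with $c_j \in \mathbb{F}_q$, and $\Phi(f) = \sum_{j} c_{j-1}^{p^s} X^{p^{sj}\mathbf{a}}$ (indices read modulo $i_\mathbf{a}$, using $X^{p^{si_\mathbf{a}}\mathbf{a}} = X^\mathbf{a}$). Thus $f \in V_\mathbf{a}$ is equivalent to the recurrence $c_j = c_{j-1}^{p^s}$, whose solutions are $c_j = c_0^{p^{sj}}$ subject to the closure condition $c_0 = c_0^{p^{si_\mathbf{a}}}$, i.e.\ $c_0 \in \mathbb{F}_{p^{si_\mathbf{a}}}$. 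Consequently $V_\mathbf{a} = \{\mathcal{T}_\mathbf{a}(c X^\mathbf{a}) \mid c \in \mathbb{F}_{p^{si_\mathbf{a}}}\}$, and the $\mathbb{F}_{p^s}$-linear map $c \mapsto \mathcal{T}_\mathbf{a}(c X^\mathbf{a})$ is injective because the monomials $X^{p^{sj}\mathbf{a}}$ are independent. Since $\beta_\mathbf{a}$ is a primitive element of $\mathbb{F}_{p^{si_\mathbf{a}}}$, the powers $1, \beta_\mathbf{a}, \ldots, \beta_\mathbf{a}^{i_\mathbf{a}-1}$ form an $\mathbb{F}_{p^s}$-basis of $\mathbb{F}_{p^{si_\mathbf{a}}}$, so their images $\mathcal{T}_\mathbf{a}(\beta_\mathbf{a}^l X^\mathbf{a})$ form an $\mathbb{F}_{p^s}$-basis of $V_\mathbf{a}$. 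Taking the union over $\mathbf{a} \in \mathcal{A}$ yields the asserted basis of $V$.

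The only genuinely delicate point is the per-block fixed-point computation: one must be sure that $\Phi$ acts on $R_\mathbf{a}$ as the claimed semilinear cyclic shift, which rests on the defining properties of minimal cyclotomic sets ($p^{si_\mathbf{a}}\mathbf{a}\equiv\mathbf{a}$ and $\mathrm{card}(\mathfrak{I}_\mathbf{a}) = i_\mathbf{a}$) recalled before the statement; everything else is bookkeeping. As a consistency check, $\sum_{\mathbf{a}\in\mathcal{A}}i_\mathbf{a} = \sum_{\mathbf{a}\in\mathcal{A}}\mathrm{card}(\mathfrak{I}_\mathbf{a})$ equals the number of monomials, namely $n$, matching $\dim_{\mathbb{F}_{p^s}} V = n$, which follows from $V \cong \mathbb{F}_{p^s}^n$ via $\mathrm{ev}$.
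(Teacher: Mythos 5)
Your argument is correct, and it follows essentially the intended route: the paper itself states Theorem \ref{ga-he} without proof, importing it from \cite{galindo-hernando}, and the proof there proceeds exactly as you do --- characterize the elements of $R$ evaluating to $\mathbb{F}_{p^s}$ as the fixed points of $f \mapsto f^{p^s}$ (legitimate since $\mathrm{ev}$ is a bijective ring map), split $R$ into the blocks spanned by the monomials of each minimal cyclotomic set, and identify the fixed points of each block as $\mathcal{T}_\mathbf{a}(c X^\mathbf{a})$ with $c \in \mathbb{F}_{p^{si_\mathbf{a}}}$, which lies inside $\mathbb{F}_q$ because $i_\mathbf{a}$ divides $r/s$. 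Your per-orbit recurrence $c_j = c_{j-1}^{p^s}$, the closure condition $c_0 = c_0^{p^{si_\mathbf{a}}}$, and the use of the primitive element $\beta_\mathbf{a}$ to turn an $\mathbb{F}_{p^s}$-basis of $\mathbb{F}_{p^{si_\mathbf{a}}}$ into a basis of each block all match that treatment, so there is nothing to correct.
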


Denote  by $\Delta$ any of the sets generating (by applying $\mathrm{ev}$ to the monomials that represent) any of the codes described in Section \ref{families}, which we set $E_\Delta$. $\Delta$ is a subset of the hypercube $\mathfrak{H}$ for affine variety codes. For Reed-Muller codes, $\Delta$ will be  the exponents set of the monomials in $R$ of total degree less than or equal to certain positive integer $r$. Finally, when we consider a hyperbolic code, $\Delta$ will be  the set of exponents appearing in the monomials in (\ref{HHyp}) for some value $t$ as above mentioned. Consider the set $E^\sigma_\Delta = E_\Delta \cap \mathbb{F}_{p^s}^n$. $E^\sigma_\Delta$ is defined by the traces $ \mathcal{T} (g)$ of elements $g \in R$ such that $ \mathcal{T} (g)$  is in the vector space generated by monomials with exponents in $\Delta$. As a consequence, one gets

\begin{teo}
The vector space $E^\sigma_\Delta$ is generated by the images under the evaluation map $\mathrm{ev}$ of the elements in $R$
\[
\bigcup_{\mathbf{a} \in \mathcal{A}| \mathfrak{I}_\mathbf{a} \subset \Delta}  \left\{ \mathcal{T}_{\mathbf{a}} (\beta_\mathbf{a}^{l} X^\mathbf{a}) \; | \; 0 \leq l \leq i_\mathbf{a}-1  \right\}.
\]
\end{teo}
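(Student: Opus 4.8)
The plan is to work inside $R$ via the isomorphism $\mathrm{ev}$, identifying $E^\sigma_\Delta$ with the $\mathbb{F}_{p^s}$-vector space $V$ of those $f \in R$ that simultaneously evaluate to $\mathbb{F}_{p^s}$ and have support contained in $\Delta$. The strategy is to combine Theorem \ref{ga-he}, which gives a basis of the whole space of elements of $R$ evaluating to $\mathbb{F}_{p^s}$, with a sharp description of the support of such elements. The starting point is the unique decomposition $f = \sum_{\mathbf{a} \in \mathcal{A}} f_\mathbf{a}$ with $\mathrm{supp}(f_\mathbf{a}) \subset \mathfrak{I}_\mathbf{a}$, together with the fact that the minimal cyclotomic sets $\mathfrak{I}_\mathbf{a}$ partition $\mathfrak{H}$ into orbits under multiplication by $p^s$.

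First I would check that evaluating to $\mathbb{F}_{p^s}$ is inherited by each cyclotomic component. An element $f \in R$ evaluates to $\mathbb{F}_{p^s}$ if and only if $f^{p^s} = f$ in $R$ (by injectivity of $\mathrm{ev}$); raising to the $p^s$-th power sends $f_\mathbf{a}$ to an element again supported on $\mathfrak{I}_\mathbf{a}$, since $p^s \cdot \mathfrak{I}_\mathbf{a} = \mathfrak{I}_\mathbf{a}$, so uniqueness of the decomposition forces $f_\mathbf{a}^{p^s} = f_\mathbf{a}$ for every $\mathbf{a}$. Hence each $f_\mathbf{a}$ evaluates to $\mathbb{F}_{p^s}$, and by Theorem \ref{ga-he} it lies in the $\mathbb{F}_{p^s}$-span of $\{\mathcal{T}_\mathbf{a}(\beta_\mathbf{a}^l X^\mathbf{a}) : 0 \leq l \leq i_\mathbf{a}-1\}$.

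The key step, and the one I expect to carry the real content, is a full-support property: any nonzero element supported on a single $\mathfrak{I}_\mathbf{a}$ and evaluating to $\mathbb{F}_{p^s}$ has support equal to all of $\mathfrak{I}_\mathbf{a}$. To establish it I would write such an element as $\sum_{l} c_l \mathcal{T}_\mathbf{a}(\beta_\mathbf{a}^l X^\mathbf{a})$ with $c_l \in \mathbb{F}_{p^s}$ and compute the coefficient of the monomial $X^{p^{sj}\mathbf{a}}$; using $c_l^{p^{sj}} = c_l$, this coefficient equals $D^{p^{sj}}$, where $D = \sum_l c_l \beta_\mathbf{a}^l \in \mathbb{F}_{p^{s i_\mathbf{a}}}$. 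Since $\beta_\mathbf{a}$ is primitive, $1, \beta_\mathbf{a}, \ldots, \beta_\mathbf{a}^{i_\mathbf{a}-1}$ is an $\mathbb{F}_{p^s}$-basis of $\mathbb{F}_{p^{s i_\mathbf{a}}}$, so $D = 0$ exactly when all $c_l$ vanish, while for $D \neq 0$ every coefficient $D^{p^{sj}}$ is nonzero. Thus the element is either zero or has support precisely $\mathfrak{I}_\mathbf{a}$.

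Finally I would assemble the pieces. Because the sets $\mathfrak{I}_\mathbf{a}$ are pairwise disjoint, there is no cancellation across components, whence $\mathrm{supp}(f) = \bigsqcup_{f_\mathbf{a} \neq 0} \mathfrak{I}_\mathbf{a}$; therefore $\mathrm{supp}(f) \subset \Delta$ holds if and only if $\mathfrak{I}_\mathbf{a} \subset \Delta$ for every $\mathbf{a}$ with $f_\mathbf{a} \neq 0$. Consequently $V$ is exactly the direct sum, over those $\mathbf{a} \in \mathcal{A}$ with $\mathfrak{I}_\mathbf{a} \subset \Delta$, of the spaces spanned by $\{\mathcal{T}_\mathbf{a}(\beta_\mathbf{a}^l X^\mathbf{a}) : 0 \leq l \leq i_\mathbf{a}-1\}$, and applying $\mathrm{ev}$ produces the stated generating set for $E^\sigma_\Delta$. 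The main obstacle is the full-support property, for without it a combination of the elements $\mathcal{T}_\mathbf{a}(\beta_\mathbf{a}^l X^\mathbf{a})$ with $\mathfrak{I}_\mathbf{a} \not\subset \Delta$ could a priori collapse onto $\mathfrak{I}_\mathbf{a} \cap \Delta$ and slip into $E_\Delta$; the computation above rules out exactly this possibility.
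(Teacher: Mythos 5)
Your proof is correct and follows the same route the paper takes (implicitly, since the paper states this theorem as an immediate consequence of Theorem \ref{ga-he} and the cyclotomic decomposition $f=\sum_{\mathbf{a}\in\mathcal{A}} f_{\mathbf{a}}$ with $\mathrm{supp}(f_{\mathbf{a}})\subset \mathfrak{I}_{\mathbf{a}}$): you identify $E^\sigma_\Delta$ with the trace elements of $R$ supported in $\Delta$ and sort them along the disjoint minimal cyclotomic sets. Your full-support computation --- that a nonzero $\mathbb{F}_{p^s}$-combination $\sum_l c_l\,\mathcal{T}_{\mathbf{a}}(\beta_{\mathbf{a}}^{l}X^{\mathbf{a}})$ has coefficients $D^{p^{sj}}$, $D=\sum_l c_l\beta_{\mathbf{a}}^l$, hence support exactly $\mathfrak{I}_{\mathbf{a}}$ --- is exactly the step the paper delegates to its reference \cite{galindo-hernando}, and you supply it correctly, rightly noting it is what prevents elements attached to $\mathfrak{I}_{\mathbf{a}}\not\subset\Delta$ from collapsing onto $\mathfrak{I}_{\mathbf{a}}\cap\Delta$.
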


With respect to the dual code of $E^\sigma_\Delta$, one can consider the following diagram:
%\deactivatequoting
\[
\begin{CD}
E_\Delta @>\mathrm{duality}>> E_\Delta^\perp\\
@VVV @V\mathbf{tr}VV \\
E^\sigma_\Delta = E_\Delta \cap \mathbb{F}_{p^s}^n @>>\mathrm{duality}> (E^\sigma_\Delta)^\perp = (E^\perp_\Delta)^\sigma
\end{CD}
\]
where we notice that the equality at the bottom right holds by Delsarte Theorem \cite{delsarte}.

When we are dealing with affine variety codes $E_\Delta$, we have defined in Subsection \ref{affine} the set $\Delta^\perp$ attached with $\Delta$ and defined the corresponding dual code. Analogously, for Reed-Muller codes, defined by  the set $\Delta$, corresponding to monomials in $R$ of degree less than or equal to $r$, we can define $\Delta^\perp$ as the set of exponents of monomials in $R$ of degree less than or equal to $m(q+1) -(r+1)$. Finally, for the case of hyperbolic codes Hyp$(t,m)$, the set $\Delta^\perp$ is showed in (\ref{DosDos}). Setting $C_\Delta = E^\perp_\Delta$ and $C_\Delta^\sigma =  C_\Delta \cap \mathbb{F}_{p^s}^n$, the above diagram proves that $C_\Delta^\sigma = (E_\Delta^\sigma)^\perp$ and thus $C_\Delta^\sigma$ is the vector space generated by $\mathbf{tr} \left( \mathrm{ev}( \Delta^\perp)\right)$, that is the vector space generated by $\mathrm{ev}\left( \mathcal{T}( \Delta^\perp)\right)$,
where $\Delta^\perp$ is defined as above. As a consequence, one gets the following result:
\begin{teo}
\label{perp}
Let $\Delta$ be the defining set of a code as above. Consider its corresponding set $\Delta^\perp$. With the above notations, the dual code $C_\Delta^\sigma$ of the code $E_\Delta^\sigma$ is generated by those vectors in $\mathbb{F}_{p^s}^n$ obtained by applying the map $\mathrm{ev}$ to the following set of elements in $R$
\[
\bigcup_{\mathbf{a} \in \mathcal{A} | \mathfrak{I}_\mathbf{a} \cap \Delta^\perp \neq \emptyset}  \left\{ \mathcal{T}_{\mathbf{a}} (\beta_\mathbf{a}^{l} X^\mathbf{a}) \; | \; 0 \leq l \leq i_\mathbf{a}-1  \right\}.
\]
\end{teo}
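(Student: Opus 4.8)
The plan is to rely on the description of $C_\Delta^\sigma$ obtained immediately before the statement. There, the commutative diagram together with Delsarte's Theorem identifies $C_\Delta^\sigma = (E_\Delta^\sigma)^\perp$ with the trace code of $E_\Delta^\perp = E_{\Delta^\perp}$; concretely, $C_\Delta^\sigma$ is the $\mathbb{F}_{p^s}$-linear span of the vectors $\mathrm{ev}(\mathcal{T}(\beta X^{\mathbf{b}}))$ with $\beta \in \mathbb{F}_{p^r}$ and $\mathbf{b} \in \Delta^\perp$. Hence the whole problem reduces to rewriting this span in terms of the explicit generators $\mathcal{T}_{\mathbf{a}}(\beta_{\mathbf{a}}^{l} X^{\mathbf{a}})$ supplied by Theorem \ref{ga-he}, and the heart of the matter is to decide exactly which minimal cyclotomic sets contribute.

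First I would group the generators by minimal cyclotomic sets. Each exponent $\mathbf{b} \in \Delta^\perp$ belongs to a unique set $\mathfrak{I}_{\mathbf{a}}$, $\mathbf{a} \in \mathcal{A}$, and since $\mathcal{T}$ carries any monomial with exponent in $\mathfrak{I}_{\mathbf{a}}$ to an element whose support lies in $\mathfrak{I}_{\mathbf{a}}$, the span above splits as the sum, over those $\mathbf{a}$ with $\mathfrak{I}_{\mathbf{a}} \cap \Delta^\perp \neq \emptyset$, of the blocks $W_{\mathbf{a}} = \langle \mathrm{ev}(\mathcal{T}(\beta X^{\mathbf{b}})) : \beta \in \mathbb{F}_{p^r}, \ \mathbf{b} \in \mathfrak{I}_{\mathbf{a}} \cap \Delta^\perp \rangle_{\mathbb{F}_{p^s}}$. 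The minimal cyclotomic sets are pairwise disjoint, so by the unique decomposition $f = \sum_{\mathbf{a}} f_{\mathbf{a}}$ with $\mathrm{supp}(f_{\mathbf{a}}) \subset \mathfrak{I}_{\mathbf{a}}$ recalled above these blocks are in direct sum; it therefore suffices to identify each $W_{\mathbf{a}}$ separately.

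The crux, and the step I expect to be the main obstacle, is to show that $W_{\mathbf{a}}$ equals the full block $\langle \mathrm{ev}(\mathcal{T}_{\mathbf{a}}(\beta_{\mathbf{a}}^{l} X^{\mathbf{a}})) : 0 \leq l \leq i_{\mathbf{a}} - 1 \rangle_{\mathbb{F}_{p^s}}$ as soon as $\mathfrak{I}_{\mathbf{a}} \cap \Delta^\perp \neq \emptyset$, i.e.\ that one exponent of the coset already generates the whole $i_{\mathbf{a}}$-dimensional piece. This is precisely why the condition is $\mathfrak{I}_{\mathbf{a}} \cap \Delta^\perp \neq \emptyset$ rather than $\mathfrak{I}_{\mathbf{a}} \subset \Delta^\perp$. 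To prove it I would first reduce to the representative $\mathbf{a}$: writing $\mathbf{b} = p^{sj}\mathbf{a}$ and using $\mathcal{T}(f^{p^s}) = \mathcal{T}(f)$, one checks that $\mathrm{ev}(\mathcal{T}(\beta X^{\mathbf{b}})) = \mathrm{ev}(\mathcal{T}(\beta' X^{\mathbf{a}}))$ with $\beta'$ running over all of $\mathbb{F}_{p^r}$ as $\beta$ does. Then, collecting the repeated exponents in $\mathcal{T}$, I would rewrite $\mathcal{T}(\beta' X^{\mathbf{a}}) = \mathcal{T}_{\mathbf{a}}(\gamma X^{\mathbf{a}})$ where $\gamma$ is the image of $\beta'$ under the trace $\mathbb{F}_{p^r} \to \mathbb{F}_{p^{s i_{\mathbf{a}}}}$; since this intermediate trace is surjective, $\gamma$ ranges over all of $\mathbb{F}_{p^{s i_{\mathbf{a}}}}$, and taking $\gamma = \beta_{\mathbf{a}}^{l}$ recovers exactly the basis of Theorem \ref{ga-he}. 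The delicate bookkeeping here is the multiplicity $r/(s\, i_{\mathbf{a}})$ with which each element of $\mathfrak{I}_{\mathbf{a}}$ is repeated in $\mathcal{T}$ compared with $\mathcal{T}_{\mathbf{a}}$, and the fact, guaranteed by surjectivity of the trace, that no scalar is lost in characteristic $p$.

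Finally, assembling the direct sum of the blocks $W_{\mathbf{a}}$ over the contributing $\mathbf{a}$ shows that $C_\Delta^\sigma$ is generated by, and indeed has as a basis, the announced union $\bigcup_{\mathbf{a} \in \mathcal{A} \mid \mathfrak{I}_{\mathbf{a}} \cap \Delta^\perp \neq \emptyset} \{\mathcal{T}_{\mathbf{a}}(\beta_{\mathbf{a}}^{l} X^{\mathbf{a}}) : 0 \leq l \leq i_{\mathbf{a}} - 1\}$ evaluated by $\mathrm{ev}$, which is the assertion of the theorem.
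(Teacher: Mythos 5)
Your proposal is correct and takes essentially the same route as the paper: the commutative diagram plus Delsarte's theorem identifies $C_\Delta^\sigma = (E_\Delta^\sigma)^\perp$ with the trace code of $E_{\Delta^\perp}$, i.e.\ the span of $\mathrm{ev}\left(\mathcal{T}(\Delta^\perp)\right)$, and the theorem then drops out of the cyclotomic-set decomposition behind Theorem \ref{ga-he}. The paper states this last step as an immediate consequence, whereas you spell out the details (grouping by minimal cyclotomic sets, reduction to the representative via $\mathcal{T}(f^{p^s})=\mathcal{T}(f)$, and surjectivity of the intermediate trace onto $\mathbb{F}_{p^{s i_\mathbf{a}}}$), all of which are accurate.
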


Finally we state the next result which extends to Reed-Muller and hyperbolic codes Theorems 5 and 6 in \cite{galindo-hernando}.

\begin{teo}
\label{eseldoce}
Let $\Delta$ be an evaluating set as in Theorem \ref{perp} providing a Reed-Muller, a hyperbolic or an affine variety code. Consider the subfield-subcode $E_\Delta^\sigma$ and its dual one $C_\Delta^\sigma$. Then
\begin{enumerate}
  \item The dimension of the code $C_\Delta^\sigma$ can be computed as
  \[ \dim (C_\Delta^\sigma)= \sum_{\mathbf{a} \in \mathcal{A}| \mathfrak{I}_\mathbf{a} \cap \Delta^\perp \neq \emptyset} i_\mathbf{a}.
\]
  \item The inclusion $E_\Delta^\sigma \subset C_\Delta^\sigma$ holds if, and only if,  $\mathfrak{I}_\mathbf{a} \cap \Delta^\perp \neq \emptyset$ whenever $\mathfrak{I}_\mathbf{a} \subset \Delta$, which in case of affine variety codes can be expressed as $\left\{ \hat{\boldsymbol{\alpha}} \; | \; \boldsymbol{\alpha} \in \mathfrak{I}_\mathbf{a} \right\} \not \subset \Delta$
      whenever $\mathfrak{I}_\mathbf{a} \subset \Delta$.
\end{enumerate}
\end{teo}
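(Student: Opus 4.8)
The plan is to read both statements off the explicit generating sets already produced, namely the generating set for $E_\Delta^\sigma$ displayed just before Theorem \ref{perp} and the one for $C_\Delta^\sigma$ given in Theorem \ref{perp}, combined with the linear independence guaranteed by Theorem \ref{ga-he}.

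For part (1), I would begin with Theorem \ref{perp}, which presents $C_\Delta^\sigma$ as the $\mathbb{F}_{p^s}$-span of the vectors $\mathrm{ev}(\mathcal{T}_{\mathbf{a}}(\beta_{\mathbf{a}}^{l} X^{\mathbf{a}}))$ with $\mathfrak{I}_{\mathbf{a}}\cap\Delta^\perp\neq\emptyset$ and $0\leq l\leq i_{\mathbf{a}}-1$. This family is a sub-family of the full family $\bigcup_{\mathbf{a}\in\mathcal{A}}\{\mathcal{T}_{\mathbf{a}}(\beta_{\mathbf{a}}^{l} X^{\mathbf{a}})\}$, which by Theorem \ref{ga-he} is a basis of the space of elements of $R$ evaluating to $\mathbb{F}_{p^s}$; since $\mathrm{ev}$ is an isomorphism of vector spaces in the Reed-Muller and hyperbolic cases and is injective with $\dim E_\Delta=\mathrm{card}\,\Delta$ in the affine variety case (Proposition \ref{dual:affine}), the images under $\mathrm{ev}$ stay linearly independent. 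Thus the displayed generators form a basis of $C_\Delta^\sigma$, and since each admissible $\mathbf{a}$ contributes exactly $i_{\mathbf{a}}$ of them (one for each $l$), we obtain $\dim(C_\Delta^\sigma)=\sum_{\mathfrak{I}_{\mathbf{a}}\cap\Delta^\perp\neq\emptyset} i_{\mathbf{a}}$.

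For part (2), the key observation is that $E_\Delta^\sigma$ and $C_\Delta^\sigma$ are both spanned by sub-families of the single linearly independent basis of Theorem \ref{ga-he}: the generating set preceding Theorem \ref{perp} shows that $E_\Delta^\sigma$ uses exactly the indices $\{\mathbf{a}:\mathfrak{I}_{\mathbf{a}}\subset\Delta\}$, whereas Theorem \ref{perp} shows that $C_\Delta^\sigma$ uses the indices $\{\mathbf{a}:\mathfrak{I}_{\mathbf{a}}\cap\Delta^\perp\neq\emptyset\}$. Because the span of a subset of a linearly independent set is contained in the span of another such subset precisely when the first index subset is contained in the second, I would conclude that $E_\Delta^\sigma\subset C_\Delta^\sigma$ holds if and only if every $\mathbf{a}$ with $\mathfrak{I}_{\mathbf{a}}\subset\Delta$ also satisfies $\mathfrak{I}_{\mathbf{a}}\cap\Delta^\perp\neq\emptyset$, which is the asserted criterion.

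To derive the affine variety reformulation, I would unwind $\Delta^\perp=\mathfrak{H}\setminus\{\hat{\boldsymbol{\alpha}}:\boldsymbol{\alpha}\in\Delta\}$ from Subsection \ref{affine} and use that $\boldsymbol{\alpha}\mapsto\hat{\boldsymbol{\alpha}}$ is an involution on $\mathfrak{H}$ (being $-\boldsymbol{\alpha}$ in $\mathbb{Z}_{N_1}\times\cdots\times\mathbb{Z}_{N_m}$). Then $\mathfrak{I}_{\mathbf{a}}\cap\Delta^\perp\neq\emptyset$ means some $\boldsymbol{\gamma}\in\mathfrak{I}_{\mathbf{a}}$ lies outside $\{\hat{\boldsymbol{\alpha}}:\boldsymbol{\alpha}\in\Delta\}$, equivalently $\hat{\boldsymbol{\gamma}}\notin\Delta$ for some $\boldsymbol{\gamma}\in\mathfrak{I}_{\mathbf{a}}$, i.e. $\{\hat{\boldsymbol{\alpha}}:\boldsymbol{\alpha}\in\mathfrak{I}_{\mathbf{a}}\}\not\subset\Delta$; inserting this into the criterion of part (2) gives the stated equivalent form. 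I expect the only delicate point to be the justification that inclusion of the two subfield-subcodes is equivalent to inclusion of their cyclotomic index sets; this rests entirely on the linear independence from Theorem \ref{ga-he} together with the injectivity of $\mathrm{ev}$, so once those are invoked the remainder is bookkeeping rather than computation.
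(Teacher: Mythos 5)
Your proposal is correct and follows essentially the route the paper intends: the paper proves Theorem \ref{eseldoce} by appealing to Theorems 5 and 6 of \cite{galindo-hernando}, whose argument is exactly your reading of the dimension and inclusion criteria off the generating sets in Theorem \ref{perp} and the preceding theorem, with linear independence supplied by Theorem \ref{ga-he} and the injectivity of $\mathrm{ev}$. Your unwinding of $\Delta^\perp=\mathfrak{H}\setminus\{\hat{\boldsymbol{\alpha}}\,:\,\boldsymbol{\alpha}\in\Delta\}$ via the involution $\boldsymbol{\alpha}\mapsto\hat{\boldsymbol{\alpha}}$ to obtain the affine variety reformulation is likewise the intended bookkeeping, so no gaps remain.
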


\section{Quantum stabilizer codes}
\label{quantumm}
This section is devoted to state results concerning quantum stabilizer codes by using results in previous sections. In this paper,  using only Euclidean inner product, we will get good stabilizer codes from the above studied codes and their matrix-product codes. A nice way to do it employs orthogonal matrices over finite fields. By using a computer, it is not difficult to obtain such matrices for fields of small cardinality. We are especially interested in this situation because, in most cases, we will use subfield-subcodes. For larger fields, one can use orthogonal circulant matrices because, according to \cite{junbe} (see also \cite{mac}), one may check whether the matrix is orthogonal by a condition on the polynomials determined by the first row and column of the matrix.

%An $s \times s$ matrix $A$ over a field $\mathbb{F}_q$ of the form
%\[
%\left(
  %\begin{array}{cccc}
  %  a_0 & a_1 & \cdots & a_{s-1} \\
  %  a_{s-1} & a_0  & \cdots & a_{s-2} \\
  %  \vdots & \vdots & \cdots & \vdots \\
  %  a_1 & a_2 & \cdots & a_0 \\
  %\end{array}
%\right)
%\]
%is named {\it circulant}. We can attach with  $A$ a polynomial class $$p_A (X)  =  a_0 + a_1 X + \cdots a_{s-1} X^{s-1} \in \mathbb{F}_q [X]/ \langle X^s -1 \rangle$$ which makes easy to decide when $A$ is orthogonal. The specific result, proved in \cite{junbe} (see also \cite{mac}), states that $A$ is circulant if and only in $ p_A (X) p_A^T (X) = 1$ where the product is made on the ring $\mathbb{F}_q [X]/ \langle X^s -1 \rangle$ and the polynomial $p_A^T (X)$ is $$p_A^T (X)  = a_0 + a_{s- 1} X + \cdots a_{1} X^{s-1} \in \mathbb{F}_q [X]/ \langle X^s -1 \rangle,$$ which corresponds with the first column of the matrix $A$. Notice that the previous result reduces from $s^2$ to $s$ the number of unknowns for computing certain type of orthogonal matrices.

Now we are ready to state our main results. $A_q$ will be an orthogonal $s \times s$ matrix over a finite field $\mathbb{F}_q$ with attached code distances $\{\delta_i\}_{1 \leq i \leq s}$ as defined in Section \ref{matrix}.

\begin{teo}
\label{UNO}
Let $\{E_i\}_{i=1}^s$ (respectively, $E_1 \subset E_2 \subset \cdots \subset E_s$) be a sequence (respectively,  a nested sequence) of  codes over a finite field $\mathbb{F}_q$ of one of the following types:

a) The codes $E_i =RM (r_i,m)$, $m >0$, $1 \leq i \leq s$, are Reed-Muller codes attached with a sequence of positive integers $\{r_i\}_{i=1}^s$ (respectively, $r_1 < r_2 < \cdots < r_s$) satisfying $2 r_i + 1 < m (q-1)$ for all $i$ (respectively, $2 r_s + 1 < m (q-1)$).%\\

b) Each code $E_i$, $1 \leq i \leq s$, is spanned by the vectors of $\mathbb{F}_q^n$ obtained applying $\mathrm{ev}$ to the set of monomials
      $
      \left\{ X^{\boldsymbol{\alpha}} \;|\; 0 \leq \alpha_j < q, 1 \leq j \leq m, \prod_{j=1}^m (\alpha_j + 1) < q^m -t_i \right\}
      $
such that, for all $i$, $t_i$ is a positive integer as in Subsection \ref{hyperbolic} and satisfies Theorem \ref{te: ConditionSelfOrthogonal} (respectively, $t_i$ is a sequence of positive integer as in Subsection \ref{hyperbolic} such that $q^m > t_1 > t_2 > \cdots >t_s$  and $t_s$ satisfies Theorem \ref{te: ConditionSelfOrthogonal}).%\\

c) $E_i= E_{\Delta_i}$ are affine variety codes such that  for all $i$ $ \hat{\boldsymbol{\alpha}} \not \in \Delta_i$ (respectively, $ \hat{\boldsymbol{\alpha}} \not \in \Delta_s$ and $\Delta_j \subset \Delta_{j+1}$, $1 \leq j \leq s-1$) whenever $ \boldsymbol{\alpha}\in \Delta_i$ (respectively, $ \boldsymbol{\alpha}\in \Delta_s$).\\

Consider the sequence of dual codes $\{C_i\}_{i=1}^s$(respectively, $C_1 \supset C_2 \supset \cdots \supset C_s$), where $C_i  =  E_i^\perp$. Then,

(1) The matrix-product code $\mathfrak{C}  =  [C_1,C_2, \ldots, C_s]\cdot A_q$
  is an $[\mathfrak{n},\mathfrak{k},\mathfrak{d}]$-code over $\mathbb{F}_q$ where $ \mathfrak{n} =n s$, $ \mathfrak{k}= \sum_{i=1}^s k_i$ and $\mathfrak{d} \geq  \min_{1 \leq i \leq s} \{\delta_i d_i\} \; (\mathrm{respectively}\;   \mathfrak{d} = \min_{1 \leq i \leq s} \{\delta_i d_i\}  )\;$. Moreover according the above cases, the following statements hold.\\
In case a), the dimensions $k_i$ satisfy the equality (\ref{DIM}) with $r_i$ instead of $r$ and, for all $i$, $d_i= (b_i +1) q^{a_i}$, where $r_i +1 = a_i (q-1) + b_i$ obtained by Euclidean division.\\
In case b), the dimensions $k_i$ satisfy the equality in (3) of Theorem \ref{elcinco} with $t_i$ instead of $t$ and, for all $i$, $d_i = q^m - t_i$.\\
In case c), it happens  $k_i = \mathrm{card} (\Delta_i)$ and $d_i = \dim C_i$.

(2) $\mathfrak{C}$ provides a stabilizer code with parameters $[[\mathfrak{n},\mathfrak{K},\geq \mathfrak{d}]]_q$, where $\mathfrak{K}= 2 \mathfrak{k} - \mathfrak{n}$.
\end{teo}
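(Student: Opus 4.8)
The plan is to obtain both conclusions formally from results already established, so that essentially all the content is a translation of the hypotheses in a), b), c) into the single self-orthogonality condition $C_i^{\perp}\subset C_i$ for each $i$. Granting that condition, part (1) will be a direct application of Theorem \ref{fer1}, and part (2) will follow by combining Corollary \ref{coro2} with the CSS construction of Corollary \ref{bueno}. I therefore expect the only delicate point to be the case-by-case verification of $C_i^{\perp}\subset C_i$, which (since $C_i=E_i^{\perp}$, whence $C_i^{\perp}=E_i$) is equivalent to $E_i\subset E_i^{\perp}$.

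First I would dispatch this self-orthogonality check in each family. In case a), Theorem \ref{reed}(3) identifies $C_i=RM(r_i,m)^{\perp}=RM(m(q-1)-(r_i+1),m)$, so $C_i^{\perp}=E_i=RM(r_i,m)$; since $RM(r,m)\subseteq RM(r',m)$ exactly when $r\le r'$, the inclusion $C_i^{\perp}\subset C_i$ is equivalent to $r_i\le m(q-1)-(r_i+1)$, that is, to the assumed $2r_i+1<m(q-1)$. In case b) the codes $E_i$ are the codes $\Xi(t_i,m)$ generated by the monomials of (\ref{HHyp}), so $C_i=E_i^{\perp}=\mathrm{Hyp}(t_i,m)$ and $C_i^{\perp}\subset C_i$ is precisely $(\mathrm{Hyp}(t_i,m))^{\perp}\subset\mathrm{Hyp}(t_i,m)$, which holds because $t_i$ is assumed to meet one of the three bounds of Theorem \ref{te: ConditionSelfOrthogonal}. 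In case c), Proposition \ref{dual:affine} gives $C_i=E_{\Delta_i}^{\perp}=E_{\Delta_i^{\perp}}$; as $E_{\Delta}\subseteq E_{\Delta'}$ iff $\Delta\subseteq\Delta'$, the inclusion $E_{\Delta_i}\subseteq E_{\Delta_i^{\perp}}$ amounts, using that conjugation is an involution and $\Delta_i^{\perp}=\mathfrak{H}\setminus\{\hat{\boldsymbol{\alpha}}\mid\boldsymbol{\alpha}\in\Delta_i\}$, to the requirement $\boldsymbol{\alpha}\in\Delta_i\Rightarrow\hat{\boldsymbol{\alpha}}\notin\Delta_i$, which is exactly the stated hypothesis. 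In each nested variant I would also record that duality reverses inclusions, so $E_1\subset\cdots\subset E_s$ yields $C_1\supset\cdots\supset C_s$.

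For part (1), I would feed the codes $C_i$ into Theorem \ref{fer1}. Since $A_q$ is orthogonal, hence full-rank, the matrix-product code $\mathfrak{C}=[C_1,\ldots,C_s]\cdot A_q$ has length $\mathfrak{n}=ns$, dimension $\mathfrak{k}=\sum_{i=1}^{s}\dim C_i$, and minimum distance bounded below by $\min_{1\le i\le s}\{\delta_i d_i\}$, with equality in the nested case $C_1\supset\cdots\supset C_s$ by the final assertion of Theorem \ref{fer1}. The per-case parameter formulas are then read off the structure theorems: in a), $\dim C_i$ is given by (\ref{DIM}) with $r_i$ in place of $r$, and $d_i=(b_i+1)q^{a_i}$ comes from the Euclidean division $r_i+1=a_i(q-1)+b_i$ via the minimum-distance formula of Theorem \ref{reed}(2) applied to the dual Reed--Muller code; in b), $\dim C_i$ and $d_i=q^m-t_i$ are parts (3) and (2) of Theorem \ref{elcinco}; in c), the dimension of $C_i$ follows from Proposition \ref{dual:affine}, while the minimum distance is as recorded in the statement, obtained by direct computation since no closed formula is available.

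Finally, for part (2) the self-orthogonality $C_i^{\perp}\subset C_i$ established in the second paragraph, together with the orthogonality of $A_q$, lets me invoke Corollary \ref{coro2} to conclude $\mathfrak{C}^{\perp}\subset\mathfrak{C}$. Applying Corollary \ref{bueno} to the $[\mathfrak{n},\mathfrak{k},\mathfrak{d}]$-code $\mathfrak{C}$ then produces a stabilizer code with parameters $[[\mathfrak{n},2\mathfrak{k}-\mathfrak{n},\ge\mathfrak{d}]]_q$, giving $\mathfrak{K}=2\mathfrak{k}-\mathfrak{n}$ and completing the argument. The only genuine work is the case analysis of the second paragraph; once $C_i^{\perp}\subset C_i$ is in hand for every $i$, both conclusions follow mechanically from Theorems \ref{fer1}, Corollary \ref{coro2} and Corollary \ref{bueno}.
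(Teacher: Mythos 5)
Your proposal is correct and takes essentially the same route as the paper, whose entire proof is exactly the citation chain you use: Theorem \ref{fer1} together with the structural results of Subsections \ref{reed-muller}, \ref{hyperbolic} and \ref{affine} for part (1), and Corollaries \ref{coro2} and \ref{bueno} for part (2); your second paragraph simply makes explicit the self-orthogonality verifications $C_i^\perp = E_i \subset C_i$ that the paper leaves implicit. The one detail you leave tacit is that in the nested variants the hypothesis is imposed only at the extremal index ($r_s$, $t_s$, or $\Delta_s$), but the condition for $i<s$ follows immediately by monotonicity ($r_i<r_s$ gives $2r_i+1<m(q-1)$; $t_i>t_s$ still exceeds the lower bounds of Theorem \ref{te: ConditionSelfOrthogonal}; $\Delta_i\subset\Delta_s$ inherits the condition $\boldsymbol{\alpha}\in\Delta_i\Rightarrow\hat{\boldsymbol{\alpha}}\notin\Delta_i$), so nothing is broken.
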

\begin{proof}
Theorem \ref{fer1} together with results in Subsection \ref{reed-muller} prove Statement (1) a). The same happens with Statement (1) b) and (1) c) if one uses Subsections \ref{hyperbolic} and \ref{affine}, respectively. Corollaries \ref{coro2} and \ref{bueno} prove our Statement (2).
\end{proof}

For subfield-subcodes, we get

\begin{teo}
\label{DOS}
Let $\{E_i\}_{i=1}^s$ (respectively, $E_1 \subset E_2 \subset \cdots \subset E_s$) be a family of codes (respectively, a nested sequence of  codes) over a finite field $\mathbb{F}_q$ defined as in Theorem \ref{UNO}, $q=p^r$ and consider the finite field $\mathbb{F}_{p^s}$, where $s$ divides $r$. Set $\Delta_i$, $ 1 \leq i \leq s$, the subsets of the ring $R$ whose evaluation provides $E_i$ and assume the following property: $\mathfrak{I}_\mathbf{a} \cap \Delta^\perp \neq \emptyset$ whenever $\mathfrak{I}_\mathbf{a} \subset \Delta$, where $\Delta$ and the minimal cyclotomic subsets $\mathfrak{I}_\mathbf{a}$ are as described in Section \ref{subfield}. Then, the matrix-product code $\mathfrak{C}^\sigma  =  [C_1^\sigma,C_2^\sigma, \ldots, C_s^\sigma] A_{p^s}$ is an $[\mathfrak{n},\mathfrak{k},\mathfrak{d}]$-code over $\mathbb{F}_{p^s}$, where $ \mathfrak{n} =n s$, $ \mathfrak{k}= \sum_{i=1}^s k_i$, where $
k_i = \sum_{\mathbf{a} \in \mathcal{A} | \mathfrak{I}_\mathbf{a} \cap \Delta^\perp \neq \emptyset} i_\mathbf{a}$  and $\mathfrak{d} \geq \; (\mathrm{respectively}, \; =) \; \min_{1 \leq i \leq s} \{\delta_i d_i\}$, the distances $d_i$ being as in Theorem \ref{UNO}.

Finally, $\mathfrak{C}^\sigma$ provides a stabilizer code with parameters $[[\mathfrak{n},\mathfrak{K},\geq \mathfrak{d}]]_{p^s}$, where $\mathfrak{K}= 2 \mathfrak{k} - \mathfrak{n}$.
\end{teo}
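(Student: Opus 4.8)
The plan is to assemble Theorem \ref{DOS} from three pieces already available: the subfield-subcode description of Section \ref{subfield}, the matrix-product machinery of Section \ref{matrix}, and the CSS construction of Corollary \ref{bueno}. The overall strategy mirrors the proof of Theorem \ref{UNO}, replacing each code $C_i$ by its subfield-subcode $C_i^\sigma$ over $\mathbb{F}_{p^s}$ and checking that the hypotheses needed by Corollaries \ref{coro2} and \ref{bueno} survive the passage to subfield-subcodes.

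First I would read off the length and dimension of $\mathfrak{C}^\sigma = [C_1^\sigma, C_2^\sigma, \ldots, C_s^\sigma]\cdot A_{p^s}$ directly from Theorem \ref{fer1}. Each $C_i^\sigma$ is a linear code over $\mathbb{F}_{p^s}$ of length $n$, so the matrix-product code has length $\mathfrak{n}=ns$ and dimension $\mathfrak{k}=\sum_{i=1}^s k_i$, where the value of each $k_i=\dim(C_i^\sigma)$ is precisely the formula $k_i=\sum_{\mathbf{a}\in\mathcal{A}\,|\,\mathfrak{I}_\mathbf{a}\cap\Delta^\perp\neq\emptyset} i_\mathbf{a}$ supplied by part (1) of Theorem \ref{eseldoce}. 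For the minimum distance, Theorem \ref{fer1} gives $\mathfrak{d}\geq \min_{1\leq i\leq s}\{\delta_i \, d_i\}$ in general, and equality when the $C_i^\sigma$ form a nested sequence; the distances $d_i$ are the ones recorded in Theorem \ref{UNO}. Here the nestedness of $\{C_i^\sigma\}$ is inherited: in the nested case of Theorem \ref{UNO} we have $E_1\subset\cdots\subset E_s$, hence $C_1\supset\cdots\supset C_s$ by duality, and intersecting each $C_i$ with $\mathbb{F}_{p^s}^n$ preserves inclusions, so $C_1^\sigma\supset\cdots\supset C_s^\sigma$.

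Next I would establish the self-orthogonality needed to apply the CSS construction, namely $(\mathfrak{C}^\sigma)^\perp\subset\mathfrak{C}^\sigma$. By Corollary \ref{coro2}, since $A_{p^s}$ is orthogonal, it suffices to verify $(C_i^\sigma)^\perp\subset C_i^\sigma$ for each $i$. This is exactly where the hypothesis ``$\mathfrak{I}_\mathbf{a}\cap\Delta^\perp\neq\emptyset$ whenever $\mathfrak{I}_\mathbf{a}\subset\Delta$'' enters: by part (2) of Theorem \ref{eseldoce}, this condition is equivalent to $E_\Delta^\sigma\subset C_\Delta^\sigma$, i.e.\ $(C_i^\sigma)^\perp\subset C_i^\sigma$ once one recalls from Section \ref{subfield} that $C_\Delta^\sigma=(E_\Delta^\sigma)^\perp$ via Delsarte's theorem. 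Thus each factor is self-dual-containing, Corollary \ref{coro2} yields $(\mathfrak{C}^\sigma)^\perp\subset\mathfrak{C}^\sigma$, and Corollary \ref{bueno} then produces a stabilizer code with parameters $[[\mathfrak{n},2\mathfrak{k}-\mathfrak{n},\geq\mathfrak{d}]]_{p^s}$, which is the claimed $\mathfrak{K}=2\mathfrak{k}-\mathfrak{n}$.

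The step I expect to require the most care is the bookkeeping around the self-orthogonality hypothesis. The statement phrases the condition using a single $\Delta$ and $\Delta^\perp$, whereas there are $s$ codes $E_i=E_{\Delta_i}$ each with its own defining set; I would need to make explicit that the hypothesis is meant to hold for every $\Delta_i$ (so that $(C_i^\sigma)^\perp\subset C_i^\sigma$ for all $i$, as Corollary \ref{coro2} demands), and to confirm that the minimal cyclotomic sets $\mathfrak{I}_\mathbf{a}$ are the same family for all indices since the ambient ring $R$ and the exponent $s$ are fixed. Once this indexing is pinned down, the argument is a direct concatenation of the cited results and involves no new computation; the genuinely substantive content has already been absorbed into Theorems \ref{fer1}, \ref{eseldoce} and Corollaries \ref{coro2}, \ref{bueno}.
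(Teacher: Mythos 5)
Your proposal is correct and follows exactly the paper's route: the paper's own proof is a one-line citation of Theorems \ref{eseldoce} and \ref{UNO} together with Corollaries \ref{coro2} and \ref{bueno}, and you have simply made that assembly explicit (dimension via Theorem \ref{eseldoce}(1), self-orthogonality of each $C_i^\sigma$ via Theorem \ref{eseldoce}(2) and Delsarte, then Corollaries \ref{coro2} and \ref{bueno}). Your closing observation that the hypothesis must be read as holding for each $\Delta_i$ separately correctly resolves a notational compression in the paper's statement rather than a gap in your argument.
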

\begin{proof}
It follows from Theorems \ref{eseldoce} and \ref{UNO} and Corollaries \ref{coro2} and \ref{bueno}.
\end{proof}

\begin{rem} \label{U}
{\rm
The families of codes considered in this paper allow us to construct sequences of nested codes $C_1^\perp \subset \cdots  \subset C_s^\perp  \subset C_s \subset \cdots \subset C_1$. These sequences contain either  codes as in Section \ref{families} or subfield-subcodes or even matrix-product codes coming from the previous mentioned codes. Results in Section \ref{families} and Theorems \ref{UNO} and \ref{DOS} give parameters $[n,k_i, d_i]$ for the above codes and one can get stabilizer codes by using Corollary \ref{ham}. Indeed, consider two suitable subindices $1 \leq i < j \leq s$, then $C_j^\perp \subset C_j \subset C_i$ and, therefore, there exists an $[[n, k_i + k_j - n, \geq \min \{ d_j, \lceil \frac{q+1}{q}\rceil d_i\}]]_q$ stabilizer code.
}
\end{rem}

We devote the rest of the paper to provide new stabilizer codes over different base fields by using the results above stated.

\section{Quantum stabilizer codes over $\mathbb{F}_2$}
\label{efe2}

Along this section we will assume that our field is  $\mathbb{F}_2$. Firstly, we are going to show several quantum binary codes that improve the best known parameters given in \cite{codet}. Afterwards we will show some good stabilizer obtained with matrix-product codes coming from three constituent codes.

\subsection{} We get codes improving \cite{codet} by applying Corollary \ref{ham} to suitable subfield-subcodes of certain affine variety codes and also from some subcodes and extended codes of them. Indeed, with ideas and notations as in Section \ref{subfield}, set $p=2, r=7, s=1$ and $N_1=127$. The following table shows parameters and defining sets $\Delta$ of stabilizer codes obtained with the CSS code construction  of subfield-subcodes of the mentioned affine variety codes. Notice that, from Corollary \ref{bueno}, it is straightforward to get the parameters of the originally used linear codes.
\vspace{1mm}
\begin{center}
\begin{tabular}{||c|c|c|c|c||}
  \hline \hline
  Code  & $n$ & $k$ & $d \geq$  & Defining set $\Delta$ \\
  \hline \hline
  $C_1$  & 127 & 85 & 7 &   $ \{46,92,57,114,101,75,23,110,93,59,118,$\\
     & &  &  &
   $ 109,91,55,
    38,76,25,50,100,73,19
\}$ \\
  \hline
    $C_2$ & 127 & 57 & 11 &   $\{ 46,92,57,114,101,75,23,
    110,93,59,118,109,91,55,
    38,76,25,50,$\\
     & &  &  &
   $ 100,73,19,
 42,84,41,82,37,74,21,
58,116,105,83,39,78,29
\}$ \\
 \hline
    $C_3$ & 127 & 71 & 9 &   $\{ 42,84,41,82,37,74,21,
6,12,24,48,96,65,3,
18, $\\
     & &  &  &
   $ 36,72,17,34,68,9,
30,60,120,113,99,71,15
\}$ \\
     \hline
   $C_4$ & 127 & 43 & 13 &  $\{ 42,84,41,82,37,74,21,
    54,108,89,51,102,77,$\\
     & &  &  &
   $ 27,
    6,12,24,48,96,65,3,
    58,116,105,$\\
     & &  &  &
   $ 83,39,78,29,
    18,36,72,17,34,68,9,
    30,60,120,113,99,71,15
\}$ \\
\hline
 \hline
\end{tabular}
\captionof{table}{Stabilizer affine variety codes over $\mathbb{F}_2$}
\end{center}
\vspace{2mm}

Steane's enlargement (in short, SE), Corollary \ref{ham}, applied to the codes $C_2$ and $C_1$ (respectively, $C_4$ and $C_3$) provides stabilizer codes $C_5$ and $C_6$. Next table shows parameters of these codes and some of their modifications. All these codes improve the best known records, which can be seen in \cite{codet} and have the same values $n$ and $k$ but distance one unit less.
\vspace{3mm}
\begin{center}
\begin{tabular}{||c|c|c|c||}
  \hline \hline
  Code  & $n$ & $k$ & $d \geq$  \\
  \hline \hline
  $C_5$ =  $\mathrm{SE} (C_2,C_1)$& 127 & 71 & 11   \\
  \hline
  Extended code $(C_5)$ & 128 & 71 & 11   \\
 \hline
 Subcode $(C_5, 70)$ & 127 & 70 & 11  \\
 \hline
  Subcode $(C_5, 69)$ & 127 & 69 & 11  \\
 \hline
 $C_6$ = $\mathrm{SE}(C_4,C_3)$ & 127 & 57 & 13   \\
  \hline
  Extended code$(C_6)$ & 128 & 57 & 13   \\
  \hline
   Subcode $(C_6, 56)$ & 127 & 56 & 13  \\
 \hline
  \hline
\end{tabular}
\captionof{table}{Best known stabilizer codes over $\mathbb{F}_2$}\label{ta:best}
\end{center}

\subsection{} There is no non-singular by columns orthogonal matrix of size $3\times 3$ over  $\mathbb{F}_2$, however matrix-product codes suitable  by providing quantum codes with $s=3$  can be obtained by using the following matrix over $\mathbb{F}_2$, whose  transpose inverse is also displayed.
\begin{equation}
\label{matriz2}
A=\begin{pmatrix}
1 & 0 & 1 \\
1 & 1 & 0  \\
1 & 1 & 1
\end{pmatrix}, \; \;
(A^{-1})^t=\begin{pmatrix}
1 & 1 & 0 \\
1 & 0 & 1  \\
1 & 1 & 1
\end{pmatrix}.
\end{equation}

\begin{teo}\label{te: QMPCsobreF2}
Let $C_1$ and $C_2$ be linear codes over $\mathbb{F}_2$ with parameters $[n,k_1,d_1]$ and $[n,k_2,d_2]$  respectively, and  such that  $C_1\supset C_1^{\perp}$ and $C_2\supset C_2^{\perp}$.  If $A$ is the matrix showed in (\ref{matriz2}), then the following inclusion involving matrix-product codes holds:
\[
[C_1,C_1,C_2]\cdot A\supset ([C_1,C_1,C_2]\cdot A)^{\perp}.
\]
Moreover, the previous constructed code yields a stabilizer code with parameters
\[
[[3n,2(2k_1+k_2)-3n,\ge \min\{2d_1,d_2\}]]_2.	
\]
\end{teo}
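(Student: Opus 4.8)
The plan is to establish the dual containment directly, since the matrix $A$ in (\ref{matriz2}) is \emph{not} orthogonal (one checks $(A^{-1})^t \neq A$), so Corollary \ref{coro2} cannot be invoked. First I would apply Theorem \ref{norton}, which gives
\[
\left([C_1,C_1,C_2]\cdot A\right)^{\perp} = [C_1^{\perp},C_1^{\perp},C_2^{\perp}]\cdot (A^{-1})^t .
\]
Thus the asserted inclusion $[C_1,C_1,C_2]\cdot A\supset ([C_1,C_1,C_2]\cdot A)^{\perp}$ is equivalent to $[C_1^{\perp},C_1^{\perp},C_2^{\perp}]\cdot (A^{-1})^t \subset [C_1,C_1,C_2]\cdot A$.

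The key step is to describe codewords on both sides explicitly and exhibit the required membership. Reading off the columns of $A$, a generic codeword of $[C_1,C_1,C_2]\cdot A$ has the block form $(x_1+x_2+x_3,\,x_2+x_3,\,x_1+x_3)$ with $x_1,x_2\in C_1$ and $x_3\in C_2$; reading off the columns of $(A^{-1})^t$, a generic codeword of $[C_1^{\perp},C_1^{\perp},C_2^{\perp}]\cdot (A^{-1})^t$ has the form $(y_1+y_2+y_3,\,y_1+y_3,\,y_2+y_3)$ with $y_1,y_2\in C_1^{\perp}$ and $y_3\in C_2^{\perp}$. Using the hypotheses $C_1^{\perp}\subset C_1$ and $C_2^{\perp}\subset C_2$, I would choose $x_1=y_2$, $x_2=y_1$, $x_3=y_3$; these lie in $C_1,C_1,C_2$ respectively, and one verifies immediately that they reproduce the three blocks of the given codeword. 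This establishes the inclusion. It is really the only nontrivial point, and the mild subtlety is exactly that one cannot appeal to orthogonality of $A$ and must instead produce this explicit substitution (which amounts to swapping the two $C_1^{\perp}$-constituents while fixing the $C_2^{\perp}$-constituent).

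For the parameters I would apply Theorem \ref{fer1} to $\mathfrak{C}=[C_1,C_1,C_2]\cdot A$. Its length is $3n$ and, as $A$ is full rank, its dimension is $k_1+k_1+k_2=2k_1+k_2$. For the distance I compute the quantities $\delta_i$ of $A$ by hand: the first row $(1,0,1)$ generates a code of minimum distance $\delta_1=2$; the first two rows generate $\{0,(1,0,1),(1,1,0),(0,1,1)\}$, again of minimum distance $\delta_2=2$; and the three rows span all of $\mathbb{F}_2^3$, so $\delta_3=1$. Since the constituent codes have distances $d_1,d_1,d_2$, Theorem \ref{fer1} yields $\mathfrak{d}\ge \min\{d_1\delta_1,\,d_1\delta_2,\,d_2\delta_3\}=\min\{2d_1,\,2d_1,\,d_2\}=\min\{2d_1,d_2\}$.

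Finally, having shown that $\mathfrak{C}$ is a $[3n,\,2k_1+k_2,\,\ge\min\{2d_1,d_2\}]$ code over $\mathbb{F}_2$ satisfying $\mathfrak{C}^{\perp}\subset \mathfrak{C}$, I would invoke Corollary \ref{bueno} to obtain a stabilizer code with parameters $[[3n,\,2(2k_1+k_2)-3n,\,\ge\min\{2d_1,d_2\}]]_2$, which is exactly the claim.
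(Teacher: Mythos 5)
Your proposal is correct and follows essentially the same route as the paper: both invoke Theorem \ref{norton} to write the dual as $[C_1^{\perp},C_1^{\perp},C_2^{\perp}]\cdot (A^{-1})^{t}$, identify a generic dual codeword as $(y_1+y_2+y_3,\,y_1+y_3,\,y_2+y_3)$, and absorb it into $[C_1,C_1,C_2]\cdot A$ via the hypotheses $C_i^{\perp}\subset C_i$ by swapping the two $C_1^{\perp}$-constituents --- your substitution $x_1=y_2$, $x_2=y_1$, $x_3=y_3$ is exactly the paper's ``switching the roles of $\mathbf{c}_1$ and $\mathbf{c}_1'$'' --- before concluding with Theorem \ref{fer1} and Corollary \ref{bueno}. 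Your explicit computation $\delta_1=\delta_2=2$, $\delta_3=1$ (needed because $A$ is not non-singular by columns, so the formula $\delta_i=l+1-i$ does not apply) just makes precise a step the paper leaves implicit.
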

\begin{proof}
The definition of matrix-product code and Theorem \ref{norton} show that a generic codeword of $([C_1,C_1,C_2]\cdot A)^{\perp}$ is of the form $(\mathbf{c}_1+\mathbf{c}_1'+\mathbf{c}_2,\mathbf{c}_1+\mathbf{c}_2,\mathbf{c}_1'+\mathbf{c}_2)$,
where $\mathbf{c}_1,\mathbf{c}_1'$ are generic elements in the code $C_1^{\perp}$ and $\mathbf{c}_2$ in $C_2^{\perp}$. Since $C_1\supset C_1^{\perp}$ and $C_2\supset C_2^{\perp}$, we have that
$\mathbf{c}_1\in C_1$ and $\mathbf{c}_2\in C_2$. Switching the roles of $\mathbf{c}_1$ and $\mathbf{c}_1'$, we conclude that $(\mathbf{c}_1+\mathbf{c}_1'+\mathbf{c}_2,\mathbf{c}_1+\mathbf{c}_2,\mathbf{c}_1'+\mathbf{c}_2)$ is also in  $[C_1,C_1,C_2]\cdot A$, which proves our first statement.

By Theorem \ref{fer1}, the parameters of the matrix-product code $[C_1,C_1,C_2]\cdot A$ are $[3n,2k_1+k_2,\ge \min\{2d_1,d_2\}]$. Hence applying Corollary \ref{bueno}, it is obtained a stabilizer code with parameters $[[3n,2(2k_1+k_2)-3n,\ge \min\{2d_1,d_2\}]]_2$.
\end{proof}

To finish this section, we are going to give some examples of stabilizer codes given by matrix-product codes as showed in Theorem \ref{te: QMPCsobreF2}. We use the same notation as in Subsections \ref{reed-muller} and \ref{hyperbolic}  and consider two cases: $m=4$ and $m=6$. In each case, we show a table containing parameters of stabilizer codes obtained from Reed-Muller or hyperbolic codes by using Corollary \ref{bueno}. Later, we present a second table of codes which are obtained with the matrix $A$ in (\ref{matriz2}) and following Theorem \ref{te: QMPCsobreF2}; in the case denoted by SE$(D_3,D_1)$, we use the Steane's enlargement procedure of the codes $D_3$ and $D_1$. With respect to the case $m=4$, we set: \vspace{5mm}
\begin{center}
\begin{tabular}{||c|c|c|c||}
  \hline \hline
  Code  & $n$ & $k$ & $d \geq$  \\
  \hline \hline
  $C_1$  & 16 & 16 & 1   \\
  \hline
  $C_2$  & 16 & 14 & 2   \\
 \hline
    $C_3$ & 16 & 6 & 4  \\
  \hline
  \hline
\end{tabular}
\captionof{table}{Stabilizer codes over $\mathbb{F}_2$ by using Corollary \ref{bueno}, $m=4$}
\end{center}
%\vspace{4mm}

As mentioned, it is not difficult to get the parameters of the original codes; for instance the stabilizer code $[[16,14,2]]_2$ comes from a (Reed-Muller) code $C_1$ over $\mathbb{F}_2$ with parameters $[16,15,2]$. Applying Theorem \ref{te: QMPCsobreF2}, we get stabilizer codes over $\mathbb{F}_2$ from matrix-product codes obtained with the previous codes $C_i$. Their parameters are:
\vspace{2mm}
\begin{center}
\begin{tabular}{||c|c||}
  \hline \hline
  Matrix-Product Code & Quantum Parameters  \\
  \hline \hline
  $D_1 := [C_1,C_1,C_2] \cdot A$ & $[[48,46, \geq 2]]_2$     \\
  \hline
   $D_2 : = [C_2,C_2,C_2] \cdot A$ & $[[48,42, \geq 2]]_2$     \\
  \hline
  $D_3 := [C_2,C_2,C_3] \cdot A$ & $[[48,34, \geq 4]]_2$     \\
 \hline
  $ \mathrm{SE} (D_3,D_1)$ & $[[48,40, \geq 3]]_2$     \\
  \hline
  \hline
\end{tabular}
\captionof{table}{Stabilizer codes of length $48$ over $\mathbb{F}_2$ by using Theorem \ref{te: QMPCsobreF2}}
\label{tablap14}
\end{center}
%\vspace{4mm}

Codes with these parameters are known, however this is a sample that good codes can be obtained with our techniques because our quantum code $[[48,34,4]]_2$ is as  good as the best known quantum code with that length and dimension \cite{codet}. In addition, the parameters of the remaining codes in Table \ref{tablap14} cannot be improved.

Finally, with respect to $m=6$, we  give the following two tables: \vspace{2mm}
\begin{center}
\begin{tabular}{||c|c|c|c||}
  \hline \hline
  Code  & $n$ & $k$ & $d \geq$  \\
  \hline \hline
  $C_4$  & 64 & 64 & 1   \\
  \hline
  $C_5$  & 64 & 62 & 2   \\
  \hline
    $C_6$ & 64 & 50 & 4   \\
    \hline
    $C_7$ & 64 & 20 & 8   \\
  \hline
  \hline
\end{tabular}
\captionof{table}{Stabilizer codes over $\mathbb{F}_2$ by using Corollary \ref{bueno}, $m=6$}
\label{tabla1}
\end{center}
\vspace{2mm}

\begin{center}
\begin{tabular}{||c|c||}
  \hline \hline
  Matrix-Product Code & Quantum Parameters  \\
  \hline \hline
   $[C_{4},C_{4},C_{5}] \cdot A$ & $[[192,190, \geq 2]]_2$     \\
 \hline
  $[C_{5},C_{5},C_{6}] \cdot A$ & $[[192,174, \geq 4]]_2$     \\
   \hline
  $[C_{6},C_{6},C_{7}] \cdot A$ & $[[192,120, \geq 8]]_2$     \\
  \hline
  \hline
\end{tabular}
\captionof{table}{Stabilizer codes over $\mathbb{F}_2$ by using Theorem \ref{te: QMPCsobreF2}}
\label{tablap15}
\end{center}
\vspace{3mm}
Note that the first two codes in Table \ref{tablap15} exceed the Gilbert-Varshamov bounds \cite{eck,mat,feng}, \cite[Lemma 31]{kkk}.

\section{Quantum stabilizer codes over $\mathbb{F}_3$}
\label{efe3}

As in the previous section, we desire to give parameters for some stabilizer codes over the field $\mathbb{F}_3$. We would also like use matrix-product codes, however, again in this case, there is no  nonsingular by columns orthogonal matrix over $\mathbb{F}_3$ of size either $2\times 2$ or  $3\times 3$. A way to avoid this problem consists of using matrix-product codes with only two constituent  codes. To do it, we consider the  following matrix and its  transpose inverse.
\begin{equation}
\label{matriz4}
A=\begin{pmatrix}
1 & 1 \\
2 & 1   \\
\end{pmatrix}, \;\; (A^{-1})^t=\begin{pmatrix}
2 & 2  \\
1 & 2  \\
\end{pmatrix}.
\end{equation}

\begin{teo}\label{te: QMPCsobreF3}
Let $C_1,C_2$ be two linear codes over $\mathbb{F}_3$ with parameters $[n,k_1,d_1]$ and $[n,k_2,d_2]$ respectively, and such that  $C_1 \supset C_1^{\perp}$ and $C_2 \supset C_2^{\perp}$. If $A$ is the matrix given in (\ref{matriz4}), then the following codes inclusion happens
\[
[C_1,C_2]\cdot A\supset ([C_1,C_2]\cdot A)^{\perp}.
\]
Moreover, the above given matrix-product code yields a stabilizer quantum code with parameters
\[
[[2n,2(k_1+k_2-n),\ge \min\{2d_1,d_2\}]]_{3}.
\]
\end{teo}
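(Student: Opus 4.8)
The plan is to mimic closely the proof of Theorem \ref{te: QMPCsobreF2}, replacing the $3\times 3$ matrix by the $2\times 2$ matrix $A$ in (\ref{matriz4}). First I would use Theorem \ref{norton} to describe the dual code explicitly: since $(A^{-1})^t$ is the displayed matrix, we have $([C_1,C_2]\cdot A)^{\perp} = [C_1^{\perp},C_2^{\perp}]\cdot (A^{-1})^t$. Reading off the rows of $(A^{-1})^t$, a generic codeword of the dual has the form $(2\mathbf{c}_1+2\mathbf{c}_2,\, \mathbf{c}_1+2\mathbf{c}_2)$ with $\mathbf{c}_1\in C_1^{\perp}$ and $\mathbf{c}_2\in C_2^{\perp}$ (each block being a concatenation pattern built from the column entries of $(A^{-1})^t$).

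Next I would show that this same word lies in $[C_1,C_2]\cdot A$. A generic codeword of $[C_1,C_2]\cdot A$ has the form $(\mathbf{a}_1+\mathbf{a}_2,\, \mathbf{a}_1+\mathbf{a}_2)$ read through the columns of $A$, namely $(\mathbf{a}_1+2\mathbf{a}_2\cdot(\text{something}),\ldots)$ — so the precise task is to solve, over $\mathbb{F}_3$, for $\mathbf{a}_1\in C_1$, $\mathbf{a}_2\in C_2$ producing the dual codeword above. Concretely, I would set $\mathbf{a}_1=2\mathbf{c}_1$ and $\mathbf{a}_2=2\mathbf{c}_2$ (or whatever linear substitution the arithmetic of $A$ versus $(A^{-1})^t$ dictates) and check the two block-equations hold in $\mathbb{F}_3$. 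The hypotheses $C_1^{\perp}\subset C_1$ and $C_2^{\perp}\subset C_2$ guarantee $\mathbf{c}_i\in C_i$, hence the scalar multiples $2\mathbf{c}_i$ remain in $C_i$, so the constructed preimage genuinely lies in $[C_1,C_2]\cdot A$. This establishes the inclusion $([C_1,C_2]\cdot A)^{\perp}\subset [C_1,C_2]\cdot A$.

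Finally, for the parameters I would invoke Theorem \ref{fer1}: the matrix-product code $[C_1,C_2]\cdot A$ has length $2n$, dimension $k_1+k_2$, and minimum distance at least $\min\{\delta_1 d_1,\delta_2 d_2\}$, where $\delta_1,\delta_2$ are the distances of the codes generated by the first one and first two rows of $A$. Here $\delta_2=1$ and $\delta_1=2$ (the single row $(1,1)$ of $A$ generates a code of minimum distance $2$), giving the claimed bound $\min\{2d_1,d_2\}$. Then Corollary \ref{bueno} applied to $[C_1,C_2]\cdot A$, which satisfies $([C_1,C_2]\cdot A)^{\perp}\subset [C_1,C_2]\cdot A$, yields a stabilizer code with parameters $[[2n,\,2(k_1+k_2)-2n,\,\ge\min\{2d_1,d_2\}]]_3$, matching the statement since $2(k_1+k_2)-2n=2(k_1+k_2-n)$.

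The main obstacle I anticipate is purely bookkeeping: getting the exact linear combinations right in $\mathbb{F}_3$ when translating between $A$ and $(A^{-1})^t$, and verifying that the substitution I choose really reproduces the dual codeword. The structural idea is routine once the generic-codeword descriptions are written down, but the arithmetic with the scalars $2$ (rather than $1$ as in the binary case) must be tracked carefully so that the preimage coefficients land back inside $C_1$ and $C_2$.
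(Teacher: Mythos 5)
Your plan is structurally the paper's own proof: express the dual via Theorem \ref{norton} as $[C_1^{\perp},C_2^{\perp}]\cdot (A^{-1})^t$, exhibit each generic dual word as a word of $[C_1,C_2]\cdot A$ by rescaling with the unit $2$, then get the classical parameters from Theorem \ref{fer1} (with $\delta_1=2$, $\delta_2=1$, exactly as you say) and the quantum code from Corollary \ref{bueno}. The parameter half of your argument is correct as written. However, there is a concrete error in the one computation you did write down: the generic dual codeword. By the shape of the generator matrix (\ref{lamatriz}), the $j$-th block of a word of $[D_1,D_2]\cdot B$ carries the $j$-th \emph{column} of $B$, i.e.\ it is $\sum_i b_{ij}\mathbf{d}_i$. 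For $(A^{-1})^t=\left(\begin{smallmatrix}2 & 2\\ 1 & 2\end{smallmatrix}\right)$ this gives the generic dual word $(2\mathbf{c}_1+\mathbf{c}_2,\,2\mathbf{c}_1+2\mathbf{c}_2)$ — which is what the paper uses — and not your row-read form $(2\mathbf{c}_1+2\mathbf{c}_2,\,\mathbf{c}_1+2\mathbf{c}_2)$, which instead describes the different code $[C_1^{\perp},C_2^{\perp}]\cdot A^{-1}$.

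This is not harmless bookkeeping: with your form, solving the two block equations against the generic word $(\mathbf{a}_1+2\mathbf{a}_2,\,\mathbf{a}_1+\mathbf{a}_2)$ of $[C_1,C_2]\cdot A$ forces $\mathbf{a}_2=\mathbf{c}_1$ and $\mathbf{a}_1=2\mathbf{c}_2$, i.e.\ the cross-containments $\mathbf{c}_1\in C_2$ and $\mathbf{c}_2\in C_1$, which the hypotheses $C_i^{\perp}\subset C_i$ do not supply; so the verification you postponed would fail as set up. Once the column reading is corrected, the substitution you anticipated, $\mathbf{a}_1=2\mathbf{c}_1\in C_1$ and $\mathbf{a}_2=2\mathbf{c}_2\in C_2$, works verbatim: $\mathbf{a}_1+2\mathbf{a}_2=2\mathbf{c}_1+4\mathbf{c}_2=2\mathbf{c}_1+\mathbf{c}_2$ and $\mathbf{a}_1+\mathbf{a}_2=2\mathbf{c}_1+2\mathbf{c}_2$ over $\mathbb{F}_3$, and the rest of your argument goes through and coincides with the paper's proof (which phrases the same rescaling as ``multiplication by $2$ gives an isomorphism'' and then invokes the same reasoning as in Theorem \ref{te: QMPCsobreF2}).
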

\begin{proof}
A generic codeword of $([C_1,C_2]\cdot A)^{\perp}$ is of the form
$(2\mathbf{c}_1+\mathbf{c}_2,2\mathbf{c}_1+2\mathbf{c}_2)$,
where $\mathbf{c}_1\in C_1^{\perp}$ and $\mathbf{c}_2\in C_2^{\perp}$. Taking into account that multiplication by $2$ gives an isomorphism of the field $\mathbb{F}_3$ and  $C_1\supset C_1^{\perp}$ and $C_2 \supset C_2^{\perp}$, we have that
$(2\mathbf{c}_1+\mathbf{c}_2,2\mathbf{c}_1+2\mathbf{c}_2)\in [C_1,C_2]\cdot A $ because it corresponds to the words in $[C_1,C_2]\cdot A $ given, generically, by the elements $2\mathbf{c}_1 \in C_1$ and $2\mathbf{c}_2 \in C_2$.

The same reasoning as in Theorem \ref{te: QMPCsobreF2} proves that $[C_1,C_2]\cdot A$
 is a $[2n,k_1+k_2,\ge \min\{2d_1,d_2\}]$-code over $\mathbb{F}_3$ and yields a stabilizer code with parameters $[[2n,2(k_1+k_2-n),\ge \min\{2d_1,d_2\}]]_3$.
\end{proof}

As an example, if one considers suitable Reed-Muller codes of length 9 (respectively, hyperbolic codes of length 27) and applies Theorem \ref{te: QMPCsobreF3} and Corollary \ref{ham}, an $[[18,13,3]]_3$ (respectively, $[[54,48,3]]_3$) stabilizer code is obtained. Both of them exceed the Gilbert-Varshamov bounds \cite{eck,mat,feng}, \cite[Lemma 31]{kkk}, where we consider the natural extension to $q \neq 2$ of the bounds in \cite{eck,mat}. Note that the Gilbert-Varshamov bounds in \cite{feng}, \cite[Lemma 31]{kkk} assume that $n \equiv k~( \mathrm{mod}~ 2)$. In this paper, we say that an $[[n,k,d]]$ stabilizer code, $d \geq 2$, such that $n \not \equiv k~( \mathrm{mod}~ 2)$ exceeds these Gilbert-Varshamov bounds when the parameters $[[n,k-1,d]]$ do that. Next, we provide parameters of stabilizer codes, $C_1$ and $C_2$, coming from subfield-subcodes of Reed-Muller or hyperbolic codes over $\mathbb{F}_3$. With notations as in Sections \ref{families} and \ref{subfield}, $p=3, r=2, s=1$ and $m=2$. Larger codes can be found with Theorem \ref{te: QMPCsobreF3} as can be seen in Table \ref{tablap17}. Note that the forthcoming code $D$ can be obtained with Steane's enlargement of certain matrix-product code of Reed-Muller codes. \vspace{3mm}

\begin{center}
\begin{tabular}{||c|c|c|c|c||}
  \hline \hline
  Code  & $n$ & $k$ & $d \geq$ &  Defining set $\Delta$ \\
  \hline \hline
  $C_1$  & 81 & 79 & 2 &    $ \{(0,0)\}$ \\
 \hline
    $C_2$ & 81 & 67 & 4 &  $\{  (0, 0),(0, 7),(0, 5),( 3, 0 ),( 9, 0 ),(4,0),(0,4)$\\
 \hline
\end{tabular}
\captionof{table}{Stabilizer subfield-subcodes of Reed-Muller or hyperbolic codes over $\mathbb{F}_3$}
\end{center}
%\vspace{4mm}

\begin{center}
\begin{tabular}{||c|c||}
  \hline \hline
  Matrix-Product Code & Quantum Parameters  \\
  \hline
   $[C_1,C_1] \cdot A$ & $[[162,158, \geq 2]]_3$     \\
 \hline
  $[C_1,C_2] \cdot A$ & $[[162,146, \geq 4]]_3$     \\
  \hline
  $D$ & $[[162,155, \geq 3]]_3$     \\
   \hline
  \hline
\end{tabular}
\captionof{table}{Stabilizer codes over $\mathbb{F}_3$ by using Theorem \ref{te: QMPCsobreF3}}
\label{tablap17}
\end{center}

 %%%

 We conclude this section giving parameters of several stabilizer codes over $\mathbb{F}_3$ obtained from subfield-subcodes of affine variety codes. As we have mentioned, we essentially consider Euclidean inner product and our parameters improve some of those given in \cite{lag1}, where the same inner product is used.
 With notations as in Section \ref{subfield}, setting $p=3, r=4, s=1$ and $N_1=80$ and using suitable sets $\Delta$ and Corollary \ref{ham}, we get stabilizer codes with parameters $[[80, 72, \geq 3]]_3$, $[[80,64 , \geq 4]]_3$, $[[80, 56, \geq 6]]_3$ and $[[80,48 , \geq 7]]_3$. In  similar way, with $p=3, r=6, s=1$ and $N_1=728$, we get a $[728,718, \geq 3]]_3$ stabilizer code. Considering Hermitian inner product, the parameters of the codes with length $80$ can be improved \cite[Table I]{lag3}. M. Grassl communicated the authors the existence of a code with parameters $[728,720, \geq 3]]_3$ derived from a Hamming code over $\mathbb{F}_9$.

To finish, we show quantum parameters and defining sets of a couple of codes over $\mathbb{F}_3$ of length $242$, improving \cite{lag1}, for which we do not know stabilizer codes better than them. Consider $p=3,r=5,s=1$ and $N_1=242$ and the corresponding tables for the supporting affine variety stabilizer codes and the codes improving \cite{lag1} are the following:
\vspace{2mm}
\begin{center}
\begin{tabular}{||c|c|c|c|c||}
  \hline \hline
  % after \\: \hline or \cline{col1-col2} \cline{col3-col4} ...
  Code  & $n$ & $k$ & $d \geq$  & Defining set $\Delta$ \\
  \hline \hline
  $C_1$  & 242 & 222 & 4 &   $ \{120,118,112,94,40,75,225,191,89,25
\}$ \\
  \hline
    $C_2$ &242 & 212 & 5 &   $\{ 120,118,112,94,40,75,225,191,89,25, 21,63,189,83,7 \}$ \\
 \hline
$C_3$ & 242 & 202 & 6 &  $\{120,118,112,94,40,75,225,191,89,25, $\\
     & &  &  &
   $ 21,63,189,83,7, 150,208,140,178,50\}$ \\
\hline
 \hline
\end{tabular}
\captionof{table}{Stabilizer affine variety  codes over $\mathbb{F}_3$}
\end{center}
%\vspace{2mm}

\begin{center}
%\label{tablapk}
\begin{tabular}{||c|c|c|c||}
  \hline \hline
  % after \\: \hline or \cline{col1-col2} \cline{col3-col4} ...
  Code  & $n$ & $k$ & $d \geq$  \\
  \hline \hline
   $ \mathrm{SE}(C_2,C_1)$& 242 & 217 & 5   \\
 \hline
  $ \mathrm{SE} (C_3,C_1)$ & 242 & 212 & 6   \\
  \hline
  \hline
\end{tabular}
\captionof{table}{Stabilizer codes over $\mathbb{F}_3$ improving \cite{lag1}}
\end{center}

\section{Quantum stabilizer codes over $\mathbb{F}_q$: $q\neq 2,3$}
\label{no2ni3}

In this section, we are going to show parameters for some new and good stabilizer codes over certain finite fields $\mathbb{F}_q$, with $q\neq 2,3$. To get orthogonal non-singular by columns matrices  of small size, we have developed a MAGMA function to look for matrix-product codes that produce good stabilizer codes. For a start we are going to consider matrices of size $3\times 3$  over the fields $\mathbb{F}_{4}$, $\mathbb{F}_{5}$  and $\mathbb{F}_{7}$.

Set $q=4$, there are 52 orthogonal $3\times 3$  matrices over $\mathbb{F}_{4}$, but only four of them are non-singular by columns. They are
\[
\begin{pmatrix}
1 & a^2 & a^2 \\
a & 0 & a^2  \\
a & a & 1
\end{pmatrix},
\begin{pmatrix}
1 & a & a \\
a^2 & 0 & a  \\
a^2 & a^2 & 1
\end{pmatrix},
\begin{pmatrix}
a & a & 1 \\
a & 0 & a^2  \\
1 & a^2 & a^2
\end{pmatrix},
\begin{pmatrix}
a^2 & a^2 & 1 \\
a^2 & 0 & a  \\
1 & a & a
\end{pmatrix},
\]
$a$ being a primitive element of the field $\mathbb{F}_{4}$.  For $q=5$  (respectively, $q=7$), we can say that one can find 104 (respectively, 304) orthogonal matrices over $\mathbb{F}_{5}$ (respectively, $\mathbb{F}_7$), $64$ (respectively, 96) of them are non-singular by columns. As an example of a matrix over $\mathbb{F}_{5}$ (respectively, $\mathbb{F}_{7}$) of the last type, we have
\[
\begin{pmatrix}
1 & 1 & 2 \\
2 & 1 & 1  \\
1 & 2 & 1
\end{pmatrix}, \;\;\; \left(\mathrm{respectively,} \begin{pmatrix}
2 & 3 & 3 \\
1 & 3 & 1  \\
3 & 3 & 2
\end{pmatrix} \right).
\]

In addition, it is not difficult to check that the only non-singular by columns orthogonal  matrices of size $2\times 2$ over  $\mathbb{F}_{4}$ are:
\[
\begin{pmatrix}
a^2 & a  \\
a & a^2  \\
\end{pmatrix},
\begin{pmatrix}
a & a^2  \\
a^2 & a  \\
\end{pmatrix}.
\]

The following table contains parameters of stabilizer codes obtained with matrix-product codes of Reed-Muller or hyperbolic ones with $m=2$. These codes are over the fields $\mathbb{F}_{4}$, $\mathbb{F}_{5}$ or $\mathbb{F}_{7}$, and we have used non-singular by columns matrices, as above given, and Theorem \ref{UNO} and Corollary \ref{ham}.\vspace{1mm}

\begin{center}
\begin{tabular}{||c|c|c|c||c|c|c|c||}
  \hline \hline
 $n$ & $k$ & $ d \geq $ & Field &  $n$ & $k$ & $d \geq$ & Field \\
  \hline \hline
 48 & 46 & 2  & $\mathbb{F}_4$ &  48 & 42 & 3 & $\mathbb{F}_4$\\
  \hline
 75& 73 & 2  & $\mathbb{F}_5$ &  75 & 70 & 3 & $\mathbb{F}_5$\\
\hline
 75 & 64 & 4  & $\mathbb{F}_5$ &  147 & 145 & 2 & $\mathbb{F}_7$\\
\hline
 147& 142 & 3  & $\mathbb{F}_7$ &    147 & 136 & 4 & $\mathbb{F}_7$\\
  \hline
  \hline
\end{tabular}
\captionof{table}{Quantum codes derived from matrix-product codes}
\label{era7}
\end{center}
%\vspace{1mm}

Every code in Table \ref{era7} marked with distance larger than or equal to 2 or 3 exceeds the Gilbert-Varshamov bounds \cite{eck,mat,feng}, \cite[Lemma 31]{kkk} in the sense above explained. Our codes $[[75, 70, \geq 3]]_5$ and $[[75, 64, \geq 4]]_5$ have better relative parameters than some showed in \cite{lag2} whose parameters are $[[71, 61, \geq 3]]_5$ and $[[71, 51, \geq 4]]_5$.

We look for good new stabilizer  codes over the mentioned fields obtained from self-orthogonal subfield-subcodes of affine variety codes.  Table \ref{aa} shows  quantum parameters we have obtained from certain affine variety codes $C_i$. The defining subsets $\Delta_i$ are showed in Tables \ref{bb} and \ref{bbb}. Table \ref{cc} contains parameters obtained as described in Theorem \ref{UNO} of stabilizer codes defined by using matrix-product codes with the self-orthogonal constituent codes in Table \ref{aa} and matrices as in the beginning of this section.

\begin{center}
\begin{tabular}{||c|c|c|c|c||c|c|c|c|c||}
  \hline \hline
  Code / Subset & $n$ & $k$ & $d \geq $ & Field & Code / Subset & $n$ & $k$ & $d \geq $ & Field  \\
  \hline \hline
  $C_1$ / $\Delta_{1}$ & 63 & 49 & 4 & $\mathbb{F}_4$ &   $C_2$ / $\Delta_{2}$ & 63 & 43 & 6 & $\mathbb{F}_4$ \\
  \hline
    $C_3$ / $\Delta_{3}$ & 63 & 37 & 7 & $\mathbb{F}_4$ &   $C_4$ / $\Delta_{4}$ & 63 & 31 & 8 & $\mathbb{F}_4$  \\
  \hline
   $C_5$ /  $\Delta_{5}$ & 63 & 25 & 9 & $\mathbb{F}_4$ &  $C_6$ / $\Delta_{6}$ & 496 & 496 & 1 & $\mathbb{F}_5$  \\
  \hline
   $C_7$ /  $\Delta_{7}$ & 496 & 494 & 2  & $\mathbb{F}_5$ &   $C_8$ / $\Delta_{8}$ & 496 & 486 & 3 & $\mathbb{F}_5$ \\
  \hline
    $C_9$ / $\Delta_{9}$ & 496 & 480 & 4 & $\mathbb{F}_5$ &   $C_{10}$ / $\Delta_{10}$ & 96 & 96 & 1& $\mathbb{F}_5$ \\
  \hline
    $C_{11}$ / $\Delta_{11}$ & 96  & 94 & 2  & $\mathbb{F}_5$ &  $C_{12}$ /$\Delta_{12}$ &  96 & 88 & 3 & $\mathbb{F}_5$
      \\
      \hline
    $C_{13}$ / $\Delta_{13}$ & 96  & 84 & 4  & $\mathbb{F}_5$ & $C_{14}$ /$\Delta_{14}$ &  124 & 116 & 3
     & $\mathbb{F}_5$ \\
\hline
$C_{15}$ / $\Delta_{15}$ & 124  & 110 & 4  & $\mathbb{F}_5$ & $C_{16}$ /$\Delta_{16}$ &  124 & 104 & 5 & $\mathbb{F}_5$ \\
     \hline
 $C_{17}$ / $\Delta_{17}$ & 624  & 614 & 3  & $\mathbb{F}_5$ & $C_{18}$ /$\Delta_{18}$ &  624 & 610 & 4 & $\mathbb{F}_5$ \\
    \hline
 $C_{19}$ /$\Delta_{19}$ &  342 & 328 & 4 & $\mathbb{F}_7$ & $C_{20}$ / $\Delta_{20}$ & 342  & 322 & 5  & $\mathbb{F}_7$  \\
 \hline
 $C_{21}$ / $\Delta_{21}$ & 144  & 144 & 1  & $\mathbb{F}_7$ & $C_{22}$ /$\Delta_{18}$ &  144 & 142 & 2 & $\mathbb{F}_7$ \\
 \hline
 $C_{23}$ / $\Delta_{23}$ & 144  & 136 & 3  & $\mathbb{F}_7$ & $C_{24}$ /$\Delta_{24}$ &  144 & 132 & 4 & $\mathbb{F}_7$ \\
  \hline
  \hline
\end{tabular}
\captionof{table}{Quantum codes using affine variety codes}
\label{aa}
\end{center}
\vspace{2mm}

Finally, we use Corollary \ref{ham} for getting better stabilizer codes. The reader can find their parameters in Table \ref{dd}. Comparing with \cite[Table III]{lag3}, we obtain a new code $[[63,31, \geq 9]]_4$ and the parameters of our remaining  codes of length 63 coincide with those in \cite[Table III]{lag3}. Parameters of our codes on $\mathbb{F}_5$ of length 124 also coincide with \cite{lag3} but our code $[[624,612,4]]_5$ is better than $[[624,610,4]]_5$ in \cite{lag3}. Generally speaking we get the same parameters as in \cite[Table III]{lag3} and, occasionally, improve them. Futhermore, we also show good codes with lengths that cannot be reached in \cite{lag3} and they either exceed the Gilbert-Varshamov bounds or improve \cite{edel} or satisfy both conditions.
\vspace{2mm}

\begin{center}
\begin{tabular}{||c|c|c|c|c|c|c||}
  \hline \hline
 Subset & $p$ & $r$ & $s$ & $N_1$  & $N_2$ & $N_3$ \\
  \hline \hline
  $
  \begin{array}{r}
  \Delta_1= \{ 42,44, 50, 11, 46, 58, 43
 \}\\
   \end{array}
   $
    & 2 & 6 & 2 & 63 & - &-\\
  \hline
  $
  \begin{array}{r}
   \Delta_2= \{ 42,44, 50, 11, 46, 58, 43, 41, 38, 26 \}\\
   \end{array}
   $
    & 2 & 6 & 2 & 63 & - &-\\
   \hline
  $
  \begin{array}{r}
   \Delta_3= \{42,44, 50, 11, 46, 58, 43, 41, 38, 26, 57, 39, 30  \}\\
   \end{array}
   $
     & 2 & 6 & 2 & 63 & -&-\\
  \hline
  $
  \begin{array}{r}
   \Delta_4= \{42,44, 50, 11, 46, 58, 43, 41, 38, 26,\\ 57, 39, 30, 60, 51, 15  \}
   \end{array}
   $
     & 2 & 6 & 2 & 63 & - &-\\
    \hline
    $
  \begin{array}{r}
   \Delta_5= \{ 42,44, 50, 11, 46, 58, 43, 41, 38, 26,\\ 57, 39, 30, 60, 51, 15, 45, 54, 27 \}\\
   \end{array}
   $
     & 2 & 6 & 2 & 63 & -&-\\
  \hline
  \hline
   \end{tabular}
   \captionof{table}{Defining sets of affine variety codes}
\label{bb}
\end{center}
\vspace{5mm}

\begin{center}
\begin{tabular}{||c|c|c|c|c|c|c||}
  \hline \hline
 Subset & $p$ & $r$ & $s$ & $N_1$  & $N_2$ & $N_3$  \\
  \hline \hline
    $
  \begin{array}{r}
   \Delta_6=  \emptyset \\
   \end{array}
   $
    & 5 & 3& 1 & 31 & 4 & 4\\\hline
    $
  \begin{array}{r}
   \Delta_7= \{ (0,1,3) \}\\
   \end{array}
   $
    & 5 & 3& 1 & 31 & 4 &4\\\hline
    $
  \begin{array}{r}
   \Delta_8= \{(0, 1, 3 ), (0, 3, 2), (9, 2, 3), (14, 2, 3), (8, 2, 3)\}\\
   \end{array}
   $
    & 5 & 3& 1 & 31 & 4& 4 \\\hline
    $
  \begin{array}{r}
   \Delta_9= \{ (0, 1, 3 ), (0, 3, 2), (9, 2, 3), (14, 2, 3), (8, 2, 3), \\ (23, 2, 0),
   (22, 2, 0 ), (17, 2, 0)\}\\
   \end{array}
   $
    & 5 & 3& 1 & 31 & 4 &4\\\hline
    $
  \begin{array}{r}
   \Delta_{10}= \emptyset \\
   \end{array}
   $
    & 5 & 6& 1 & 24 & 4 &-\\
   \hline
    $
  \begin{array}{r}
   \Delta_{11}= \{ (6,2)\} \\
   \end{array}
   $
    & 5 & 6& 1 & 24 & 4 &-\\
   \hline
    $
  \begin{array}{r}
   \Delta_{12}= \{ (6,2), (12, 1), (5, 2), (1, 2)\} \\
   \end{array}
   $
    & 5 & 6& 1 & 24 & 4 &-\\
    \hline
    $
  \begin{array}{r}
   \Delta_{13}= \{ (6,2), (12, 1), (5, 2), (1, 2), (17, 3), (13, 3)\}\\
   \end{array}
   $
    & 5 & 6& 1 & 24 & 4 &-\\
    \hline
  $
  \begin{array}{r}
   \Delta_{14}= \{ (29, 0 ), ( 21, 0 ), ( 12, 0 ), ( 0, 3) \}\\
   \end{array}
   $
    & 5 & 3 & 1 & 31 & 4&-\\
   \hline
  $
  \begin{array}{r}
   \Delta_{15}= \{ (24, 0 ), (27, 0 ), (11, 0),
(29, 0 ), ( 21, 0 ), ( 12, 0 ), ( 0, 3)\}\\
   \end{array}
   $
     & 5 & 3 & 1 & 31 & 4 &-\\
  \hline
  $
  \begin{array}{r}
   \Delta_{16}= \{ (10, 1 ), (19, 1 ), (2, 1),  (24, 0 ),\\ (27, 0 ), (11, 0),
(29, 0 ), ( 21, 0 ), ( 12, 0 ), ( 0, 3)\}
   \end{array}
   $
     & 5 & 4 & 1 & 624 & - &-\\
    \hline
    $
  \begin{array}{r}
   \Delta_{17}= \{ 156, 295, 227, 511, 59 \}\\
   \end{array}
   $
   & 5 & 4 & 1 & 624 &- &- \\
 \hline
    $
  \begin{array}{r}
   \Delta_{18}= \{ 156, 295, 227, 511, 59, 130, 26 \}\\
   \end{array}
   $
   & 5 & 4 & 1 & 624 &- &- \\
   \hline
    $
  \begin{array}{r}
   \Delta_{19}= \{ 57, 176,206,74, 331,265,145 \}\\
   \end{array}
   $
    & 7 & 3& 1 & 342 & -&- \\\hline
    $
  \begin{array}{r}
   \Delta_{20}= \{57, 176,206,74, 331,265,145, 252,54,36\}\\
   \end{array}
   $
    & 7 & 3& 1 & 342 & -&- \\
   \hline
    $
  \begin{array}{r}
   \Delta_{21}= \emptyset \\
   \end{array}
   $
    & 7 & 2& 1 & 48 & 3&- \\\hline
    $
  \begin{array}{r}
   \Delta_{22}= \{(40,0)\}\\
   \end{array}
   $
    & 7 & 2& 1 & 48 & 3&- \\\hline
    $
  \begin{array}{r}
   \Delta_{23}= \{(40,0), (35, 0), (5, 0), (16, 2)\}\\
   \end{array}
   $
    & 7 & 2& 1 & 48 & 3 &- \\\hline
    $
  \begin{array}{r}
   \Delta_{24}= \{(40,0), (35, 0), (5, 0), (16, 2), (24, 1), (32, 2)\}\\
   \end{array}
   $
    & 7 & 2& 1 & 48 & 3&- \\
   \hline
  \hline
   \end{tabular}
   \captionof{table}{Defining sets of affine variety codes, continued}
\label{bbb}
\end{center}
%\vspace{2mm}

\begin{center}
\begin{tabular}{||c|c||c|c||}
  \hline \hline
  MP Code & Parameters & MP Code & Parameters\\
  \hline \hline
   $ D_1 : = [C_6,C_7] \cdot A$ & $[[992,990, \geq 2]]_5$  & $D_2 : = [C_7,C_8] \cdot A$ & $[[992,980, \geq 3]]_5$  \\
  \hline
  $D_3 : = [C_7,C_9] \cdot A$ & $[[992,974,\geq 4]]_5$  & $D_4 : = [C_{10},C_{10},C_{10}] \cdot A$ & $[[288,286, \geq 2]]_5$   \\
  \hline
  $D_5 : =[C_{10},C_{11},C_{12}] \cdot A$ & $[[288,278, \geq 3]]_5$  & $D_6 : =[C_{11},C_{11},C_{13}] \cdot A$ & $[[288,272,\geq 4]]_5$  \\
  \hline
  $D_7 : = [C_{21},C_{22}]  \cdot A$ & $[[288,286,\geq 2]]_7$   & $D_8 : = [C_{22},C_{23}] \cdot A$ & $[[288,278,\geq 3]]_7$  \\
  \hline
   $D_9 := [C_{21},C_{21}, C_{22}] \cdot A$ & $[[432,430,\geq 2]]_7$  & $D_{10} : = [C_{21},C_{22},C_{23}] \cdot A$ & $[[432,422, \geq 3]]_7$   \\
  \hline
   $D_{11}:=[C_{22},C_{22},C_{23}] \cdot A$ & $[[432,416, \geq 4]]_7$  &  &  \\
  \hline
  \hline
\end{tabular}
\captionof{table}{Stabilizer codes coming from matrix-product codes}
\label{cc}
\end{center}
%\vspace{2mm}

\begin{center}
\begin{tabular}{||c|c||c|c||}
  \hline \hline
  Corollary \ref{ham} & Paramameters & Corollary \ref{ham} & Paramarameters\\
  \hline \hline
 $\mathrm{SE }(C_2,C_1) $ & $[[63,46,\geq 5]]_4$  &  $\mathrm{SE }(C_4,C_2)$ & $[[63,37,\geq 8]]_4$  \\
  \hline
  $\mathrm{SE } (C_5,C_3)$ & $[[63,31,\geq 9]]_4$  &  $\mathrm{SE }(C_8,C_7)$ & $[[496,490,\geq 3]]_5$   \\
  \hline
$\mathrm{SE }(C_9,C_8)$ & $[[496,483,\geq 4]]_5$  & $\mathrm{SE } (D_2,D_1)$ & $[[992,985,\geq 3]]_5$  \\
  \hline
  $\mathrm{SE } (D_3,D_2)$ & $[[992,977,\geq 4]]_5$  &    $\mathrm{SE } (C_{12},C_{11})$ & $[[96,86, \geq 4]]_5$ \\
  \hline
  $\mathrm{SE }(D_{5},D_{4})$ & $[[288,283,\geq 3]]_5$   &  $ \mathrm{SE }(D_{6},D_{5})$ & $[[288,275, \geq 4]]_5$ \\
  \hline
  $\mathrm{SE } (C_{15},C_{14})$ & $[[124,113,\geq 4]]_5$ &  $\mathrm{SE }(C_{16},C_{15})$ & $[[124,107,\geq 5]]_5$  \\
  \hline
  $\mathrm{SE } (C_{18},C_{17})$ & $[[624,612, \geq 4]]_5$ &   $\mathrm{SE } (C_{20},C_{19})$ & $[[342,325,\geq 5]]_7$   \\
  \hline
  $\mathrm{SE } (C_{22},C_{23})$ & $[[144,139,\geq 3]]_7$  & $\mathrm{SE } (C_{23},C_{24})$ & $[[144,134, \geq 4]]_7$  \\
  \hline
  $\mathrm{SE } (D_{8},D_{7})$ & $[[288,282, \geq 3]]_7$  & $\mathrm{SE } (D_{10},D_{9})$ & $[[432,426,\geq 3]]_7$  \\
  \hline
  $\mathrm{SE } (D_{11},D_{10})$ & $[[432,419,\geq 4]]_7$  &  &\\
  \hline
  \hline
\end{tabular}
\captionof{table}{Stabilizer codes coming from matrix-product codes and Corollary \ref{ham}}
\label{dd}
\end{center}

\section{Conclusion}

We present new quantum stabilizer codes, our codes are obtained from algebraic linear codes using the CSS code construction and Steane's enlargement. We improve some binary codes of lengths $127$ and $128$ given in \cite{codet} and provide non-binary codes with parameters better than or equal to those in \cite[Table III]{lag3} and others whose lengths cannot be attained with the procedures in \cite{lag3,lag1}. In a future paper, we expect to obtain good codes using the same construction with respect to the Hermitian inner product.

\section*{Acknowledgements}

The authors thank Olav Geil and Markus Grassl for helpful comments.

% use section* for acknowledgement
%\ifCLASSOPTIONcompsoc
  % The Computer Society usually uses the plural form
%  \section*{Acknowledgments}
%\else
  % regular IEEE prefers the singular form
%  \section*{Acknowledgment}
%\fi
%The authors thank Olav Geil and Markus Grassl for helpful comments.
%The authors are supported by the Spanish Ministry of Economy MTM2012-36917-C03-03, the University Jaume I: PB1-1B2012-04, the Danish National Research Foundation and the National Science Foundation of China (Grant No. 11061130539) for the Danish-Chinese Center for Applications of Algebraic Geometry in Coding Theory and Cryptography.

%\ifCLASSOPTIONcaptionsoff
%  \newpage
%\fi

\end{document}